\theoremstyle{plain}
\newtheorem{theorem}{Theorem}[section]
\newtheorem{lemma}[theorem]{Lemma}
\theoremstyle{definition}
\newtheorem{definition}[theorem]{Definition}
\newtheorem{conjecture}[theorem]{Conjecture}
\theoremstyle{remark}
\newtheorem{remark}[theorem]{Remark}
\newcolumntype{C}[1]{>{\centering\let\newline\\\arraybackslash\hspace{0pt}}m{#1}}
\begin{document}

\title{New Properties of Intrinsic Information and Their Relation to Bound Secrecy}

\author{Andrey Boris Khesin}
\affiliation{Department of Mathematics\char`,{} Massachusetts Institute of Technology\char`,{} \\
Cambridge\char`,{} Massachusetts\char`,{} 02139\char`,{} USA}
\author{Andrew Tung}
\affiliation{Menlo School\char`,{} \\
Atherton\char`,{} California\char`,{} 94027\char`,{} USA}
\author{Karthik Vedula}
\affiliation{James S. Rickards High\ School \\
Tallahassee\char`,{} Florida\char`,{} 32301\char`,{} USA}
\date{August 16, 2023}

\begin{abstract}
\noindent \textbf{Abstract}: 
The secret-key rate measures the rate at which Alice and Bob can extract secret bits from sampling a joint probability distribution, unknown to an eavesdropper Eve.
The secret-key rate has been bounded above by the intrinsic information and reduced intrinsic information.
However, we prove that the reduced intrinsic information is 0 if and only if the intrinsic information is 0.
This result implies that at least one of the following two conjectures is false: bound secrecy exists, or the reduced intrinsic information equals the secret-key rate.
We give an explicit construction of an information-erasing binarization for a candidate for bound secrecy. 
We then introduce some approaches for proving the existence of bound secrecy, such as reducing the channel space, linearly transforming Bob's map, and perturbing a channel for Eve.
\newline

\noindent \textbf{Keywords}: information theory, bound secrecy, intrinsic information, reduced intrinsic information, secret-key rate, binarization
\end{abstract}

\maketitle

\section{Introduction}
\label{intro}
A common problem in classical information theory is achieving secure communication over a public channel. Most modern-day cryptographic protocols rely on computational security, a type of security based on the computational difficulty of solving a certain problem. For example, the RSA protocol, widely used today, is based on the problem of factoring large integers \cite{mikeike, rsaCreation}. Unfortunately, the security of these types of protocols is always conditional because it relies on the fact that certain problems are computationally difficult, and that the adversary has limited computational power \cite{Maurer93}. Protocols based on information theory avoid this problem because the secrecy that they obtain is impossible for the eavesdropper to pierce, simply due to the laws of probability \cite{GiReWo02, shannon2}.

To achieve information-theoretic secure communication, most protocols begin with a procedure by which the two parties, call them Alice and Bob, agree on a secret key unknown to an eavesdropper Eve. Once this secret key is established, Alice and Bob can then encode an arbitrary message with the key completely securely. For example, suppose the secret key is composed of a string of bits. Then the message, in the form of another string of bits, can be perfectly secretly encoded by a one-time pad, which in this case can be performed by bitwise XOR \cite{xorCipher}. (Note that although it is perfectly secure, using a secret key as a one-time pad is not very efficient, and one often uses a cryptographic key expansion in cases where the secret key is expensive to generate \cite{oneTimePad}.)

Unfortunately for Alice and Bob, agreeing on an unconditionally secret key is impossible without a source of secrecy to start with \cite{maurer, shannon}. An example of such secrecy is if Alice and Bob could both observe the same random number generator, whose output is not available to an eavesdropper Eve. In this case, the amount of secrecy Alice and Bob share is simply the entropy of the random number generator, but in more complicated situations (e.g. if the output of the generator is partially known to Eve) secrecy is not as easy to quantify. Quantifying how much secrecy Alice and Bob share in a given situation has been attempted by introducing a number of quantities, such as the \textit{intrinsic information} and the \textit{reduced intrinsic information} \cite{GiReWo02, RenWol03}. A number of properties of these quantities have been discovered \cite{RenWol03, ReSkWo03}, suggesting that they are connected with the original problem of determining whether or not Alice and Bob can agree on a secret key (and if so, how long the key can be). For example, it has been proven that the intrinsic information is an upper bound on Alice and Bob's secret-key rate \cite{maurerwolf}.

However, some surprising results have shown that there is a gap between these information-theoretic quantities and Alice and Bob's ability to generate a secret key \cite{RenWol03}. A number of conjectures of this nature are currently unresolved, including the long-standing conjecture of the existence of bound secrecy. This conjecture has its origins in the analogous quantum phenomenon of bound entanglement, discovered in the late 1990s \cite{Horodecki_1998, GiReWo02, RenWol03, grw00, KhatriLutkenhaus}; the existence of bound secrecy was conjectured in the early 2000s. Bound secrecy refers to secrecy (i.e. positive intrinsic information) which cannot be extracted (i.e. the secret-key rate is $0$). If bound secrecy exists, it would suggest that classical information theory has surprising connections to quantum information theory, which was, in general, thought to be of a different nature. 

In this paper, we make an important step toward proving the existence of bound secrecy by showing that, in the crucial case where either intrinsic information or reduced intrinsic information is $0$, there is no gap between the two quantities. This is significant because the original purpose of introducing the reduced intrinsic information was to provide a stronger upper bound on the secret-key rate, and one of the prevailing approaches for constructing an example that has bound secrecy was showing the example has a positive intrinsic information but a reduced intrinsic information of 0 (implying that no secrecy can be extracted). This paper shows that this approach cannot work.

On the other hand, we suggest an alternative approach for establishing the existence of bound secrecy, first mentioned in \cite{GiReWo02}, based on the idea of binarizations. Binarizations are ways of processing a random variable stochastically such that the new random variable has two outputs. We show that the existence of bound secrecy can be reduced to a simple statement about binarizations and probability that must hold for $N$ copies of the distribution. We provide a proof of this statement for the case $N=1$ for a distribution introduced in \cite{GiReWo02}, and we suggest approaches to generalize the proof for larger values of $N$.

Additionally, we focus on a second family of distributions introduced in \cite{RenWol03} which are conjectured to be bound secret. Although the approach for the previous distribution does not completely carry over, we illustrate some possible approaches which promise to extend the methods used in the $N=1$ case of the previous distribution.

The outline of this paper is as follows. In Section \ref{background}, we formally define the secret-key rate, the intrinsic information, and the reduced intrinsic information, which will be important in the rest of the paper. We also give context for our result by summarizing the properties of these quantities which have been established previously. Additionally, we provide the formal statements of a number of important conjectures, such as the problem of bound secrecy, which are addressed in this paper. In Section \ref{mainresult}, we state and prove our results, which require a number of intermediate lemmas. In Section \ref{implications}, we discuss how our results relate to prior work, showing that given the existence of bound secrecy (which is widely believed to be true), another long-standing conjecture is false. In Section \ref{binarizations}, we discuss another approach to establish the existence of bound secrecy using binarizations of Alice's and Bob's random variables. In Section \ref{iib}, we improve on previous results by giving an explicit construction of a binarization which erases intrinsic information, which appears easier to generalize than previous non-constructive solutions. Finally, in Section \ref{family}, we provide multiple approaches and simplifications to prove the bound secrecy of another family of distributions, including reduction to Z-shaped channels, row-column-type transformations for weighted average target values, and isolated perturbations to show independence for a family of binarizations.

\section{Background}
\label{background}

The setup of the bound secrecy problem is as follows. Let $P_{XYZ}$ be a joint probability distribution of three countable (but possibly infinite) random variables $X$, $Y$, and $Z$, with Alice receiving $X$, Bob $Y$, and Eve $Z$. Throughout this paper we assume that any probability distribution is countable and has finite entropy. Entropy is denoted by $H$ and is assumed to be Shannon entropy; as such, all logs are assumed to be base $2$.

The \textit{secret-key rate} $S(X:Y || Z)$ is, informally, the rate at which Alice and Bob can extract secret bits from many copies of $P_{XYZ}$. The notation suggests the interpretation that the secret-key rate is the amount of information between $X$ and $Y$ given the information in $Z$. We are interested in the secret-key rate because if it is non-zero, Alice and Bob can extract their secret bits and thereby communicate securely. A formal definition of the secret-key rate, first introduced in \cite{maurer}, is as follows.

\begin{definition}
Suppose Alice and Bob are given $N$ independent realizations of a countable joint probability distribution $P_{XYZ}$. Call a protocol \textit{$\epsilon$-safe} if, at the end of the protocol, Alice and Bob can compute secret, correlated random variables $S_A$ and $S_B$ such that there exists another random variable $S$ so that 
\[P[S_A=S_B=S]>1-\epsilon \text{ and } I(S: CZ^N)<\epsilon.\]
Here, $C$ stands for any communications that took place during the protocol.
\end{definition}

The first condition ensures that Alice and Bob's variables must agree with probability very close to $1$, so that they share some information. The second condition ensures that this information is not accessible to Eve. This is defined formally using the \textit{mutual information} $I(X:Y) := H(X) + H(Y) - H(X,Y)$, a measure of the amount of information two random variables share. The condition requires that the mutual information between the secret variable $S$ and the pieces of data Eve has, namely $Z^N$ and the communications $C$, must be low.

Using the definition of an $\epsilon$-safe protocol, we define the secret-key rate asymptotically.

\begin{definition}
The \textit{secret-key rate} $S(X: Y || Z)$ is the largest number $R$ such that for all $\epsilon>0$, there exists an $N$ such that for all $n>N$, there exists an $\epsilon$-safe protocol using $n$ copies of $P_{XYZ}$ and producing the random variable $S$ with $\frac{H(S)}{N} \geq R$.
\end{definition}

Although the secret-key rate is the quantity we are interested in, as it captures the true number of bits Alice and Bob can extract, it has been hard to deal with because it allows any arbitrarily long communication string $C$. Ideally, one would express the secret-key rate $S(X: Y || Z)$ as a simple function of the distribution $P_{XYZ}$, but this problem is still open \cite{RenWol03}. Instead a number of upper bounds have been found. One of the first upper bounds on the secret-key rate was the conditional mutual information $I(X:Y|Z)$ \cite{blahut,maurerwolf}, defined as follows. 

\begin{definition}
Given a probability distribution $P_{XYZ}$, the \textit{conditional mutual information} $I(X:Y|Z)$ is defined as $H(X|Z) + H(Y|Z) - H(XY|Z)$, where each term is a conditional entropy conditioned on $Z$.
\end{definition}

One strategy for Eve to extract information about $X$ and $Y$ is to pass her variable $Z$ through a channel $P_{\overline{Z}|Z}$ \cite{ReSkWo03}, which in this case takes the form of a stochastic matrix acting on the vector of probabilities for $Z$. So we define the intrinsic conditional mutual information, first introduced in \cite{maurerwolf}.

\begin{definition}
Given a probability distribution $P_{XYZ}$, the \textit{intrinsic conditional mutual information} ~~~~ $I(X:Y \downarrow Z)$, sometimes called the \textit{intrinsic information}, is defined as
\[ I(X:Y \downarrow Z) := \inf_{P_{\overline{Z}|Z}} I(X:Y|\overline{Z}).\]
\end{definition}

\begin{theorem}[\cite{maurerwolf, RenWol03}]
Given a distribution $P_{XYZ}$, we have $S(X:Y||Z) \leq I(X:Y \downarrow Z)$. However, there exist distributions with $S(X:Y||Z) \neq I(X:Y \downarrow Z)$.
\end{theorem}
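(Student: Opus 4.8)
For the inequality I would argue as follows. Fix any channel $P_{\overline{Z}|Z}$ and let $\overline{Z}$ be its output. Since Alice and Bob's protocol never references Eve's variable, the very protocol that is $\epsilon$-safe for $P_{XYZ}$ is, verbatim, a protocol for $P_{XY\overline{Z}}$: the correctness condition $P[S_A=S_B=S]>1-\epsilon$ is unaffected, and for the secrecy condition one notes that, generating $\overline{Z}^{n}$ by applying $P_{\overline{Z}|Z}$ to each of the $n$ coordinates with randomness independent of the protocol, the Markov chain $(S,C)\rightarrow Z^{n}\rightarrow\overline{Z}^{n}$ holds, so the data-processing inequality gives $I(S:C\overline{Z}^{n})\le I(S:CZ^{n})<\epsilon$. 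Hence $S(X:Y||Z)\le S(X:Y||\overline{Z})$, and combining this with the already-cited bound $S(X:Y||\overline{Z})\le I(X:Y|\overline{Z})$ and taking the infimum over all channels gives $S(X:Y||Z)\le\inf_{P_{\overline{Z}|Z}}I(X:Y|\overline{Z})=I(X:Y\downarrow Z)$.

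For the strict separation I would exhibit the distribution of Renner and Wolf \cite{RenWol03} and verify two things about it. First, $I(X:Y\downarrow Z)>0$: since the infimum ranges over an a priori unbounded family of channels, I would first invoke the standard reduction that $\overline{Z}$ may be taken over a bounded alphabet, making the feasible set compact and $P_{\overline{Z}|Z}\mapsto I(X:Y|\overline{Z})$ continuous, so that the infimum is attained; it then suffices to rule out any channel attaining the value $0$. But $I(X:Y|\overline{Z})=0$ forces $X\rightarrow\overline{Z}\rightarrow Y$ to be a Markov chain for an $\overline{Z}$ computed from $Z$, and a direct inspection of the chosen $P_{XYZ}$ shows this is impossible. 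Second, $S(X:Y||Z)$ is strictly smaller than this positive number; here $I(X:Y\downarrow Z)$ is too weak, so I would bound $S(X:Y||Z)$ from above by the \emph{reduced intrinsic information} $I(X:Y\downarrow\downarrow Z)$ of \cite{RenWol03} --- roughly, $\inf_{U}\bigl(I(X:Y\downarrow ZU)+H(U|Z)\bigr)$ over auxiliary variables $U$ --- which is again an upper bound on $S(X:Y||Z)$ by a monotonicity argument in the same spirit as above, and which for the chosen distribution can be driven strictly below $I(X:Y\downarrow Z)$ by exhibiting a single good $U$. The chain $S(X:Y||Z)\le I(X:Y\downarrow\downarrow Z)<I(X:Y\downarrow Z)$ then yields $S(X:Y||Z)\ne I(X:Y\downarrow Z)$, and note that this route never needs the exact value of $S(X:Y||Z)$.

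The inequality is soft, and I expect no difficulty there. All of the work is in the separation, and within it the crux is the lower bound $I(X:Y\downarrow Z)>0$: one must show that \emph{no} channel on Eve's side can wash out all of the conditional mutual information, which requires both the alphabet-size reduction (to get compactness) and a distribution-specific analysis showing that every candidate $\overline{Z}$ still leaves $X$ and $Y$ conditionally dependent. Producing the witness $U$ that certifies $I(X:Y\downarrow\downarrow Z)<I(X:Y\downarrow Z)$ is the secondary hurdle. Since both steps are tied to the particular example, the cleanest write-up fixes the Renner--Wolf distribution at the outset and carries out the two estimates explicitly.
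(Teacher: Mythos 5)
The paper does not prove this theorem; it is quoted as a known background result, citing Maurer--Wolf and Renner--Wolf, so there is no in-paper proof to compare against. That said, your outline tracks the standard argument from those sources: the inequality $S(X{:}Y\|Z)\le I(X{:}Y\downarrow Z)$ follows from the monotonicity $S(X{:}Y\|Z)\le S(X{:}Y\|\overline{Z})$ under local processing of Eve's variable (your data-processing step $(S,C)\to Z^n\to\overline{Z}^n$ is exactly right and uses that the channel randomness is independent of the protocol) together with the older bound $S\le I(X{:}Y|\overline{Z})$ and an infimum over channels; and the separation is the Renner--Wolf observation that the reduced intrinsic information can be strictly below the intrinsic information while still upper-bounding $S$.

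Two small notes. First, your parenthetical definition of the reduced intrinsic information uses $H(U|Z)$, whereas this paper's Definition 2.7 uses $H(U)$; the two variants appear in the literature, but the gap example still goes through either way, so this does not affect your argument. Second, for establishing $I(X{:}Y\downarrow Z)>0$ you lean on compactness of the channel space after an alphabet-size reduction (the Christandl--Renner--Wolf bound) to turn the infimum into a minimum. That is a valid route, but it is worth knowing it is not the only one: for a concrete finite distribution one can instead exhibit a uniform positive lower bound on $I(X{:}Y|\overline{Z})$ over all channels directly, sidestepping compactness. Since the cited papers are working with explicit finite examples, the direct route is closer to what they actually do, and it avoids importing a separate theorem.

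Overall your proposal is correct and is essentially the argument the cited sources give; only the two cosmetic points above are worth tightening.
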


Motivated by the fact that $S(X:Y||ZU) \leq S(X:Y||Z) - H(U)$ holds but the corresponding inequality for the intrinsic information does not always hold, Renner and Wolf have introduced the \textit{reduced intrinsic conditional mutual information} \cite{ReSkWo03}.

\begin{definition}[\cite{RenWol03}]
Given a distribution $P_{XYZ}$, the \textit{reduced intrinsic conditional mutual information} $I(X:Y \downarrow \downarrow Z)$, sometimes called the \textit{reduced intrinsic information}, is defined as
\[ I(X:Y \downarrow \downarrow Z) := \inf_{P_{U|XYZ}} I(X:Y\downarrow ZU) + H(U). \]
\end{definition}

From the definition, we can see that the intrinsic information is an upper bound on the reduced intrinsic information, by setting $U$ to be trivial. The reduced intrinsic information is bounded from below by the secret-key rate, informally because in the infimum we can let $U$ be the secret bit that Alice and Bob can generate.

\begin{theorem}[\cite{RenWol03}]
Given a probability distribution $P_{XYZ}$, $S(X:Y||Z) \leq I(X:Y \downarrow \downarrow Z)$.
\end{theorem}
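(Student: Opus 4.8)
The plan is to obtain this bound by combining the already-established inequality $S(X:Y\|W)\le I(X:Y\downarrow W)$ (applied with $W=ZU$) with a monotonicity, or \emph{spoiling-knowledge}, estimate for the secret-key rate. Concretely, I would first prove that for every channel $P_{U|XYZ}$,
\[ S(X:Y\| Z)\;\le\;S(X:Y\| ZU)+H(U). \]
Granting this, for each such $U$ we get $S(X:Y\| Z)\le S(X:Y\| ZU)+H(U)\le I(X:Y\downarrow ZU)+H(U)$, and taking the infimum over all $P_{U|XYZ}$ of the right-hand side yields $S(X:Y\| Z)\le I(X:Y\downarrow\downarrow Z)$. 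So the entire argument reduces to the displayed spoiling-knowledge inequality.

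To prove that inequality, fix a channel $P_{U|XYZ}$ and let $R<S(X:Y\| Z)$. Then for every $\epsilon>0$ and all large $n$ there is an $\epsilon$-safe protocol on $n$ copies of $P_{XYZ}$ producing $S_A,S_B,S$ with $H(S)/n\ge R$, $P[S_A=S_B=S]>1-\epsilon$, and $I(S:CZ^n)<\epsilon$. Now run the very same protocol in the scenario where Eve additionally holds $U^n$, where $U_i$ is obtained by feeding the $i$-th copy $(X_i,Y_i,Z_i)$ through $P_{U|XYZ}$; this changes nothing for Alice and Bob, and since the $n$ copies are independent the extra side information is i.i.d., so by the chain rule
\[ I(S:CZ^nU^n)\le I(S:CZ^n)+H(U^n)=I(S:CZ^n)+nH(U)<\epsilon+nH(U). \]
Thus $S$ is a key that is correct up to error $\epsilon$ but may leak about $nH(U)$ bits to the strengthened Eve. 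Applying one further round of privacy amplification to $S$ (hashing with a publicly chosen two-universal family, together with the asymptotic equipartition property to pass from Shannon entropy to smooth min-entropy and, if needed, replacing the protocol by many parallel copies of itself), Alice and Bob can distill a shorter key $S'$ with $H(S')/n\ge R-H(U)-o(1)$ that is $\epsilon'$-secret against $CZ^nU^n$ with $\epsilon'\to0$. Hence $S(X:Y\| ZU)\ge R-H(U)$, and letting $R\uparrow S(X:Y\| Z)$ gives the claim.

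The main obstacle is the final privacy-amplification step: one must argue that the smooth min-entropy of $S$ given Eve's enlarged view $CZ^nU^n$ is at least $n(R-H(U))-o(n)$, so that the extracted $S'$ is simultaneously long and nearly uniform and independent of Eve. This is where the asymptotic, communication-allowing definition of the secret-key rate forces care: tracking the $\epsilon$'s, using Fano/continuity-of-entropy estimates to turn ``$H(S)\approx nR$ with leakage $\approx nH(U)$'' into a genuine $\epsilon'$-safe protocol, and checking that appending $U^n$ leaves the transcript $C$ and the agreement probability untouched. The two reductions above — to the spoiling-knowledge inequality, and the chain-rule bound on the leakage — are routine by comparison.
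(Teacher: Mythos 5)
The paper does not prove this theorem: it cites it to Renner and Wolf, and the informal sentence preceding it (``we can let $U$ be the secret bit'') is, taken literally, backwards---exhibiting a particular $U$ can only \emph{upper}-bound the infimum defining $I(X:Y\downarrow\downarrow Z)$, whereas the theorem asserts a \emph{lower} bound on that infimum. Your reduction is the correct formalization and is, as far as I can tell, the argument of the cited source: for every channel $P_{U|XYZ}$ one shows $I(X:Y\downarrow ZU) + H(U) \ge S(X:Y||ZU) + H(U) \ge S(X:Y||Z)$, the first step being the known intrinsic-information bound and the second being the spoiling-knowledge estimate. (Note that Section~2 of this paper states that estimate with the inequality reversed, ``$S(X:Y||ZU)\le S(X:Y||Z)-H(U)$''; your orientation, $S(X:Y||ZU)\ge S(X:Y||Z)-H(U)$, is the correct one.)

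The genuine gap is exactly where you flag it: the passage from a Shannon-entropy leakage bound to a privacy-amplification argument. Two specific issues need handling. First, the pair $(S, CZ^nU^n)$ is not itself an i.i.d.\ sequence, so the asymptotic equipartition property does not directly convert $H(S|CZ^nU^n)\ge n(R-H(U))-\epsilon$ into a usable smooth min-entropy bound; your parenthetical remark about replacing the protocol by many parallel copies is precisely the standard repair---$m$ independent runs produce an i.i.d.\ block structure across which AEP does apply and hashing can be performed---but this step carries the real weight and cannot be left as an aside. Second, the paper's definition of an $\epsilon$-safe protocol is a weak-secrecy definition: leakage is measured by $I(S:CZ^n)<\epsilon$ and the key is only required to satisfy $H(S)/n\ge R$, not near-uniformity, so one should either carry out privacy amplification directly against a mutual-information criterion or first invoke the weak-to-strong-secrecy equivalence of Maurer and Wolf before applying extractor-style arguments phrased in statistical distance. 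Neither point is a conceptual flaw---the plan is right, and the chain-rule bound $I(S:CZ^nU^n)\le I(S:CZ^n)+nH(U)$ is correct---but both are precisely the details separating this sketch from a complete proof.
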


Intuitively, it is immediately obvious why the reduced intrinsic information should be different than the intrinsic information, because whatever information Eve receives through the variable $U$ is already accounted for in the $H(U)$ term. However it turns out that it is possible for the reduced intrinsic information to be strictly less than the intrinsic information because for some distributions Eve may have the additional disadvantage, beyond not knowing $X$ and $Y$, of not knowing how to process her variable $Z$. Therefore, the knowledge of how to process $Z$, as represented by $U$, can reduce the shared information between $X$ and $Y$ by more than the amount of information in $U$ itself. So the reduced intrinsic information is sometimes less than the intrinsic information.

\begin{theorem}[\cite{ReSkWo03}]
There exists a countable distribution $P_{XYZ}$ where $I(X:Y \downarrow Z) \neq I(X:Y \downarrow \downarrow Z)$.
\end{theorem}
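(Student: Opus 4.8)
\emph{Proof idea.} The plan is to build a distribution in which Eve's handicap is not merely that she does not know $X$ and $Y$, but that she does not know the ``right way'' to post-process her own variable $Z$; handing her that recipe through a low-entropy auxiliary variable $U$ should then collapse the intrinsic information by more than $H(U)$. Concretely, I would introduce a hidden ``mode'' variable $C$ observed by no party and build $P_{XYZ}$ so that (i) conditioned on any mode, $X$ and $Y$ are perfectly correlated, so that there is genuine, mode-robust correlation between Alice and Bob; (ii) conditioned on mode $c$, some deterministic post-processing $g_c$ of $Z$ reveals the common value of $X$ and $Y$; and (iii), crucially, the marginal law of $Z$ is the \emph{same} in every mode, so that Eve, seeing only $Z$, cannot tell which recipe to apply. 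A minimal instance uses two modes with $Z=(A,B)$, the common value of $X,Y$ being stored in $A$ when $C=0$ and in $B$ when $C=1$, the other coordinate being fresh noise with matching marginal.

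The easy half is the upper bound on the reduced intrinsic information. Take $U:=C$. Given $ZU$, Eve applies the channel $(z,c)\mapsto g_c(z)$; its output equals $X$, so $I(X:Y\downarrow ZU)=0$ and hence $I(X:Y\downarrow\downarrow Z)\le I(X:Y\downarrow ZU)+H(U)=H(C)$. It then remains to make $H(C)$ strictly smaller than a lower bound on $I(X:Y\downarrow Z)$, which is a matter of scaling the gadget --- storing the common value of $X,Y$ in a larger alphabet, or taking several copies --- so that the correlation surviving \emph{every} post-processing of $Z$ is worth strictly more than the $\log$ of the number of modes.

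The real content is the lower bound $I(X:Y\downarrow Z)\ge c>0$: \emph{every} admissible channel $P_{\overline Z\mid Z}$ must leave a positive amount of conditional mutual information. I would first reduce, by the standard compactness argument, to channels whose output alphabet is bounded in terms of that of $Z$, turning this into a finite optimization in which the infimum is attained. Then, since the law of $Z$ is mode-independent, so is the law of $\overline Z$, and writing $I(X:Y\mid\overline Z)$ as an average over $C$ I would argue, via a convexity/averaging argument together with the mode symmetry, that any channel action which decorrelates $X$ and $Y$ in one mode necessarily re-correlates them in another, so that the best Eve can do is a fixed (or stochastically mixed) choice among the $g_c$, none of which beats the floor $c$. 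I expect this last step --- ruling out a clever \emph{stochastic} channel that partially launders all modes simultaneously --- to be the main obstacle; I would handle it by explicit case analysis over the finitely many extremal channels, or by an information inequality tailored to the two-mode symmetry. Combining the two halves yields $I(X:Y\downarrow\downarrow Z)\le H(C)<c\le I(X:Y\downarrow Z)$, establishing the strict inequality.
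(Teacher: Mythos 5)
The paper does not actually prove this theorem --- it is stated as a citation to Renner, Skripsky, and Wolf \cite{ReSkWo03} --- so there is no in-paper argument to compare yours against; the real construction lives in the cited reference. Your high-level intuition (Eve's handicap is not knowing \emph{how} to post-process $Z$, and $U$ hands her the recipe) does match the paper's informal discussion immediately preceding the theorem. However, the concrete gadget you propose has a structural flaw that no amount of ``scaling'' can repair.

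You arrange things so that (a) $X = Y$ in every mode, and (b) $g_C(Z) = X$, i.e.\ the pair $(Z, C)$ determines $X$. Under (a) and (b),
\[
I(X : Y \mid Z) \;=\; H(X \mid Z) \;=\; H(X \mid Z) - H(X \mid Z, C) \;=\; I(X : C \mid Z) \;\le\; H(C),
\]
and since the identity channel is always an admissible $P_{\overline{Z}\mid Z}$,
\[
I(X : Y \downarrow Z) \;\le\; I(X : Y \mid Z) \;\le\; H(C) \;=\; H(U).
\]
So the upper bound $I(X : Y \downarrow\downarrow Z) \le H(C)$ that you obtain by handing Eve $U = C$ is \emph{also} an upper bound on $I(X : Y \downarrow Z)$ itself: both quantities are pinned at or below $H(C)$, and no strict separation can follow. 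The ``lower bound'' half that you flag as the main obstacle is therefore not merely technically hard for this gadget --- it is provably unachievable, because the correlation surviving even the trivial channel already sits at or below $H(C)$. Enlarging the alphabet of $X = Y$ does not help, since $I(X : Y \mid Z) = I(X:C\mid Z) \le H(C)$ no matter how large the alphabet is; taking $n$ i.i.d.\ copies scales $H(C^n)$ to $nH(C)$ and by subadditivity keeps $I(X^n : Y^n \downarrow Z^n) \le n\,I(X : Y \downarrow Z) \le n H(C)$; and sharing a single mode $C$ across all $n$ copies keeps $I(X^n : Y^n \mid Z^n) \le H(C)$ as well. Any successful construction must break the combination ``$X=Y$ deterministically and $(Z,U)$ decodes $X$'' so that the intrinsic information is not automatically squeezed beneath $H(U)$; this is exactly where the countably infinite machinery in \cite{ReSkWo03} earns its keep, and your outline does not supply a substitute for it.
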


As the reduced intrinsic information is a strictly stronger bound on the secret-key rate than the intrinsic information, it is natural to ask whether it in fact equals the secret-key rate. This open problem can be stated as follows.

\begin{conjecture}[\cite{RenWol03}]
\label{conjecture1}
Given a probability distribution $P_{XYZ}$, we have $S(X:Y||Z) = I(X:Y \downarrow \downarrow Z)$.
\end{conjecture}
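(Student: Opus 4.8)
Only the achievability direction is open here: $S(X:Y||Z)\le I(X:Y\downarrow\downarrow Z)$ is the Renner--Wolf theorem stated above, so the conjecture reduces to $S(X:Y||Z)\ge I(X:Y\downarrow\downarrow Z)$ for every $P_{XYZ}$. The plan is to combine two facts already in hand: the inequality $S(X:Y||ZU)\le S(X:Y||Z)-H(U)$ noted above, which rearranges to $S(X:Y||Z)\ge S(X:Y||ZU)+H(U)$, and the bound $S(X:Y||ZU)\le I(X:Y\downarrow ZU)$, which is just the intrinsic-information bound applied to the augmented distribution $P_{XY(ZU)}$. Fix a side-information channel $U^*$ coming within $\delta$ of the infimum defining $I(X:Y\downarrow\downarrow Z)$, so $I(X:Y\downarrow ZU^*)+H(U^*)\le I(X:Y\downarrow\downarrow Z)+\delta$. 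Since $S(X:Y||Z)\ge S(X:Y||ZU^*)+H(U^*)$, it would suffice to prove that the augmented distribution $P_{XY(ZU^*)}$ achieves its intrinsic information, i.e.\ $S(X:Y||ZU^*)\ge I(X:Y\downarrow ZU^*)-\delta$; chaining the inequalities and letting $\delta\to 0$ then gives the conjecture.

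So the crux is a reduction to a restricted class of distributions. A short manipulation of the two infima shows that choosing $U^*$ near-optimal forces the augmented distribution to be nearly \emph{saturated}, meaning $I(X:Y\downarrow\downarrow ZU^*)$ and $I(X:Y\downarrow ZU^*)$ coincide up to $O(\delta)$: no further advice $U'$ can lower Eve's residual intrinsic information by more than $H(U')$, for otherwise $U^*U'$ would beat $U^*$ in the original infimum. Hence Conjecture~\ref{conjecture1} in general follows from Conjecture~\ref{conjecture1} for saturated distributions (those with $I(X:Y\downarrow Z)=I(X:Y\downarrow\downarrow Z)$), and for those one would try to show outright that intrinsic information is an achievable key rate: pass to a near-optimal processing channel for Eve, run advantage distillation, information reconciliation, and privacy amplification against that single Eve, and absorb the $\delta$-suboptimality and finite-blocklength losses into a continuity estimate before taking $N\to\infty$.

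The main obstacle is this last step. Achievability of intrinsic information for saturated distributions is open, and is in fact equivalent to the full conjecture rather than a genuine weakening of it. More decisively, the whole route collapses if bound secrecy exists: a bound-secret distribution has $I(X:Y\downarrow Z)>0=S(X:Y||Z)$, and --- by the ``reduced intrinsic information is $0$ iff intrinsic information is $0$'' result announced in the abstract --- it then also has $I(X:Y\downarrow\downarrow Z)>0=S(X:Y||Z)$, an outright counterexample. So any proof of Conjecture~\ref{conjecture1} would, through that equivalence, simultaneously establish that bound secrecy does not exist; deciding which of the two holds --- equivalently, exhibiting a saturated distribution on which the worst-case Eve genuinely fails to distill her intrinsic information, or proving that none exists --- is where the real difficulty lies.
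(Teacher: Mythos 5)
This statement is Conjecture~\ref{conjecture1}, not a theorem: the paper states it as an open problem attributed to Renner and Wolf and offers no proof of it, so there is nothing in the paper for your attempt to be measured against. Indeed the paper's thrust runs the opposite way — Theorem~\ref{maintheorem}, via the corollary worked out in Section~\ref{implications}, shows that if bound secrecy exists then Conjecture~\ref{conjecture1} is false, and the authors say they believe Conjecture~\ref{conjecture1} is the one that fails. Your closing paragraph reaches exactly this diagnosis: you note that the ``reduced intrinsic information is $0$ iff intrinsic information is $0$'' result turns any bound-secret distribution into an outright counterexample, so a proof of the conjecture would simultaneously rule out bound secrecy. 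That observation matches the paper's Section~\ref{implications} and is the correct thing to say here.

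On the proof sketch itself, the first step rests on a sign that is stated backwards in the paper's prose and that you inherited. The inequality motivating the reduced intrinsic information is $S(X:Y||ZU)\ge S(X:Y||Z)-H(U)$ (handing Eve the extra variable $U$ can cost Alice and Bob at most $H(U)$ of key rate), which rearranges to $S(X:Y||Z)\le S(X:Y||ZU)+H(U)$ — an upper bound, and precisely the ingredient that yields $S(X:Y||Z)\le I(X:Y\downarrow\downarrow Z)$. The direction you wrote, $S(X:Y||Z)\ge S(X:Y||ZU)+H(U)$, fails in general (take $U=XY$: the right-hand side is $H(XY)$ while the left is at most $I(X:Y|Z)$), so the chain of inequalities never produces a lower bound on $S$. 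You are also candid that the remaining step — achievability of intrinsic information for ``saturated'' distributions — is not a genuine weakening but the conjecture restated, so the reduction is circular. The only firm conclusion available, which both you and the paper reach, is that Conjecture~\ref{conjecture1} and bound secrecy cannot both be true.
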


Whereas previous bounds on $S$, such as the intrinsic information, have been improved by finding properties that were not shared between those quantities and $S$, so far the reduced intrinsic information appears to share many properties of the secret-key rate. If the conjecture is proven true (i.e. $S(X:Y||Z) = I(X:Y \downarrow \downarrow Z)$ in all cases), then we would have a relatively simple description, based on only the distribution $P_{XYZ}$, of the secret-key rate. This would fulfill one of the original objectives. If the conjecture is proven false, then it may reveal another potential strategy for Alice and Bob for secret-key extraction not related to intrinsic or reduced intrinsic information. Another significant conjecture is the problem of bound secrecy, namely secrecy between Alice and Bob that cannot be extracted.

\begin{conjecture}\cite{GiReWo02} (Bound secrecy)
\label{conjecture2}
There exists a distribution $P_{XYZ}$ such that $I(X:Y \downarrow Z) > 0$ but $S(X:Y||Z)=0$.
\end{conjecture}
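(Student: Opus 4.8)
The plan is not to argue in full generality but to exhibit one explicit distribution $P_{XYZ}$ --- the candidate proposed in \cite{GiReWo02}, with a member of the family of \cite{RenWol03} as a fallback --- and verify its two required properties separately: (i) $I(X:Y\downarrow Z)>0$, and (ii) $S(X:Y||Z)=0$. Part (i) is routine; part (ii) carries the difficulty. One observation shapes everything: by the main result of Section \ref{mainresult}, a distribution's reduced intrinsic information vanishes exactly when its intrinsic information does, so once (i) holds it is pointless to try to prove (ii) via $I(X:Y\downarrow\downarrow Z)=0$. Thus (ii) must be extracted directly from the operational definition of the secret-key rate, and binarizations are the tool.

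For (i), I would first trim the infimum defining $I(X:Y\downarrow Z)$ to post-processing channels $P_{\overline Z|Z}$ with bounded output alphabet --- a Carath\'eodory-type cardinality bound --- so that the feasible set becomes a compact set of stochastic matrices on which $P_{\overline Z|Z}\mapsto I(X:Y|\overline Z)$ is continuous, hence the infimum is attained. It then suffices to check this minimum is nonzero, i.e. that no stochastic relabelling of $Z$ makes $X-\overline Z-Y$ a Markov chain; for the chosen $P_{XYZ}$ this is a finite positivity check on the support.

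For (ii), the approach --- floated already in \cite{GiReWo02} --- routes everything through binarizations, i.e. stochastic maps onto a two-element alphabet. First I would prove a \emph{reduction lemma}: $S(X:Y||Z)=0$ is implied by (and likely equivalent to) a combinatorial condition on binarizations of $N$ copies --- roughly, that no binary coarse-graining of $X^N$ and of $Y^N$ retains secrecy against $Z^N$, so the positive intrinsic information living in the full non-binary structure of $P_{XYZ}$ is genuinely un-distillable. The delicate point of this lemma is that a key-agreement protocol is interactive, so one must show the communication transcript can be folded into Eve's side information without breaking the reduction. Granting the lemma, part (ii) becomes: for every $N$ and every binarization of $X^N$, $Y^N$, produce a channel on $Z^N$ that erases the resulting conditional dependence. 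The $N=1$ case I would handle by an explicit construction (Section \ref{iib}); the general $N$ is what remains.

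The main obstacle is precisely that passage from $N$ to $N+1$ --- showing the one-copy construction tensorizes. Two difficulties compound: Eve's channel space on $Z^N$ grows exponentially, and, more seriously, a binarization of $X^N$ need not factor through single-copy binarizations, so the correlations available to Alice and Bob are genuinely richer than in the one-copy picture. My attempt would be structural-plus-perturbative: argue that any binarization of $X^N$ capable of leaving positive intrinsic information is close to a block-structured (near-product) one, for which the erasing channel can be built coordinatewise from the $N=1$ witness; then absorb the deviation by an isolated-perturbation argument --- perturbing the erasing channel to restore the Markov condition while checking no compensating correlation appears elsewhere (the ``isolated perturbations'' technique of Section \ref{family}). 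Finding an inductive invariant on the binarized distributions that is both strong enough to use and robust enough to survive tensoring is, in my view, the crux.
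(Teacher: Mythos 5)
Your proposal is honest about its own incompleteness, and that honesty is the right call here: the statement you were asked to prove is labeled a \emph{conjecture} in the paper (attributed to \cite{GiReWo02}), and the paper itself does not prove it. What the paper does is (a) show the intrinsic/reduced-intrinsic route cannot settle it (Section \ref{mainresult}), (b) reduce it to a purely probabilistic statement about binarizations (Theorem \ref{326}), (c) give an explicit information-erasing binarization for $N=1$ on the $3\times 3$ candidate of \cite{GiReWo02} via the ITV, and (d) catalogue obstructions and partial techniques for $N\ge 2$ and for the $4\times 4$ family of \cite{RenWol03}. Your sketch tracks this program faithfully: your ``reduction lemma'' is essentially Proposition 5 of \cite{GiReWo02} together with Theorem \ref{326}, and your $N=1$ construction and isolated-perturbation ideas mirror Sections \ref{iib} and \ref{family}.

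That said, there is a genuine gap, and it is not just the one you name. You propose to close the $N\to N+1$ passage by arguing that any binarization of $X^N$ (or $Y^N$) capable of retaining intrinsic information is ``close to a block-structured (near-product) one,'' so that the erasing channel can be built coordinatewise and the deviation absorbed perturbatively. The paper contains a concrete counterexample to this hope already at $N=2$: Theorem \ref{ITVcounter} exhibits a non-product binarization of $Y^2$ (transition matrix $(a_{ij})=\begin{psmallmatrix}1&1&0\\0&0&0\\0&0&1\end{psmallmatrix}$) for which no channel on $Z^2$ hits the natural coordinatewise ITV targets, and a further theorem in Section \ref{iib} shows that row/column-restricted channels on $Z^2$ also fail for a specific binarization. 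So non-product strategies are not a small perturbation of product ones in any sense your argument can exploit; the set of binarizations admitting an erasing channel with coordinatewise targets is not dense, and the perturbative step is unsupported. There is also a smaller issue: your ``delicate point'' about folding interactive communication into Eve's side information is in fact already handled in \cite{GiReWo02}, so it is not where the difficulty lies. The true crux --- proving Conjecture \ref{conjecture3} (equivalently, the independence statement of Theorem \ref{326}) for all $N$ --- remains open, and your proposal, like the paper, leaves it open.
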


This conjecture is inspired by the fact that a corresponding quantum phenomenon, bound entanglement, has been shown to exist \cite{Horodecki_1998, distillHorodecki}. Bound entangled states are quantum entangled states, analogous to classically correlated random variables, which have secrecy which cannot be distilled \cite{Horodecki_1998}. Relatively strong evidence suggesting the existence of bound secrecy has been found in \cite{GiReWo02, RenWol03} by drawing connections between the classical and quantum problems. Numerical evidence for bound secrecy has been given in \cite{KhatriLutkenhaus}.

\section{The Gap Between the Standard and Reduced Intrinsic Information}
\label{mainresult}

The main result of this paper is the following.

\begin{theorem}
\label{maintheorem}
Given a probability distribution $P_{XYZ}$, we have 
\[ I(X:Y \downarrow \downarrow Z) = 0 \iff I(X:Y \downarrow Z) = 0.\]
\end{theorem}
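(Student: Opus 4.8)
The plan is to prove the two implications separately, with the forward direction ($I(X:Y \downarrow\downarrow Z) = 0 \implies I(X:Y \downarrow Z) = 0$) being the substantive one, since the reverse follows immediately: if $I(X:Y \downarrow Z) = 0$ then taking $U$ trivial in the infimum defining the reduced intrinsic information gives $I(X:Y \downarrow\downarrow Z) \le I(X:Y \downarrow Z) + 0 = 0$. So the heart of the matter is to show that $I(X:Y\downarrow\downarrow Z) = 0$ forces $I(X:Y\downarrow Z) = 0$ as well.

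For the forward direction, I would unpack the infimum defining $I(X:Y\downarrow\downarrow Z)$. Suppose $I(X:Y\downarrow\downarrow Z)=0$. Then for every $\delta > 0$ there is a channel $P_{U|XYZ}$ with $I(X:Y\downarrow ZU) + H(U) < \delta$; in particular both $H(U) < \delta$ and $I(X:Y\downarrow ZU) < \delta$, so there is a further channel $P_{\overline{ZU}|ZU}$ with $I(X:Y \mid \overline{ZU}) < \delta$. The idea is that when $H(U)$ is tiny, $U$ is almost deterministic, so we should be able to "simulate" the combined processing of $(Z,U)$ using only $Z$, at the cost of an error in the conditional mutual information that is controlled by $H(U)$. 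Concretely, I would try to build a channel $P_{\overline{Z}|Z}$ that first appends a guess $\hat U$ for $U$ — e.g. sampling $\hat U$ from the distribution $P_{U|Z}$ induced by the chosen $P_{U|XYZ}$, or just fixing $\hat U$ to the most likely value — and then applies $P_{\overline{ZU}|ZU}$ with this guessed value in place of the true $U$. The output variable $\overline{Z}$ satisfies $I(X:Y\mid \overline{Z}) \le I(X:Y\mid \overline{ZU}) + (\text{error term})$, where the error term should be bounded by a continuity estimate — Fano's inequality or the standard continuity of conditional mutual information — in terms of $\Pr[\hat U \ne U]$, which in turn is small because $H(U)$ is small. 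Letting $\delta \to 0$ then yields $I(X:Y\downarrow Z) = 0$.

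The main obstacle I anticipate is making the "guess $U$ from $Z$ alone" step rigorous while keeping the error genuinely controlled by $H(U)$ rather than by something uncontrolled. Two difficulties lurk here. First, $U$ is correlated with $X$ and $Y$, not just with $Z$, so a guess $\hat U$ manufactured from $Z$ may differ from the true $U$ with non-negligible probability even when $H(U)$ is small — one needs $H(U\mid Z)$ small, or to argue that the relevant failure probability is still controlled (e.g. $H(U) < \delta$ bounds $H(U\mid Z) \le H(U) < \delta$, and then Fano bounds the error probability of the optimal $Z$-based estimator). Second, in the countably-infinite-alphabet setting the continuity bounds for conditional mutual information are more delicate than in the finite case, and one must be careful that the estimate does not blow up; I would likely invoke a version of the Alicki–Fannes / Audenaert-type continuity bound adapted to this setting, or restrict attention to finite truncations and pass to the limit. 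A clean way to package all of this may be an intermediate lemma stating that $I(X:Y\downarrow ZU) \ge I(X:Y\downarrow Z) - f(H(U))$ for some explicit function $f$ with $f(0)=0$ and $f$ continuous at $0$; the theorem then follows by setting the left side and $H(U)$ both below $\delta$. Establishing that lemma — in particular identifying the right $f$ and handling the infinite alphabets — is where the real work lies.
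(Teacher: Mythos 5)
Your high-level strategy is essentially the same as the paper's, and your second option for the ``guess'' --- replacing $U$ by its unconditionally most likely value --- is exactly what the paper does. The paper introduces a constant variable $K$ (supported on the most likely outcome of $U$), observes that $H(U)\to 0$ forces the maximal probability of $U$ to tend to $1$ and hence $D(U,K)\to 0$, then pushes this through the chain
\[
D(U_i,K)\to 0 \;\Rightarrow\; D(XYZU_i,XYZK)\to 0 \;\Rightarrow\; D(XY\overline{ZU_i},XY\overline{ZK})\to 0 \;\Rightarrow\; I(X:Y|\overline{ZU_i})-I(X:Y|\overline{ZK})\to 0,
\]
using contractivity of trace distance under stochastic maps for the third step. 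Since $\overline{ZK}$ is just a channel applied to $Z$ alone, this gives $I(X:Y\downarrow Z)=0$. Note that choosing the constant guess makes your first anticipated difficulty (that $U$ is correlated with $X,Y$ rather than with $Z$ alone) evaporate: no conditional estimator is needed, no Fano bound is needed, and the guess is literally independent of everything. Your alternative of sampling $\hat U$ from $P_{U|Z}$ is an unnecessary complication and is the source of the worry you raise.

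Where your proposal genuinely stops short is your second concern, and you are right that it is the crux: the continuity of $I(X:Y|\cdot)$ under small trace distance for countably infinite alphabets. The paper does \emph{not} invoke Alicki--Fannes or any packaged continuity estimate --- as you suspect, those do not directly apply because they carry $\log(\text{alphabet size})$ factors. Instead the paper proves a bespoke Lemma (their Lemma 3.7): it expands all four entropies appearing in $I(X:Y|\overline{ZU_i})-I(X:Y|\overline{ZK})$, splits the sum over $z$ into values in $\mathrm{Range}(\overline{ZK})$ and the leftover $Z_i^*$, shows that the first part contributes $O(S_i\cdot H(\cdot))+O(S_i\log S_i)$ (using that the conditional distribution on the common support differs only by a normalization factor $1-S_i$), and bounds the second part by $S_i H(X)$ and $S_i H(XY)$ via concavity of the logarithm. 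Here $S_i$ is the total probability mass that $\overline{ZU_i}$ puts outside $\mathrm{Range}(\overline{ZK})$, which is at most twice the trace distance. This is exactly a continuity estimate of the form you guessed, $|I(X:Y|\overline{ZU_i})-I(X:Y|\overline{ZK})|\le f(D)$ with $f(0)=0$, but with $f$ depending on the finite \emph{entropies} $H(X)$, $H(XY)$, etc., rather than on alphabet sizes --- this is precisely why the finite-entropy hypothesis in the paper suffices even for infinite alphabets. So: right roadmap, correct identification of the hard step, but the actual content of that step (the paper's Lemma 3.7) is what you would still need to supply.
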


We first observe that the reverse direction follows because the intrinsic information is an upper bound on the reduced intrinsic information, which is nonnegative. We focus on the forward direction, whose proof takes the remainder of this section.

An important tool in the proof is the notion of the trace distance between two random variables.

\begin{definition}
Let two countable random variables $A$ and $B$ have probability distributions $\{a_i\}$ and $\{b_i\}$ with the same index set. Then the \textit{trace distance} between $A$ and $B$, denoted $D(A,B)$ is defined to be \[ D(A, B) := \frac{1}{2} \sum_i |a_i-b_i|. \]
\end{definition}

To prove the forward direction of Theorem \ref{maintheorem}, we reason as follows. If $I(X:Y\downarrow\downarrow Z)=0$, then by definition $\inf\limits_{P_{U|XYZ}}(I(X:Y\downarrow ZU)+H(U))=0$. First, suppose that this infimum is a minimum. This means that there exists an $XYZU$ such that $I(X:Y\downarrow ZU)+H(U)=0$, so $H(U)=0$ and $I(X:Y\downarrow ZU)=0$. However, since $U$ adds no information, we have $0=I(X:Y\downarrow ZU)=I(X:Y\downarrow Z)$, which is the desired statement.

From now on, assume that the infimum is not a minimum. This means that both quantities in the sum must approach 0 for a carefully chosen sequence of distributions. More rigorously, there must exist a sequence of probability distributions $\{XYZU_i\}$ such that $\lim\limits_{i\to \infty}H(U_i)=0$ and $\lim\limits_{i\to \infty}I(X:Y\downarrow ZU_i)=0$. Due to the definition of intrinsic information, there must also exist a sequence of channels $\{C_i\}$ such that $\lim\limits_{i\to \infty}I(X:Y|C_i(ZU_i))=0$.

In order to prove that $I(X:Y\downarrow Z)=0$, all that we have to do is show that there exists a sequence of channels $\{c_i\}$ such that $\lim\limits_{i\to\infty}I(X:Y|c_i(Z))=0$. We show this by showing $\{c_i\}=\{C_i\}$ works. In order to do this, we incorporate the defining property of the sequence $\{C_i\}$ by showing that \[\lim\limits_{i\to\infty}I(X:Y|C_i(Z))-I(X:Y|C_i(ZU_i))=0\] starting from $\lim\limits_{i\to\infty}H(U_i)=0$. In the rest of the proof, the channels $\{C_i\}$ are denoted using bars and the value of $i$ will be inferred from context; for example, we will write $\overline{Z}$ instead of $C_i(Z)$.

We first prove a number of lemmas regarding trace distances, denoted $D(A,B)$, and entropies. As a convention, let $K$ denote a constant random variable, whose probability distribution is a unit vector with the first component equal to $1$. The size of the range of $K$ is taken to be contextual (i.e. equal to the range of $U_i$).

Also, we assume that if $U_i$ is a random variable, then the probabilities for each outcome of $U_i$ are ordered in descending order. Such an ordering exists because any countable set of nonnegative values with total $1$ can be indexed in descending order: there can only be finitely many probabilities above any threshold $x \in (0,1)$, so we can order the probabilities that are above $x$ because there are only finitely many, and then order all the probabilities by repeatedly lowering $x$.

To prove Theorem \ref{maintheorem} we will prove the following sequence of implications.
\begin{align*}
        \lim_{i \to \infty} H(U_i) &= 0 \\
        \implies \lim_{i \to \infty} D(U_i, K) &= 0 \\
        \implies \lim_{i \to \infty} D(XYZU_i, XYZK) &= 0 \\
        \implies \lim_{i \to \infty} D(XY\overline{ZU_i}, XY\overline{ZK}) &= 0 \\
        \implies \lim_{i \to \infty} I(X:Y | \overline{ZU_i}) - I(X:Y|\overline{ZK}) &= 0
\end{align*}

To establish these implications, we first prove some lemmas.

\begin{lemma}
\label{lem3.2}
If $\lim\limits_{i \to \infty} H(U_i) = 0$ for some sequence of countable random variables $U_i$, then $\lim\limits_{i \to \infty} D(U_i, K) = 0$.
\end{lemma}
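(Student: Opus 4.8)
The plan is to reduce the statement to a fact about the single largest probability in the distribution of $U_i$. First I would unwind the definition of trace distance using the conventions already fixed: $K$ has probability vector $(1,0,0,\dots)$, and the outcomes of $U_i$ are indexed in descending order of probability, so writing $p^{(i)}_1$ for the largest probability of $U_i$ we get
\[
D(U_i,K) = \frac{1}{2}\left( \left| p^{(i)}_1 - 1 \right| + \sum_{j\geq 2} p^{(i)}_j \right) = \frac{1}{2}\left( (1-p^{(i)}_1) + (1-p^{(i)}_1) \right) = 1 - p^{(i)}_1 ,
\]
where $|p^{(i)}_1 - 1| = 1 - p^{(i)}_1$ is legitimate since $p^{(i)}_1 \le 1$. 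Hence it suffices to show $p^{(i)}_1 \to 1$.

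The key estimate is that for any countable random variable $U$ with probabilities $\{p_j\}$ and maximal probability $p_{\max}$, one has $H(U) \geq \log(1/p_{\max})$. This is immediate: $\log(1/p_j) \geq \log(1/p_{\max})$ for every $j$, so $H(U) = \sum_j p_j \log(1/p_j) \geq \log(1/p_{\max}) \sum_j p_j = \log(1/p_{\max})$, using only $\sum_j p_j = 1$, which holds regardless of whether the support is finite or countably infinite. Rearranging gives $p_{\max} \geq 2^{-H(U)}$. Applying this to $U_i$ yields $1 \geq p^{(i)}_1 \geq 2^{-H(U_i)}$, and since $H(U_i)\to 0$ the lower bound tends to $2^{0}=1$, so by squeezing $p^{(i)}_1 \to 1$ and therefore $D(U_i,K) = 1 - p^{(i)}_1 \to 0$.

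There is no serious obstacle here; the only points requiring care are that the inequality $H(U) \ge \log(1/p_{\max})$ is valid for countably infinite support and that the descending-order convention legitimizes dropping the absolute value. A slicker-looking alternative would bound $H(U_i)$ below by the binary entropy $h(p^{(i)}_1)$ via the grouping property of entropy and then invoke continuity of $h$ at $0$, but that route must additionally exclude the possibility $p^{(i)}_1 \to 0$, so the direct bound $p_{\max} \ge 2^{-H(U)}$ is the cleaner path.
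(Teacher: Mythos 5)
Your proof is correct and follows essentially the same route as the paper's: both establish the bound $H(U_i) \ge \log(1/p^{(i)}_1)$ by replacing each $\log(1/p_j)$ with $\log(1/p_{\max})$, conclude $p^{(i)}_1 \to 1$, and then compute $D(U_i,K) = 1 - p^{(i)}_1$ directly from the definition. The only cosmetic difference is that you phrase the convergence as a squeeze via $p^{(i)}_1 \ge 2^{-H(U_i)}$ while the paper argues $\log(1/a_{1i}) \to 0$ from nonnegativity.
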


\begin{proof}
Suppose the probabilities for each outcome of the random variables $U_i$ are $a_{1i}$, $a_{2i}$, $\dots$, with $a_{1i} \geq a_{2i} \geq a_{3i} \geq \dots$. Then
\begin{align*}
    H(U_i) &= \sum_j a_{ji} \log \frac{1}{a_{ji}} \\
    &\geq \sum_j a_{ji} \log \frac{1}{a_{1i}} \\
    &= \log \frac{1}{a_{1i}}.
\end{align*}

Since $\log \frac{1}{a_{1i}}$ is nonnegative, if $H(U_i) \to 0$, we must have $\log \frac{1}{a_{1i}} \to 0$. Therefore $a_{1i} \to 1$.
If $k_1$, $k_2$, $\dots$ are the probabilities that $K = 1$, $K = 2$, and so on, then we have 
\[ D(U_i, K) = \frac{1}{2} \sum_{j} |a_{ji}-k_j| = \frac{1}{2} (1-a_{1i} + 1-a_{1i}) = 1-a_{1i} \]

so $D(U_i, K) \to 0$.
\end{proof}

\begin{lemma}
\label{lem3.3} 
Consider a sequence of countable random variables $U_i$ and let $Z$ be an arbitrary countable random variable. Then $D(ZU_i, ZK) = D(U_i, K)$.
\end{lemma}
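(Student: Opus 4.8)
The plan is to compute both sides of the claimed equality directly from the definition of the trace distance and check that they produce the same expression, namely $1 - a_{1i}$, where $a_{1i} := P[U_i = 1]$ (by our ordering convention this is the largest outcome-probability of $U_i$, and by the convention on $K$ it is the outcome on which $K$ places all of its mass). The only subtlety is that $U_i$ need not be independent of $Z$, so at first glance $D(ZU_i, ZK)$ could depend on the joint distribution of $(Z, U_i)$; the computation will show that it does not.

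Concretely, since $K$ is deterministic we have $P_{ZK}(z,u) = P_Z(z)\,\mathbb{1}[u=1]$, so $D(ZU_i, ZK) = \tfrac12 \sum_{z} \sum_{u} \bigl| P_{ZU_i}(z,u) - P_Z(z)\,\mathbb{1}[u=1] \bigr|$. I would split the inner sum into the term $u = 1$ and the terms $u \neq 1$. For $u \neq 1$ the summand is simply $P_{ZU_i}(z,u)$. For $u = 1$, note that $P_{ZU_i}(z,1)$ is one of the nonnegative summands making up $P_Z(z) = \sum_u P_{ZU_i}(z,u)$, hence $P_{ZU_i}(z,1) \le P_Z(z)$, so the absolute value equals $P_Z(z) - P_{ZU_i}(z,1) = \sum_{u \neq 1} P_{ZU_i}(z,u)$. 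Summing over all $z$ and $u$ then gives $2 \sum_z \sum_{u \neq 1} P_{ZU_i}(z,u) = 2 \sum_{u \neq 1} P_{U_i}(u) = 2(1 - a_{1i})$, so $D(ZU_i, ZK) = 1 - a_{1i}$.

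Finally, as was already computed inside the proof of Lemma \ref{lem3.2}, $D(U_i, K) = \tfrac12\bigl(|a_{1i} - 1| + \sum_{u \neq 1} P_{U_i}(u)\bigr) = \tfrac12\bigl((1 - a_{1i}) + (1 - a_{1i})\bigr) = 1 - a_{1i}$, which matches the expression above, so $D(ZU_i, ZK) = D(U_i, K)$. There is no real obstacle here: it is a short, direct calculation, and the single point worth flagging is the inequality $P_{ZU_i}(z,1) \le P_Z(z)$, which is exactly what makes the answer independent of how $Z$ and $U_i$ are correlated. (One could package this more slickly via the identity $D(P,Q) = \sum_{x:\, P(x) > Q(x)} (P(x) - Q(x))$ — on the $ZK$ side, $P_{ZK}$ dominates $P_{ZU_i}$ precisely on the outcomes $(z,1)$ and is dominated precisely on the outcomes $(z,u)$ with $u\neq 1$ — but the split-sum computation is already self-contained.)
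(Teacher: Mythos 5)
Your proof is correct and takes essentially the same approach as the paper: both compute $D(ZU_i, ZK)$ directly from the definition, exploit the fact that $K$ is deterministic so the only nontrivial comparison is at $u=1$, use the key observation $P(Z=z, U_i=1)\le P(Z=z)$ to drop the absolute value, and reduce both sides to $1-P(U_i=1)$. The paper phrases this via the sum-of-positive-parts form of trace distance rather than your explicit split into $u=1$ and $u\neq 1$ cases, but this is a cosmetic difference, not a different argument.
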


\begin{proof}
For the purposes of this proof, let ``$1$" be the value that $K$ attains with probability $1$. It follows almost directly from the definition of trace distance that
\[D(ZU_i, ZK) = \sum_{(z,x)\in \mathcal{Z}\times\mathcal{U}_i} \max(0, P(Z=z, K=x) - P(Z=z, U_i=x)).\]
Since $K$ is always $1$, for any $x$ other than $1$, $P(Z=z, K=x)=0$, so $P(Z=z, K=x) - P(Z=z, U_i=x)\leq 0$ and $\max(0, P(Z=z, K=x) - P(Z=z, U_i=x))=0$. Thus the trace distance can be reduced to the following sum.
\[ D(ZU_i, ZK) = \sum_{(z,x)\in \mathcal{Z}\times\mathcal{U}_i} \max(0,P(Z=z, K=1) - P(Z=z, U_i=1)). \]
But since $K=1$ with probability $1$, and $P(U_i=1)\leq 1$, it suffices to take
\[ D(ZU_i, ZK) = \sum_{z \in \mathcal{Z}} P(Z=z) - P(Z=z, U_i=1) \]
which is just
\[D(ZU_i, ZK) = 1 - \sum_z P(Z=z, U_i=1) = 1-P(U_i=1). \]
As proven at the end of the proof for Lemma \ref{lem3.2}, $D(U_i, K) = 1-P(U_i=1)$ and we are done.
\end{proof}

\begin{remark}
The importance of $K$ is demonstrated by the above lemma, as the lemma becomes false if $U_i$ and $K$ are replaced by arbitrary random variables. A counterexample to Lemma \ref{lem3.3} in which $K$ is replaced by an arbitrary random variable is when $Z$ is a fair coin flip and $A=Z$ while $B$ is an independent fair coin flip. Then $D(A,B) = 0$ because these probability distributions are identical, but $D(ZA, ZB) = 1$ because $ZA$ is either both heads or both tails with probability $0.5$, while $ZB$ can be each of the 4 possibilities with probability $0.25$.
\end{remark}

\begin{remark}
The above lemma also shows the importance of converting statements about entropy into statements about trace distance (through Lemma \ref{lem3.2}) rather than some other measure of distance, such as the Kullback-Leibler (KL) divergence \cite{kl-divergence}. The KL divergence is defined for two probability distributions $P$ and $Q$, both over the probability space $\mathcal{X}$, as
\[D_{KL}(P||Q):=\sum_{x\in \mathcal{X}}P(x)\log\left(\frac{P(x)}{Q(x)}\right).\] 
Lemma \ref{lem3.3} does not make sense if the trace distances are replaced with KL-divergences because there exists a $Z$ with infinite range such that the KL-divergence of the left-hand side of the lemma $D_{KL}(ZU_i||ZK)$ diverges. Consider $P(Z=z_n)=2^{-n}$ and $P(Z=z_n,U_i=1)=2^{-n-\frac{2^n}{i}}$ for all $U_i$ (the rest of the $ZU_i$ probability distribution can be filled in arbitrarily). Here, as $i$ becomes larger, $P(U_i=1)$ becomes closer to 1, but
\[P(Z=z_n)\log\left(\frac{P(Z=z_n)}{P(Z=z_n,U_i=1)}\right)=\frac{1}{i}\]
so if $\mathcal{Z}$ is the range of $Z$,
\[D_{KL}(ZK,ZU_i)=\frac{|\mathcal{Z}|}{i}=\infty.\] 
\end{remark}

\begin{lemma}
\label{lem3.6}
Let $U_i$ be a sequence of random variables and let $Z$ be an arbitrary random variable. Suppose $C_i$ is a sequence of channels whose actions are denoted by a bar. Then for all $i$, $D(\overline{ZU_i}, \overline{ZK}) \leq D(ZU_i, ZK)$.
\end{lemma}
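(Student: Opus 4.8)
The plan is to recognize this lemma as the data processing inequality for trace distance: every channel is a contraction with respect to $D$, applied here with the two inputs $ZU_i$ and $ZK$ and the common channel $C_i$.

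First I would fix notation. Using the convention that $K$ has the same range as $U_i$, write the joint distribution of $ZU_i$ as a vector $\{p_k\}$ indexed by pairs $k = (z,u) \in \mathcal{Z}\times\mathcal{U}_i$, and the joint distribution of $ZK$ as $\{q_k\}$ over the same index set. The channel $C_i$ is a stochastic matrix with nonnegative entries $P(j \mid k)$ satisfying $\sum_j P(j\mid k) = 1$ for every input $k$. Then $\overline{ZU_i}$ has distribution $p'_j := \sum_k P(j\mid k)\, p_k$ and $\overline{ZK}$ has distribution $q'_j := \sum_k P(j\mid k)\, q_k$.

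The estimate is then immediate:
\begin{align*}
D(\overline{ZU_i}, \overline{ZK}) &= \frac{1}{2}\sum_j \left|\sum_k P(j\mid k)(p_k - q_k)\right| \\
&\leq \frac{1}{2}\sum_j \sum_k P(j\mid k)\, |p_k - q_k| \\
&= \frac{1}{2}\sum_k |p_k - q_k| \sum_j P(j\mid k) = D(ZU_i, ZK),
\end{align*}
where the inequality is the triangle inequality and the last line uses $\sum_j P(j\mid k) = 1$.

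The only point needing a word of care is the interchange of the order of summation in $\sum_j \sum_k P(j\mid k)|p_k - q_k|$, which is a doubly-infinite sum; since every term is nonnegative, the rearrangement is justified (Tonelli's theorem, or simply that nonnegative series may be summed in any order). So there is no real obstacle: the content is entirely the triangle inequality together with the row-normalization of a stochastic matrix. I would also note, in contrast to Lemmas \ref{lem3.2} and \ref{lem3.3}, that this argument uses no special property of $K$ whatsoever — it holds for any pair of distributions fed through the same channel — which is exactly why it can be used as the third implication in the chain leading to Theorem \ref{maintheorem}.
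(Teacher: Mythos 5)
Your proof is correct and follows essentially the same route as the paper: both establish the contractivity of trace distance under a stochastic channel via the triangle inequality together with the fact that a stochastic matrix preserves the sum of the entries of a (nonnegative) vector. The paper takes a short detour through decomposing $\vec{x}-\vec{y}$ into its positive and negative parts, whereas you apply the triangle inequality directly to $\sum_k P(j\mid k)(p_k-q_k)$ and swap the order of summation, but the underlying argument is identical.
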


\begin{proof}
The proof is similar to that of the analogous quantum result, proven in \cite{mikeike}, that trace-preserving quantum operations are contractive.

For ease of writing let $X=ZU_i$ and $Y=ZK$. We can view the probability distributions $X$ and $Y$ as vectors of their probabilities ($\vec{x}$ and $\vec{y}$) and view the channel $C_i$ as a stochastic matrix which we denote $A$. We also let a subscript $i$ on a vector enclosed by parentheses (e.g. $(\vec{v})_i$) denote the $i$th component of the vector.

Using this notation, we have that
\[D(X,Y) = \sum_{i \text{ with } (\vec{x})_i-(\vec{y})_i>0} (\vec{x})_i-(\vec{y})_i = \sum_{i \text{ with } (\vec{x}-\vec{y})_i>0} (\vec{x}-\vec{y})_i. \]

Consider the vector $\vec{x}-\vec{y}$. We decompose this vector into its positive and negative components as follows. Let $\vec{a}$ be the vector defined by $\left(\vec{a}\right)_i = \max(0, \left(\vec{x}\right)_i-\left(\vec{y}\right)_i)$. Similarly, let $\vec{b}$ be the vector defined by $\left(\vec{b}\right)_i = \max(0, \left(\vec{y}\right)_i-\left(\vec{x}\right)_i)$. By definition, $(\vec{a})_i \geq 0$ for all $i$ and $\left(\vec{b}\right)_i \geq 0$ for all $i$. Therefore
\[D(X,Y) = \frac{1}{2} \left( \sum_i (\vec{a})_i + \sum_i (\vec{b})_i \right) .\]

We now prove the lemma:
\begin{align*}
    D(\overline{X}, \overline{Y}) &= \frac{1}{2} \sum_i |(A\vec{x})_i - (A\vec{y})_i| \\
    &\leq \frac{1}{2} \sum_i |(A\vec{a})_i| + |(A\vec{b})_i| \\
    &= \frac{1}{2} \sum_i (A\vec{a})_i + \frac{1}{2} \sum_i (A\vec{b})_i \\
    &= \frac{1}{2} \sum_i (\vec{a})_i + \frac{1}{2} \sum_i (\vec{b})_i \\
    &= D(X,Y)
\end{align*}
where the second to last step follows because the columns of $A$ sum to $1$ (as it is stochastic) and therefore $A$ preserves the sum of the elements of a vector.
\end{proof}

\begin{lemma}
\label{lem3.7} 
Given $P_{XYZ}$, we have that
$$\lim_{i \to \infty} D\left(XY\overline{ZU_i}, XY\overline{ZK}\right) = 0 \implies \lim_{i \to \infty} I\left(X:Y | \overline{ZU_i}\right) - I\left(X:Y|\overline{ZK}\right) = 0.$$
\end{lemma}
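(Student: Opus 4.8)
The plan is to reduce the implication to the continuity of conditional entropy under the trace distance, where the essential structural fact is that the pair $(X,Y)$ has the \emph{same} joint law in the two distributions being compared: forming $U_i$ from $XYZ$, adjoining the constant variable $K$, and applying the channel $C_i$ all act only on the last coordinate, so the $XY$-marginal of $XY\overline{ZU_i}$ and of $XY\overline{ZK}$ is in both cases the original $P_{XY}$. Consequently $H(X),H(Y),H(XY)$ are finite, and for any random variable $W$ the identity
\[ I(X:Y\mid W) = H(X\mid W) + H(Y\mid W) - H(XY\mid W) \]
holds with every term lying in $[0,H(XY)]$. Applying this with $W=\overline{ZU_i}$ and with $W=\overline{ZK}$, the conclusion reduces to showing that for each $V\in\{X,Y,XY\}$,
\[ \lim_{i\to\infty}\Big( H(V\mid\overline{ZU_i}) - H(V\mid\overline{ZK}) \Big)=0, \]
and, by marginalizing the hypothesis, $D\big(V\overline{ZU_i},\,V\overline{ZK}\big)\le D\big(XY\overline{ZU_i},\,XY\overline{ZK}\big)\to 0$.

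The difficulty is that $V$ (and also $\overline{ZU_i}$) may have countably infinite range, where entropy is not continuous in trace distance. To get around this I would use a coarsening argument: since $H(V)<\infty$, given $\delta>0$ one can choose a finite subset $F$ of the range of $V$ so that the coarsened variable $V'$ --- equal to $V$ on $F$ and equal to a single extra symbol off $F$ --- satisfies $H(V\mid V')<\delta$, because the entropy carried by the tail $\{v\notin F\}$ tends to $0$. Since $V'$ is a deterministic function of $V$, for every $W$ one has $H(V\mid W)=H(V'\mid W)+H(V\mid V'W)$, whence $H(V'\mid W)\le H(V\mid W)\le H(V'\mid W)+\delta$; so $H(V\mid W)$ and $H(V'\mid W)$ differ by less than $\delta$ \emph{uniformly in} $W$. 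This replaces $V$ by the finite-valued $V'$ at the cost of an additive error $2\delta$.

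For finite-valued $V'$ I would invoke a Fannes-type continuity estimate for conditional entropy --- the classical analogue of the Alicki--Fannes--Winter inequality --- whose key feature is that $\big|H(V'\mid\overline{ZU_i})-H(V'\mid\overline{ZK})\big|$ is bounded by a function of $D\big(V'\overline{ZU_i},\,V'\overline{ZK}\big)$ that depends only on the size of the range of $V'$ and not on the (possibly infinite) conditioning alphabet. Since $(V',\overline{ZU_i})$ and $(V',\overline{ZK})$ are obtained from $XY\overline{ZU_i}$ and $XY\overline{ZK}$ by one and the same deterministic post-processing, monotonicity of the trace distance gives $D\big(V'\overline{ZU_i},\,V'\overline{ZK}\big)\to 0$, so this estimate forces $H(V'\mid\overline{ZU_i})-H(V'\mid\overline{ZK})\to 0$. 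Letting $i\to\infty$ and then $\delta\to 0$ yields $H(V\mid\overline{ZU_i})-H(V\mid\overline{ZK})\to 0$ for each $V\in\{X,Y,XY\}$, and combining the three limits with the appropriate signs proves the lemma.

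The main obstacle is precisely this countable-alphabet issue. The naive estimate $H(V'\mid W)=H(V'W)-H(W)$ is useless because $W$ itself can have infinite range, which is exactly why one needs a conditional-entropy continuity bound insensitive to the conditioning alphabet; and $H(V\mid W)$ for infinite-range $V$ must first be tamed by coarsening, which is legitimate only thanks to $H(V)<\infty$. Assembling a clean, self-contained version of the conditional-entropy continuity estimate, and checking that the coarsening error and the continuity error are applied in the correct order so that both are genuinely uniform in $i$, is where the real work lies; the rest is bookkeeping with the definitions and with monotonicity of the trace distance under stochastic maps.
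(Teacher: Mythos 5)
Your proposal is correct in outline but takes a genuinely different route from the paper. The paper expands the difference $I(X:Y|\overline{ZU_i})-I(X:Y|\overline{ZK})$ directly as a sum over triples $(x,y,z)$, splits it according to whether the channel-output symbol lies in $\text{Range}(\overline{ZK})$ or in $Z_i^*$, and treats the two parts separately: on the first it invokes the identity $H(XY\overline{ZK})=H(XY\overline{ZU_i}\mid z\in\text{Range}(\overline{ZK}))$ to relate the $q_i$-sums to the $p_i$-sums up to the small mass $S_i$, and on the second it uses concavity of $\log$ together with $S_i\to 0$. You instead decompose $I(X:Y|W)=H(X|W)+H(Y|W)-H(XY|W)$ and reduce everything to continuity of $H(V|W)$ in trace distance for $V\in\{X,Y,XY\}$: a coarsening step, legitimate because $H(XY)<\infty$, replaces the countable $V$ by a finite-valued $V'$ at a uniformly small entropy cost, and a classical Alicki--Fannes--Winter-type estimate for $H(V'|W)$, whose constant depends only on the size of the range of $V'$ and not on the conditioning alphabet, finishes the job (with monotonicity of trace distance under the common post-processing converting the hypothesis into $D(V'\overline{ZU_i},V'\overline{ZK})\to 0$). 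Your route buys modularity and robustness: it depends only on the trace-distance hypothesis and on $H(XY)<\infty$, whereas the paper's calculation additionally relies on the law of $XY\overline{ZU_i}$ restricted to $\text{Range}(\overline{ZK})$ agreeing exactly with the law of $XY\overline{ZK}$, a structural claim that is not a generic consequence of the hypotheses and would need separate justification. The one piece you would still have to write out is the classical conditional AFW bound itself, but the standard argument (mix $P$ and $Q$ with their respective discrepancy distributions, apply concavity and almost-convexity of $H(V'|W)$, and use $0\le H(V'|W)\le \log|\mathcal{V}'|$) gives an estimate of the form $\epsilon\log|\mathcal{V}'|$ plus a binary-entropy term in $\epsilon$, which is alphabet-independent, so this is routine bookkeeping.
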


\begin{proof}
Define the following quantities:
\begin{itemize}
    \item Let $\mathcal{X}$ and $\mathcal{Y}$ denote the ranges of the random variables $X$ and $Y$, respectively. For any other variable $V$, let $\text{Range}(V)$ be the range of $V$.
    \item For all $x\in \mathcal{X}$, $y\in\mathcal{Y}$, $z\in\text{Range}\left(\overline{ZU_i}\right)$, we have $p_i(xyz):=p\left(X=x, Y=y, \overline{ZK}=z\right)$, and $q_i\left(xyz\right):=p\left(X=x, Y=y, \overline{ZU_i}=z\right)$.
    \item Let $Z^*_i:=\text{Range}\left(\overline{ZU_i}\right) \setminus \text{Range}\left(\overline{ZK}\right)$, and let $S_i:=\sum\limits_{\mathclap{XYZ_i^*}}q_i\left(xyz\right)$. Because $S_i\leq 2\cdot D\left(XY\overline{ZU_i}, XY\overline{ZK}\right)$, $S_i$ tends to 0 as $i$ tends to infinity.
\end{itemize}
Also, if there is a group of random variables (e.g. $X$, $Y$) or sets (e.g. $Z^*_i$) in the index of a summation, then the summation is iterated over all values in the range of each variable or element in the set, where the lowercase variables correspond to each of the uppercase random variables and sets (e.g. $x\in \mathcal{X}$, $y\in\mathcal{Y}$). Expanding the conditional mutual information expressions gives
\[I\left(X:Y|\overline{ZU_i}\right) - I\left(X:Y|\overline{ZK}\right)=\left(H\left(X\overline{ZU_i}\right)-H\left(X\overline{ZK}\right)\right)+\left(H\left(Y\overline{ZU_i}\right)-H\left(Y\overline{ZK}\right)\right)\]
\[-\left(H\left(XY\overline{ZU_i}\right)-H\left(XY\overline{ZK}\right)\right)-\left(H\left(\overline{ZU_i}\right)-H\left(\overline{ZK}\right)\right)=\]
\[=\sum_{\mathclap{XY\overline{ZU_i}}} \left(q_i\left(xyz\right)\log \left(q_i\left(z\right)\right)-p_i\left(xyz\right)\log \left( p_i\left(z\right)\right)\right)+\sum_{\mathclap{XY\overline{ZU_i}}}\left(q_i\left(xyz\right)\log\left( q_i\left(xyz\right)\right)-p_i\left(xyz\right)\log \left(p_i\left(xyz\right)\right)\right)\]
\[-\sum_{\mathclap{XY\overline{ZU_i}}} \left(q_i\left(xyz\right)\log \left(q_i\left(xz\right)\right)-p_i\left(xyz\right)\log \left(p_i\left(xz\right)\right)\right)-\sum_{\mathclap{XY\overline{ZU_i}}} \left(q_i\left(xyz\right)\log \left(q_i\left(yz\right)\right)-p_i\left(xyz\right)\log\left( p_i\left(yz\right)\right)\right).\]
Now, we split the summation into two parts: $z\in \text{Range}(\overline{ZK})$ or $z\in \text{Range}(Z^*_i)$. We now deal with the first part ($z\in \text{Range}(\overline{ZK})$). Note that
\[-H\left(XY\overline{ZK}\right)=\sum_{XY\overline{ZK}}p_i\left(xyz\right)\log \left(p_i\left(xyz\right)\right).\]
However, we also have that
\[-H\left(XY\overline{ZK}\right)=-H\left(XY\overline{ZU_i}|z\in\overline{ZK}\right)=\sum_{XY\overline{ZK}}\frac{q_i\left(xyz\right)}{1-S_i}\log\left(\frac{q_i\left(xyz\right)}{1-S_i}\right)=\]
\[-\log\left(1-S_i\right)+\frac{1}{1-S_i}\sum_{XY\overline{ZK}}q_i\left(xyz\right)\log \left(q_i\left(xyz\right)\right)\implies\]
\[\sum_{XY\overline{ZK}}q_i\left(xyz\right)\log \left(q_i\left(xyz\right)\right)=-\left(1-S_i\right)H\left(XY\overline{ZK}\right)+\left(1-S_i\right)\log\left(1-S_i\right).\]
This means that
\[\sum_{XY\overline{ZK}}\left(q_i\left(xyz\right)\log \left(q_i\left(xyz\right)\right)-p_i\left(xyz\right)\log \left(p_i\left(xyz\right)\right)\right)=S_iH\left(XY\overline{ZK}\right)+\left(1-S_i\right)\log\left(1-S_i\right).\]
This approaches 0 as $i$ goes to infinity because $S_i$ tends to 0. For the other summations, we can repeat this logic with $-H\left(X\overline{ZK}\right)$,  $-H\left(Y\overline{ZK}\right)$, and $-H\left(\overline{ZK}\right)$. This will produce the expressions
\begin{align*}
    &S_iH\left(X\overline{ZK}\right)+\left(1-S_i\right)\log\left(1-S_i\right), \\
    &S_iH\left(Y\overline{ZK}\right)+\left(1-S_i\right)\log\left(1-S_i\right), \\
    &S_iH\left(\overline{ZK}\right)+\left(1-S_i\right)\log\left(1-S_i\right),
\end{align*}
respectively. Therefore, for each of the four summations, the terms of the sum that are a part of $z\in \overline{ZK}$ approach 0. This deals with the part $z\in \overline{ZK}$.

Now, consider all $z\in \text{Range}(Z^*_i)$. Here, we have $p_i\left(\cdot, \cdot, z\right)=0$ because of the definition of $Z^*_i$. This leaves us with
\begin{align*}
    &\sum_{XYZ^*_i}q_i\left(xyz\right)\left(\log \left(q_i\left(z\right)\right)+\log \left(q_i\left(xyz\right)\right)-\log \left(q_i\left(xz\right)\right)-\log \left(q_i\left(yz\right)\right)\right) \\
    &=\sum_{XYZ^*_i}q_i\left(xyz\right)\log\left(\frac{q_i\left(z\right)}{q_i\left(xz\right)}\right)-\sum_{XYZ^*_i}q_i\left(xyz\right)\log\left(\frac{q_i\left(yz\right)}{q_i\left(xyz\right)}\right) \\
    &=\sum_{XZ^*_i}q_i\left(xz\right)\log\left(\frac{q_i\left(z\right)}{q_i\left(xz\right)}\right)-\sum_{XYZ^*_i}q_i\left(xyz\right)\log\left(\frac{q_i\left(yz\right)}{q_i\left(xyz\right)}\right).
\end{align*}
We show that both of these summations tend to 0. For the first summation, for all $x\in \mathcal{X}$, define
\[f\left(x\right):=\sum_{Z^*_i}q_i\left(xz\right)\log\left(\frac{q_i\left(z\right)}{q_i\left(xz\right)}\right).\]
Note that by the concavity of log, we have
\[f\left(x\right)=q_i\left(x\right)\sum_{Z^*_i}\frac{q_i\left(xz\right)}{q_i\left(x\right)}\log\left(\frac{q_i\left(z\right)}{q_i\left(xz\right)}\right)\leq q_i(x)\log\left(\sum_{Z_i^*}\frac{q_i\left(xz\right)}{q_i\left(x\right)}\cdot \frac{q_i\left(z\right)}{q_i\left(xz\right)}\right)= q_i\left(x\right)\log\left(\frac{S_i}{q_i\left(x\right)}\right).\]
This means that
\[0\leq\sum_{XZ^*_i}q_i\left(xz\right)\log\left(\frac{q_i\left(z\right)}{q_i\left(xz\right)}\right)=\sum_{X}f\left(x\right)\leq\sum_X q_i\left(x\right)\log\left(\frac{S_i}{q_i\left(x\right)}\right)=S_i\sum_X \frac{q_i\left(x\right)}{S_i}\log\left(\frac{S_i}{q_i\left(x\right)}\right)\]
\[=S_iH\left(X|z\in Z^*_i\right)\leq S_iH\left(X\right).\]

This means that the first summation tends to 0. The second summation also tends to 0 by replacing all instances of $z$ in the above proof with $yz$. Since all parts of the summations from the expanded conditional mutual information expressions tend to 0, we must have $I\left(X:Y|\overline{ZU_i}\right) - I\left(X:Y|\overline{ZK}\right)$ tends to 0 as well.
\end{proof}

We now prove Theorem \ref{maintheorem}.

\begin{proof}
We have the following sequence of implications, reproduced for clarity.
\begin{align*}
        \lim_{i \to \infty} H(U_i) &= 0 \\
        \implies \lim_{i \to \infty} D(U_i, K) &= 0 \\
        \implies \lim_{i \to \infty} D(XYZU_i, XYZK) &= 0 \\
        \implies \lim_{i \to \infty} D(XY\overline{ZU_i}, XY\overline{ZK}) &= 0 \\
        \implies \lim_{i \to \infty} I(X:Y | \overline{ZU_i}) - I(X:Y|\overline{ZK}) &= 0
\end{align*}
The first implication is a result of Lemma \ref{lem3.2}. Using Lemma \ref{lem3.3} and replacing $Z$ with $XYZ$ gives the second implication. Then, using Lemma \ref{lem3.6} with modified channels $\{C_i'\}$ that are identical to $\{C_i\}$, but they leave $X$ and $Y$ unchanged gives the third implication. Finally, using Lemma \ref{lem3.7} gives us the final implication.
\end{proof}

\section{Implications and Extensions}
\label{implications}
Theorem \ref{maintheorem} is a strengthening of a remark made by Christandl, Renner, and Wolf in \cite{ChReWo03}. In \cite{ChReWo03}, the authors prove that the infimum is a minimum in the definition of the intrinsic information as long as the range of $Z$ is finite. They remark that an argument analogous to that presented in their paper may prove that the infimum is a minimum in the definition of the reduced intrinsic information, with certain conditions on the size of $X$, $Y$, $Z$. This would imply a subcase of our theorem by the argument made briefly at the start of our proof. Unfortunately, it is unknown whether the arguments in \cite{ChReWo03} extend to the reduced intrinsic information measure.
However, our present result is stronger than results that might be obtained through these means because we only require $X$, $Y$, $Z$ to have finite entropy, whereas arguments analogous to those in \cite{ChReWo03} would require variables to have finite ranges.

Another application of Theorem \ref{maintheorem} is demonstrated in the following statement, mentioned briefly at the end Section \ref{intro}.

\begin{theorem}
If bound secrecy exists, then there exists a distribution $P_{XYZ}$ such that \[S(X:Y||Z) \neq I(X:Y\downarrow\downarrow Z).\]
\end{theorem}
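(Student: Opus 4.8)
The plan is to derive the statement as an immediate corollary of Theorem \ref{maintheorem}; all of the substantive work has already been carried out there, so the argument here is essentially a one-line deduction. First I would invoke the hypothesis that bound secrecy exists: by Conjecture \ref{conjecture2}, there is a distribution $P_{XYZ}$ with $I(X:Y \downarrow Z) > 0$ and $S(X:Y||Z) = 0$. Fix this witnessing distribution.

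Next I would apply the contrapositive of Theorem \ref{maintheorem}. Since $I(X:Y \downarrow Z) > 0$, in particular $I(X:Y \downarrow Z) \neq 0$, and Theorem \ref{maintheorem} gives $I(X:Y \downarrow \downarrow Z) \neq 0$. Combining this with the fact (noted after the definition of the reduced intrinsic information, and following from the secret-key-rate lower bound of Theorem 2.6) that the reduced intrinsic information is always nonnegative, we conclude $I(X:Y \downarrow \downarrow Z) > 0$.

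Finally, I would compare the two quantities for this distribution: $S(X:Y||Z) = 0$ while $I(X:Y \downarrow \downarrow Z) > 0$, so $S(X:Y||Z) \neq I(X:Y \downarrow \downarrow Z)$, which is exactly the desired conclusion. The only place any genuine difficulty could arise is in establishing the forward direction of Theorem \ref{maintheorem} itself (the implication $I(X:Y \downarrow \downarrow Z) = 0 \Rightarrow I(X:Y \downarrow Z) = 0$); granting that theorem, there is no obstacle in the present deduction. It is worth remarking that this also shows the two open conjectures are mutually incompatible: bound secrecy and the equality $S(X:Y||Z) = I(X:Y \downarrow \downarrow Z)$ cannot both hold.
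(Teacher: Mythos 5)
Your proposal is correct and matches the paper's proof essentially line for line: take a bound-secret witness, apply the contrapositive of Theorem \ref{maintheorem} to get $I(X:Y\downarrow\downarrow Z) > 0$, and compare with $S(X:Y||Z)=0$. The only difference is that you spell out the nonnegativity step a bit more explicitly, which the paper leaves implicit.
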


\begin{proof}
Let $P_{XYZ}$ be a distribution that is bound secret, so that $I(X:Y\downarrow Z)>0$ and $S(X:Y||Z)=0$. However, by Theorem \ref{maintheorem}, we have $I(X:Y\downarrow Z)>0\implies I(X:Y\downarrow\downarrow Z)>0$. This means that this distribution satisfies $S(X:Y||Z)=0<I(X:Y\downarrow\downarrow Z)$, as desired.
\end{proof}

This theorem implies that at least one of the conjectures \ref{conjecture1} and \ref{conjecture2} is false. Since a significant amount of evidence suggesting the existence of bound secrecy has already been established, we believe that Conjecture \ref{conjecture1} is false. 

Furthermore, the above theorem implies that the approach of showing that a certain distribution is bound secret by computing a nonzero intrinsic information and a reduced intrinsic information of 0 is guaranteed to fail. In order for this approach to work, a property that would make Theorem \ref{maintheorem} false when the property is substituted for the reduced intrinsic information must be used. In particular, this property $f(XYZ)$ should satisfy the following: 
\begin{itemize}
    \item Given $P_{XYZ}$, we have $f(XYZ)\leq I(X:Y\downarrow Z)$, and equality does not always hold.
    \item $f(XYZ)=0$ does not imply $I(X:Y\downarrow Z)=0$.
\end{itemize}

\section{Binarizations and Bound Secrecy}
\label{binarizations}

One possible path for establishing the existence of bound secrecy has been suggested in \cite{binarizedSKR, GiReWo02}, which we now investigate. In \cite{GiReWo02}, the authors suggest that the existence of positive intrinsic information which vanishes upon binarization may be a candidate for bound secrecy. The authors provide an example of a distribution $X_0Y_0Z_0$ such that for all binarizations of $X_0$ and $Y_0$, producing $\overline{X_0}$ and $\overline{Y_0}$ respectively, $I(\overline{X_0}:\overline{Y_0} \downarrow Z_0) = 0$ (Proposition 4) \cite{GiReWo02}. They also show that for \textit{any} distribution $XYZ$, if the secret-key rate $S(X:Y||Z)$ is positive, then for some $N$ there exist binarizations of $X^N$ and $Y^N$ such that $I(\overline{X^N}: \overline{Y^N} \downarrow Z^N) > 0$ (Proposition 5) \cite{GiReWo02}. Therefore, the missing step for establishing bound secrecy for $X_0 Y_0 Z_0$ is the following:

\begin{conjecture}\cite{GiReWo02} \label{conjecture3}
Let $XYZ$ be a distribution. If, for all binary output channels $P_{\overline{X}|X}$ and $P_{\overline{Y}|Y}$ we have $I(\overline{X}: \overline{Y} \downarrow Z) = 0$, then for all $N$, for all binary output channels $P_{\overline{X^N}|X^N}$ and $P_{\overline{Y^N}|Y^N}$, we must have $I(\overline{X^N}: \overline{Y^N} \downarrow Z^N) = 0$.
\end{conjecture}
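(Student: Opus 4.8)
The plan is to prove the implication directly: given binary-output channels $P_{\overline{X^N}|X^N}$ and $P_{\overline{Y^N}|Y^N}$, construct a channel on $Z^N$ that drives $I(\overline{X^N}:\overline{Y^N}|\overline{Z^N})$ to $0$. (Going through the contrapositive looks unpromising, since from one positive-intrinsic-information binarization of $X^N,Y^N$ one would have to build a single-copy binarization of $X,Y$ that beats \emph{every} channel on $Z$, whereas the power-$N$ hypothesis only controls channels on $Z^N$.) First I would split on whether $I(X:Y\downarrow Z)=0$. If it is, the statement is easy: by the Christandl--Renner--Wolf theorem cited in Section \ref{implications} the single-copy infimum is attained (at least for finite $Z$; the general case is a routine approximation), so there is a channel $\phi$ with $X\perp Y\mid\phi(Z)$. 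Applying $\phi$ independently to each copy of $Z$, the triples $(X_j,Y_j,\phi(Z_j))$ stay independent across $j$ and conditionally product within each $j$, which forces $X^N\perp Y^N$ given $(\phi(Z_1),\dots,\phi(Z_N))$, hence $\overline{X^N}\perp\overline{Y^N}$ under the same conditioning for \emph{any} output channels, so $I(\overline{X^N}:\overline{Y^N}\downarrow Z^N)=0$.

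The content therefore lies in the case $I(X:Y\downarrow Z)>0$, which is exactly the case relevant to bound secrecy. Here a product channel on $Z^N$ does not suffice in general — the product-channel argument above would require $X\perp Y\mid\phi(Z)$, which now fails — so Eve must process the $N$ copies of $Z$ jointly. The plan is to lift the construction of Section \ref{iib} to general $N$: have Eve first compute from $Z^N$ a coarse ``pattern'' variable $T$ (built, for the families of interest, from the Z-shaped/erasure-type structure isolated in Section \ref{family}) and reveal it; then, conditioned on each value $T=t$, argue that $(\overline{X^N},\overline{Y^N})$ restricts to a binarization in a class the single-copy hypothesis kills, and apply the corresponding single-copy-style channel within each block of the pattern. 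Composing the two stages yields a single channel on $Z^N$, and checking that it makes $I(\overline{X^N}:\overline{Y^N}|\overline{Z^N})=0$ is then a lengthy but essentially mechanical conditional-entropy computation of the kind carried out for $N=1$.

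The hard part is the step reducing an arbitrary joint binarization, once the pattern $T$ is fixed, to something the single-copy hypothesis controls. The hypothesis constrains only binarizations of $X$ and $Y$ \emph{one copy at a time}, whereas a binary-output channel on $X^N$ is an arbitrary stochastic $\{0,1\}$-valued function of all $N$ coordinates and need not decompose into — or even be dominated by — single-coordinate binarizations, so there is no ``tensorize the $N=1$ theorem'' shortcut. What must really be shown is that the hypothesis, weak as it appears, already forces enough structure on $P_{XYZ}$ (roughly, a conditional-independence structure exposed by \emph{some} processing of $Z$, uniformly over all coarse-grainings) that $Z^N$ can be post-processed to decouple \emph{every} joint binarization at once. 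A naive induction that peels off one copy seems to fail, because conditioning on $Z_N=z_N$ leaves $P_{XYZ}^{\otimes(N-1)}$ tensored with a \emph{correlated} pair $(X_N,Y_N)$, for which the desired property can be violated, so the inductive hypothesis would have to be strengthened in a way that is not yet clear; this is precisely where the row--column transformations and isolated-perturbation arguments of Section \ref{family} are meant to gain traction.
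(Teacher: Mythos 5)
This statement is labeled \emph{Conjecture} \ref{conjecture3} in the paper and is attributed to \cite{GiReWo02}: the paper does \emph{not} prove it, and to my knowledge it remains open. So there is no ``paper's own proof'' for your sketch to be compared against. What the paper does is (i) reduce the conjecture to an equivalent purely probabilistic statement about conditional independence under binarizations (Theorem \ref{326}), and (ii) make partial progress on that statement for two specific candidate distributions (Sections \ref{iib} and \ref{family}), chiefly the $N=1$ case and some $N=2$ structure, while explicitly leaving the general $N$ case unresolved.

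Your sketch is not a proof either, and you are candid about this. The easy branch ($I(X:Y\downarrow Z)=0$) is handled correctly --- in fact you do not even need the Christandl--Renner--Wolf attainment result, since one can take an approximating sequence of channels $\phi_k$ with $I(X:Y|\phi_k(Z))\to 0$, apply $\phi_k^{\otimes N}$, and invoke additivity plus the data-processing inequality to conclude $I(\overline{X^N}:\overline{Y^N}\downarrow Z^N)=0$. But this branch is vacuous for bound secrecy: the interesting instances (e.g.\ the $X_0Y_0Z_0$ of \cite{GiReWo02}) satisfy the conjecture's premise while having $I(X:Y\downarrow Z)>0$. For the hard branch you propose to ``lift the construction of Section \ref{iib},'' but that construction is tied to one particular distribution, whereas Conjecture \ref{conjecture3} is universally quantified over $XYZ$; nothing in the hypothesis tells you the distribution has the Z-shaped/erasure structure your pattern variable $T$ is built from. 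And, as you yourself note, the core difficulty --- showing that the single-copy hypothesis forces enough structure on $P_{XYZ}$ to decouple an arbitrary \emph{joint} binarization of $X^N$, which need not decompose coordinatewise --- is exactly the step you leave unresolved (``in a way that is not yet clear''). That acknowledged gap is the gap. A useful next step, rather than attempting the entropy bookkeeping directly, would be to work through the paper's reduction in Theorem \ref{326}, which strips away the information-theoretic machinery and exposes the problem as a finite system of weighted-average equalities, and then see whether your pattern-plus-block idea can be stated precisely in that language.
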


In fact, it is only necessary to prove Conjecture \ref{conjecture3} for the specific distribution $X_0 Y_0 Z_0$, which we will investigate in the next section. In this section, we reduce Conjecture \ref{conjecture3} to a much simpler statement which, if proven, would establish Conjecture \ref{conjecture3} and thereby prove the existence of bound secrecy. In the statement of the theorem, the symbol $\perp\!\!\!\perp$ is used to denote independence of random variables.

\begin{theorem}\label{326}
Conjecture \ref{conjecture3} is equivalent to the following:
\[ \forall \overline{X}, \overline{Y}, \exists \overline{Z} \text{ such that } (\overline{X} \perp\!\!\!\perp \overline{Y}) | \overline{Z} \implies \forall N, \forall \overline{X^N}, \overline{Y^N}, ~ \exists \overline{Z^N} \text{ such that } (\overline{X^N} \perp\!\!\!\perp \overline{Y^N}) | \overline{Z^N} \]
where the channels processing $X, Y, X^N, Y^N$ are assumed to be binarizations.
\end{theorem}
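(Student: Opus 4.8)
The plan is to reduce Theorem \ref{326} to a single ``pointwise'' equivalence and then apply it twice. For an arbitrary countable, finite-entropy distribution $P_{WVU}$, write $\mathcal P(WVU)$ for ``for all binarizations $\overline W,\overline V$ one has $I(\overline W:\overline V\downarrow U)=0$'' and $\mathcal Q(WVU)$ for ``for all binarizations $\overline W,\overline V$ there is a channel $\overline U=C(U)$ with $(\overline W\perp\!\!\!\perp\overline V)\mid\overline U$''; throughout, $\overline Z$ and $\overline{Z^N}$ in the statement of Theorem \ref{326} are read as channels applied to $Z$ and $Z^N$, matching the $\downarrow$ on the left-hand side. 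I would prove $\mathcal P(WVU)\Longleftrightarrow\mathcal Q(WVU)$ for every such $WVU$. Granting this, apply it with $WVU:=XYZ$ and with $WVU:=X^NY^NZ^N$ for each $N$: the antecedent of Conjecture \ref{conjecture3} is $\mathcal P(XYZ)$ and its consequent is ``$\forall N,\ \mathcal P(X^NY^NZ^N)$'', while the antecedent and consequent of the displayed implication in Theorem \ref{326} are the corresponding $\mathcal Q$'s, so the two implications are obtained from one another by substituting equivalent propositions and are therefore equivalent.

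The direction $\mathcal Q\Rightarrow\mathcal P$ is immediate: if $(\overline W\perp\!\!\!\perp\overline V)\mid\overline U$ for a channel $\overline U=C(U)$, then $I(\overline W:\overline V\mid\overline U)=0$, so the infimum defining $I(\overline W:\overline V\downarrow U)$ is at most $0$, and since intrinsic information is nonnegative it equals $0$.

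All the content is in $\mathcal P\Rightarrow\mathcal Q$. Fix a binarization pair $\overline W,\overline V$; by hypothesis $I(\overline W:\overline V\downarrow U)=\inf_C I(\overline W:\overline V\mid C(U))=0$, and since a conditional mutual information vanishes exactly when the triple is conditionally independent, what must be shown is that this infimum is \emph{attained} by some channel $C$ on $U$. When $U$ has finite range this is precisely the theorem of Christandl, Renner, and Wolf recalled in Section \ref{implications}; in particular it covers the distribution $X_0Y_0Z_0$ of Section \ref{binarizations} and all of its (necessarily finite) powers $X_0^NY_0^NZ_0^N$, which is the only instance needed to attack bound secrecy through Conjecture \ref{conjecture3}. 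For infinite $U$ I would argue by compactness after a size reduction: first reduce to channels whose output alphabet has size at most $|\overline W|\cdot|\overline V|=4$ (a Carath\'eodory-type bound on the optimal processing of $U$), so that along a minimizing sequence $C_n$ all of the induced laws $P_{\overline W\,\overline V\,C_n(U)}$ lie in the fixed simplex $\Delta(\{0,1\}\times\{0,1\}\times\{1,\dots,4\})$; the channels themselves lie in $\prod_{u\in\mathcal U}\Delta_4$, compact and metrizable by Tychonoff since $\mathcal U$ is countable, so a subsequence converges in the product topology to some $C_\infty$; dominated convergence, with summable majorant $P(\overline W{=}w,\overline V{=}v,U{=}u)$, gives $P_{\overline W\,\overline V\,C_n(U)}\to P_{\overline W\,\overline V\,C_\infty(U)}$; and conditional mutual information, a continuous function of a distribution on a fixed finite alphabet, is $0$ at $C_\infty$, so $\overline U:=C_\infty(U)$ works.

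The main obstacle is exactly the infinite-$U$ step, and within it the claim that an optimal (or arbitrarily good) channel on $U$ needs at most $|\overline W|\,|\overline V|$ output symbols. Without such a bound a minimizing sequence of channels can have output alphabets of unbounded size, there is no fixed finite-dimensional space in which to extract a limit, and one runs into the still-open question of whether the intrinsic-information infimum is attained for infinite-range conditioning variables. With the bound, the rest is the routine compactness-and-continuity argument above. Note also that if one only wants what the bound-secrecy program actually needs --- Conjecture \ref{conjecture3} for the single distribution $X_0Y_0Z_0$ --- this obstacle disappears entirely, since that distribution and its powers are finite and the Christandl--Renner--Wolf result applies verbatim.
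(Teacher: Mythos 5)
Your proof follows the same skeleton as the paper's: translate $I(\overline{X}:\overline{Y}\downarrow Z)=0$ into the existence of a channel $\overline{Z}$ with $(\overline{X}\perp\!\!\!\perp\overline{Y})\mid\overline{Z}$ via the lemma that conditional mutual information vanishes iff the triple is conditionally independent, and then substitute this equivalent form into the antecedent and consequent of Conjecture~\ref{conjecture3}. Where you differ is that you take the attainment issue seriously. The paper asserts ``by the definition of the intrinsic information'' that $I(\overline{X}:\overline{Y}\downarrow Z)=0$ holds iff some $\overline{Z}$ gives $I(\overline{X}:\overline{Y}\mid\overline{Z})=0$; this silently promotes the infimum in the definition to a minimum. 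As you note, that step is only justified by \cite{ChReWo03} when the conditioning variable has finite range, so the paper's proof carries a real, unacknowledged gap for general countable $Z$, exactly the one your $\mathcal{P}\Rightarrow\mathcal{Q}$ direction has to confront.

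Your proposed fix for infinite $U$ --- truncate the output alphabet to size $|\overline{W}|\cdot|\overline{V}|$, then extract a limiting channel in the product topology --- is the right shape of argument, but the Carath\'eodory-type alphabet bound you invoke is not established; the bound proved in \cite{ChReWo03} is in terms of $|\mathcal{Z}|$, which is of no use when $\mathcal{Z}$ is infinite, and you correctly flag this as the remaining obstacle rather than a solved step. The closing observation you make is the important one: for what this section actually needs, namely Conjecture~\ref{conjecture3} for the finite distribution $X_0Y_0Z_0$ and its finite powers, attainment is known, so both your argument and the paper's go through there, even though the fully general equivalence as stated rests on an attainment claim the paper does not justify.
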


The statement of the theorem is noteable because it makes no reference to information-theoretic quantities: it is purely a statement about probabilities. To prove Theorem \ref{326}, we need the following lemma linking information and probability.

\begin{lemma}
Given random variables $X$, $Y$, $Z$, we have $I(X:Y|Z) = 0$ if and only if $(X \perp\!\!\!\perp Y) | Z$.
\end{lemma}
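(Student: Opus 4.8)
The plan is to reduce the claim to the classical fact that the mutual information of two variables vanishes precisely when they are independent, and to apply this fiberwise over the values of $Z$.

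First I would expand the conditional mutual information as a $Z$-average of ordinary mutual informations,
\[ I(X:Y|Z) = \sum_{z} p(z)\, I(X:Y\mid Z=z), \qquad I(X:Y\mid Z=z) = \sum_{x,y} p(x,y\mid z)\log\frac{p(x,y\mid z)}{p(x\mid z)\,p(y\mid z)}, \]
where the sums run over those $z$ (resp.\ $x,y$) occurring with positive probability. Under the standing hypothesis that all entropies are finite, each term is well defined, finite, and these rearrangements are legitimate.

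Second, I would establish the key inequality $I(X:Y\mid Z=z)\ge 0$ with equality if and only if $p(x,y\mid z)=p(x\mid z)\,p(y\mid z)$ for all $x,y$. This is Gibbs' inequality: writing the negative of the sum and applying Jensen's inequality to the concave function $\log$,
\[ -I(X:Y\mid Z=z) = \sum_{x,y} p(x,y\mid z)\log\frac{p(x\mid z)\,p(y\mid z)}{p(x,y\mid z)} \le \log\!\left(\sum_{x,y} p(x\mid z)\,p(y\mid z)\right) = 0, \]
and equality in Jensen forces the ratio $\frac{p(x\mid z)\,p(y\mid z)}{p(x,y\mid z)}$ to be constant on the support of $p(\cdot,\cdot\mid z)$, hence equal to $1$, i.e.\ conditional independence on that fiber.

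Third, since $p(z)\ge 0$ and every fiber term is nonnegative, $I(X:Y|Z)=0$ holds if and only if $I(X:Y\mid Z=z)=0$ for every $z$ with $p(z)>0$, which by the previous step is exactly the statement that $p(x,y\mid z)=p(x\mid z)\,p(y\mid z)$ whenever $p(z)>0$ --- the definition of $(X\perp\!\!\!\perp Y)\mid Z$. Both directions of the lemma follow simultaneously. I do not anticipate a genuine obstacle; the only mild care needed is the convergence and rearrangement of the countable double sum, which is controlled by the finite-entropy assumption, and the bookkeeping for values $z$ with $p(z)=0$, which contribute nothing and are excluded from the conditioning statement by convention.
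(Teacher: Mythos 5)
Your proof is correct, and it takes a recognizably different technical route from the paper's. The paper keeps the full triple sum $\sum_{xyz}$ and applies the pointwise tangent-line bound $\log_2 x \le \frac{x-1}{\ln 2}$ to each summand, then telescopes the resulting difference of products to zero; equality in the tangent-line bound then forces the ratio to be $1$ everywhere. You instead first decompose $I(X:Y|Z)$ as a $p(z)$-average of fiber quantities $I(X:Y\mid Z=z)$, then apply Jensen's inequality for the concave $\log$ on each fiber (the Gibbs / nonnegativity-of-KL argument), reading off the equality condition as constancy of the ratio on each fiber. Both arguments ultimately exploit concavity of the logarithm, but the tools are genuinely distinct: the paper's version is more elementary (no Jensen, just a scalar inequality) at the cost of a larger single sum, while your fiberwise decomposition makes the nonnegativity-of-each-piece structure explicit and connects cleanly to the definition of conditional independence one $z$ at a time. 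One small point of care in your write-up: after Jensen, $\sum_{x,y} p(x\mid z)\,p(y\mid z)$ taken over the support of $p(\cdot,\cdot\mid z)$ can be strictly less than $1$, so that final step is $\le 0$ rather than $=0$; this only helps the direction you need, and the equality analysis already forces the support to be full, so the argument is unaffected, but the display as written asserts slightly more than is immediate.
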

\begin{proof}
In this proof, for ease of writing we let $P(x)$ denote $P(X=x)$ for any $x \in \mathcal{X}$, and similarly for $Y$ and $Z$. To prove the forward direction, we have
\begin{align*}
    0 = -I(X:Y|Z) &= \sum_{xyz} P(x,y|z) \log \left( \frac{P(x|z) P(y|z)}{P(x, y | z)} \right) \\
    &\leq \frac{1}{\ln 2} \sum_{xyz} P(x, y|z) \left( \frac{P(x|z) P(y|z)}{P(x, y | z)} - 1\right) ~~~~~~~~~~~ \text{ since } \log_2 x \leq \frac{x-1}{\ln 2} \\
    &= \frac{1}{\ln 2} \sum_{xyz} \left(P(x|z) P(y|z) - P(x, y | z) \right) \\
    &= \frac{1}{\ln 2} \sum_{z} \left( \left(\sum_x P(x|z) \right) \left( \sum_y P(y|z) \right) - \sum_{xy} P(x,y|z) \right)
\end{align*}
Observe that for any $z \in \mathcal{Z}$, the sums $\sum\limits_x P(x|z)$, $\sum\limits_y P(y|z)$, and $\sum\limits_{xy} P(x,y|z)$ simply sum all the values in the conditional distribution $(XY)|Z=z$. So these sums all equal $1$, and the last line in the chain of expressions above is $0$. Since both sides of the above chain are $0$, the inequality must be an equality. Since $\log_2 x = \frac{x-1}{\ln 2}$ if and only if $x=1$, the expression inside the logarithm must always be $1$, which means \[P(X=x|Z=z) P(Y=y|Z=z) = P(X=x, Y=y|Z=z) \] for all $x, y, z$. Thus $(X \perp\!\!\!\perp Y) | Z$.

For the reverse direction, we simply note that \[ I(X:Y|Z) = \sum_{xyz} - P(X=x, Y=y|Z=z) \log \left( \frac{P(X=x|Z=z) P(Y=y|Z=z)}{P(X=x, Y=y | Z=z)} \right) \]
and if $(X \perp\!\!\!\perp Y) |Z$, then the expression inside the logarithm is always $1$, so each term of the sum becomes $0$, and $I(X:Y|Z) = 0$.
\end{proof}

We now prove Theorem \ref{326}.

\begin{proof}
As in the statement of the theorem, all channels that process $X, Y, X^N, Y^N$ are assumed to be binarizations. We observe that by the definition of the intrinsic information, \[ \forall \overline{X}, \overline{Y}, ~ I(\overline{X}:\overline{Y} \downarrow Z) = 0 \Longleftrightarrow \forall \overline{X}, \overline{Y}, \exists \overline{Z} \text{ such that } I(\overline{X}:\overline{Y} | \overline{Z}) = 0. \]
Then using the lemma, we have \[ \forall \overline{X}, \overline{Y}, \exists \overline{Z} \text{ such that } I(\overline{X}:\overline{Y} | \overline{Z}) = 0 \Longleftrightarrow \forall \overline{X}, \overline{Y}, \exists \overline{Z} \text{ such that } (\overline{X} \perp\!\!\!\perp \overline{Y}) | \overline{Z}. \]
We can repeat the logic for $X^N$, $Y^N$, $Z^N$. So \[\forall \overline{X}, \overline{Y}, ~ I(\overline{X}:\overline{Y} \downarrow Z) = 0 \implies \forall N, \forall \overline{X^N}, \overline{Y^N}, ~ I(\overline{X^N}:\overline{Y^N} \downarrow Z^N) = 0 \] is equivalent \[ \forall \overline{X}, \overline{Y}, \exists \overline{Z} \text{ such that } (\overline{X} \perp\!\!\!\perp \overline{Y}) | \overline{Z} \implies \forall N, \forall \overline{X^N}, \overline{Y^N}, ~ \exists \overline{Z^N} \text{ such that } (\overline{X^N} \perp\!\!\!\perp \overline{Y^N}) | \overline{Z^N} \] which is the desired result.
\end{proof}

Using Theorem \ref{326}, we can reduce the problem of bound secrecy to a statement simply about probability distributions and independence.

\section{Independence-Inducing Binarizations}
\label{iib}

We present some progress on proving Conjecture \ref{conjecture3} for the specific distribution $XYZ$, as introduced in \cite{GiReWo02}. This would be sufficient to establish bound secrecy for the distribution, as shown below. 
\begin{center}
\begin{tabular}{|l||c|c|c|}
    \hline 
    ~ $X$  & 1 & 2 & 3  \\ 
    $Y (Z)$ &&& \\
    \hhline{|=#=|=|=|}
    1 & 2 (0) & 4 (1) & 1 (2) \\ \hline
    2 & 1 (3) & 2 (0) & 4 (4) \\ \hline
    3 & 4 (5) & 1 (6) & 2 (0) \\
    \hline
\end{tabular}
\end{center}

For this distribution, the value of $Z$ is determined by the values of $X$ and $Y$, and is indicated by the number in parentheses in the cell. The unnormalized probability for that $xyz$ triplet is given by the number not in parentheses.

One method for proving the statement in Theorem \ref{326} for this distribution is by strengthening it to the following statement and not allowing Alice to binarize, as in the following conjecture.

\begin{conjecture}\label{strong326}
For the distribution $XYZ$, for any $N \geq 1$ we have
\[ \forall \overline{Y^N}, ~ \exists \overline{Z^N} \text{ such that } (X^N \perp\!\!\!\perp \overline{Y^N}) | \overline{Z^N} \]
where the channel processing $Y^N$ is assumed to be a binarization.
\end{conjecture}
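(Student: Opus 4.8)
\emph{Proof approach.} The plan is to settle the case $N=1$ by an explicit construction of $\overline{Z}$ and then to indicate how the same mechanism should scale. Write $\beta_y := P(\overline{Y}=1\mid Y=y)$ for the binarization of $Y$. The structural point is that $Z$ is a deterministic function of $(X,Y)$ which identifies $(X,Y)$ exactly on the six off-diagonal cells but reveals only ``$X=Y$, uniform on $\{1,2,3\}$'' on the event $\{Z=0\}$ (which carries weight $2$ on each of $(1,1),(2,2),(3,3)$). So I would take $\overline{Z}$ to send each off-diagonal atom $z$ (of weight $w_z$) to its own singleton bin with probability $1-s_z/w_z$ and to one common ``special bin'' $B_0$ with probability $s_z/w_z$, and send all of $\{Z=0\}$ into $B_0$. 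Conditioning on any singleton makes $X$ constant, so independence there is automatic, and the whole problem collapses to choosing the fractions $0\le s_z\le w_z$ so that $X\perp\!\!\!\perp\overline{Y}$ conditioned on $\overline{Z}=B_0$.

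Inside $B_0$ the conditional law of $\overline{Y}$ given $X=j$ is Bernoulli with parameter $q_j$, a weighted average of $\beta_j$ (forced weight $2$, from the diagonal cell $(j,j)$) and of the $\beta_{y(z)}$ over the off-diagonal atoms with first coordinate $j$ that are routed into $B_0$: $q_1$ averages $\beta_1$ (weight $2$), $\beta_2$ (weight $\le 1$), $\beta_3$ (weight $\le 4$); $q_2$ averages $\beta_2$ (weight $2$), $\beta_1$ (weight $\le 4$), $\beta_3$ (weight $\le 1$); $q_3$ averages $\beta_3$ (weight $2$), $\beta_1$ (weight $\le 1$), $\beta_2$ (weight $\le 4$). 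The product condition on $B_0$ is exactly $q_1=q_2=q_3$. As the admissible fractions vary, each $q_j$ sweeps out a closed interval $I_j\ni\beta_j$ (a continuous image of a box). A direct inspection of admissible weights gives $\tfrac{2\beta_1+\beta_2}{3}\in I_1\cap I_2$, $\tfrac{2\beta_2+\beta_3}{3}\in I_2\cap I_3$, and $\tfrac{\beta_1+2\beta_3}{3}\in I_1\cap I_3$, so the three intervals pairwise intersect; the one-dimensional Helly theorem then produces a common value $q\in I_1\cap I_2\cap I_3$, and the witnessing fractions for $q$ (which involve disjoint blocks of the $s_z$) complete the $N=1$ case.

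For general $N$ the intended construction is the same in spirit: $Z^N$ resolves $(X_i,Y_i)$ on the coordinates with $Z_i\ne 0$ and leaves ``$X_i=Y_i$ uniform'' on those with $Z_i=0$; one routes the mass on which enough coordinates are kept to pin down $X^N$ into singleton-type bins (independence free there) and routes the rest into special bins tuned to cancel the residual correlation. The main obstacle — and the reason the $N=1$ argument does not simply tensorize — is that a binarization $\overline{Y^N}$ is an arbitrary stochastic map $\{1,2,3\}^N\to\{0,1\}$, not a product of single-coordinate binarizations, so the conditional law of $\overline{Y^N}$ given $X^N$ does not split coordinatewise and the clean ``equalize a few Bernoullis'' reduction becomes a large coupled linear-feasibility problem. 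I would attack this either by induction on the number of zero coordinates of $Z^N$ (condition on $Z_1$: the case $Z_1\ne 0$ reduces to the $(N-1)$-fold problem for a binarization of $Y_2\cdots Y_N$, while $Z_1=0$ is the genuinely new case), or by reducing to the extreme points of the binarization polytope — the deterministic maps $\overline{Y^N}=\mathbf{1}[Y^N\in S]$ — and showing the independence-inducing property survives the convex combinations that rebuild a general binarization. Proving feasibility of that constraint system (or the convexity step) is where I expect the real work to be; the $\mathbb{Z}/3$ cyclic symmetry of the distribution and the comfortable slack between the forced weights and the available atom weights — already the engine behind the $N=1$ Helly step — are the levers I would use.
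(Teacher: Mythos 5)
The statement you are trying to prove is a \emph{conjecture} in the paper, not a theorem: the authors establish only the $N=1$ case (an explicit construction via the ``independence target value,'' improving on the earlier topological argument of Gisin--Renner--Wolf) and then explicitly document obstructions at $N=2$. Your proposal is in the same position --- you settle $N=1$ and sketch, without carrying out, two strategies for general $N$ --- so you have not proved the conjecture either, and you should not present this as a proof of the full statement.

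Your $N=1$ argument is correct and is essentially the paper's argument in a different dress. Your bookkeeping of the weights is right: with $\beta_y := P(\overline{Y}=1\mid Y=y)$ the three intervals $I_1,I_2,I_3$ are the ranges of $\frac{2\beta_1+s\beta_2+u\beta_3}{2+s+u}$ with $s\in[0,1],u\in[0,4]$, $\frac{2\beta_2+s\beta_1+u\beta_3}{2+s+u}$ with $s\in[0,4],u\in[0,1]$, and $\frac{2\beta_3+s\beta_1+u\beta_2}{2+s+u}$ with $s\in[0,1],u\in[0,4]$, and your three pairwise-intersection witnesses $\frac{2\beta_1+\beta_2}{3}$, $\frac{2\beta_2+\beta_3}{3}$, $\frac{\beta_1+2\beta_3}{3}$ do lie in the claimed pairs (take $(s,u)=(1,0)$ and $(4,0)$; $(0,1)$ and $(0,4)$; $(0,4)$ and $(1,0)$ respectively). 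Because the three parameter blocks are disjoint, 1D Helly then produces a working $\overline{Z}$. Note, though, that the median of your three pairwise witnesses is automatically in all three intervals (the middle of three points, each lying in two of the three intervals, lies in the interval that the other two bracket), and this median is exactly the paper's ITV $\tau$. So the paper's ``explicit'' construction and your ``Helly'' argument are the same proof; the paper simply names the canonical common point.

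For $N\ge 2$, your proposal identifies the real difficulty correctly (non-product binarizations of $Y^N$), but the two strategies you propose each face a concrete obstacle you do not address. The extreme-point reduction does not obviously work because the property ``$\exists\,\overline{Z^N}$ making $X^N$ and $\overline{Y^N}$ conditionally independent'' is not preserved under convex combinations of Bob's channel: the witnessing $\overline{Z^N}$ can differ across extreme points, and conditional independence is not a convex constraint in the pair (Bob's channel, Eve's channel). The induction on the number of zero $Z$-coordinates does not cleanly reduce either, for exactly the reason you flag: a general $\overline{Y^N}$ does not factor, so conditioning on $Z_1$ does not give an $(N-1)$-fold instance of the same problem. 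In fact the paper proves (their Theorem on the binarization $P_{\overline{Y^2}|Y^2}(0,i)=1$ iff $i\in\{11,33\}$, and the Mathematica-checked counterexample to the $\tau_2$ target values) that the most direct generalizations of your $N=1$ mechanism already fail at $N=2$, so any correct approach must depart more substantially from the single-copy argument than your sketch suggests.
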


Note that this conjecture implies Theorem \ref{326} because if Alice is not allowed to binarize and Eve can still erase correlation by processing $Z^N$, then there would still be no correlation even if Alice binarized her variable. To prove Conjecture \ref{strong326}, we must show that for any $N$, for any binarization that Bob chooses, Eve is able to process her variable such that Alice and Bob's variables are independent given Eve's information. Here, we primarily investigate the cases $N=1$ and $N=2$.

In the case $N=1$, it has been proven that for all binarizations $\overline{Y}$ of $Y$, Eve can always find a $\overline{Z}$ such that $X \perp\!\!\!\perp \overline{Y} | \overline{Z}$ (Proposition 4 of \cite{GiReWo02}). We have found an explicit construction of the map $\overline{Z}$, based on the following value.

\begin{definition}
We define the \textit{independence target value} (ITV) $\tau(x,y,z)$ for any $x,y,z \in \mathbb{R}$ as the median of $\frac{2x+1y+0z}{3}$, $\frac{1x+0y+2z}{3}$, and $\frac{0x+2y+1z}{3}$.
\end{definition}

Bob's map can be defined using the three numbers $P_{\overline{Y}|Y}(\overline{0},1)=r$, $P_{\overline{Y}|Y}(\overline{0},2)=s$, and $P_{\overline{Y}|Y}(\overline{0},3)=t$. Since Bob's map is a binarization, we have that $P_{\overline{Y}|Y}(\overline{1},1)=1-r$, $P_{\overline{Y}|Y}(\overline{1},2) = 1-s$, and $P_{\overline{Y}|Y}(\overline{1},3) = 1-t$. The probability distribution $X\overline{Y}Z$ is as follows, using the same notation as before:

\begin{center}
\begin{tabular}{|C{0.9cm}||c|c|c|}
    \hline 
    ~ $X$  & 1 & 2 & 3  \\ 
    $\overline{Y} (Z)$ &&& \\
    \hhline{|=#=|=|=|}
     & $2r$ (0) & $2s$ (0) & $2t$ (0) \\
    {\centering $\overline{0}$} & $s$ (3) & $4r$ (1) & $r$ (2) \\
    & $4t$ (5) & $t$ (6) & $4s$ (4) \\ \hline
     & $2-2r$ (0) & $2-2s$ (0) & $2-2t$ (0) \\
    $\overline{1}$ & $1-s$ (3) & $4-4r$ (1) & $1-r$ (2) \\
    & $4-4t$ (5) & $1-t$ (6) & $4-4s$ (4) \\ \hline
\end{tabular}
\end{center}

As mentioned in \cite{GiReWo02}, if Eve receives $z \neq 0$, she knows what $X$ is, meaning that $X|Z=z$ is constant and $X \perp\!\!\!\perp \overline{Y} | Z=z$. Therefore, we focus our attention on the case that $Z=0$. We consider the same map $P_{\overline{Z}|Z}$ as mentioned in the proof of Proposition 4 of \cite{GiReWo02}. In this map, the nonzero values for $Z$, namely $1$, $2$, $3$, $4$, $5$, $6$ are mapped to $\overline{0}$ with probabilities $c$, $e$, $a$, $f$, $b$, and $d$ respectively, and they are mapped to $\overline{1}, \dots, \overline{6}$ with probabilities $1-c$, $1-e$, $1-a$, $1-f$, $1-b$, $1-d$ respectively. The value $Z=0$ is mapped to $\overline{0}$ with probability $1$. Under this map, the probability distribution $P_{X\overline{Y}|\overline{Z}=\overline{0}}$ is as follows:

\begin{center}
\begin{tabular}{|C{1.5cm}||c|c|c|}
    \hline 
    ~ $X$  & 1 & 2 & 3  \\ 
    $\overline{Y}$ &&& \\
    \hhline{|=#=|=|=|}
    \\[-1em]
    $\overline{0}$ & $2r+as+4bt$ & $4cr+2s+dt$ & $er+4fs+2t$ \\
    \hline
    \\[-1em]
    $\overline{1}$ & $(2+a+4b)-(2r+as+4bt)$ & $(4c+2+d)-(4cr+2s+dt)$ & $(e+4f+2)-(er+4fs+2t)$ \\
    \hline
\end{tabular}
\end{center}

In order to have $(X \perp\!\!\!\perp \overline{Y}) | \overline{Z}=\overline{0}$, we must have the following:
\[\frac{2r+as+4bt}{2+a+4b}=\frac{4cr+2s+dt}{4c+2+d}=\frac{er+4fs+2t}{e+4f+2}.\]
In the case that the denominators of these fractions are 0 and the numerators nonzero, the independence condition is satisfied. If both the numerator and denominator of a fraction are $0$, it can be ignored since it imposes no additional conditions.
In \cite{GiReWo02}, it is proven that for any $r$, $s$, $t$ there exist $a, b, c, d, e, f$ satisfying the above equations using a topological argument, but here we demonstrate this in a constructive manner using the ITV:

\begin{theorem}
For numbers $r,s,t\in \mathbb{R}$, there exist $a,b,c,d,e,f\in [0,1]$ such that 
\[\frac{2r+as+4bt}{2+a+4b}=\frac{4cr+2s+dt}{4c+2+d}=\frac{er+4fs+2t}{e+4f+2}=\tau(r,s,t).\]
\end{theorem}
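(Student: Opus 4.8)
The plan is to turn the three equalities into a single membership statement. First I would note that, since $a,b,c,d,e,f\in[0,1]$, the three denominators satisfy $2+a+4b\ge 2$, $4c+2+d\ge 2$, and $e+4f+2\ge 2$, so each of
\[ g_1(a,b)=\frac{2r+as+4bt}{2+a+4b},\qquad g_2(c,d)=\frac{4cr+2s+dt}{4c+2+d},\qquad g_3(e,f)=\frac{er+4fs+2t}{e+4f+2} \]
is a continuous function on the compact, connected square $[0,1]^2$. Hence each of the images $I_1:=g_1([0,1]^2)$, $I_2:=g_2([0,1]^2)$, $I_3:=g_3([0,1]^2)$ is a closed interval. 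Writing $\tau:=\tau(r,s,t)$, the theorem is exactly the claim that $\tau\in I_1\cap I_2\cap I_3$, since a value lies in the image of $g_j$ precisely when some parameter pair in $[0,1]^2$ realizes it.

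The key step is a corner computation. Put $A=\tfrac{2r+s}{3}$, $B=\tfrac{r+2t}{3}$, $C=\tfrac{2s+t}{3}$, the three quantities whose median is $\tau$. A direct evaluation at the corners of the unit square gives $g_1(1,0)=A$, $g_1(0,1)=B$; $g_2(1,0)=A$, $g_2(0,1)=C$; and $g_3(1,0)=B$, $g_3(0,1)=C$. Thus each of $A,B,C$ occurs as a corner value of exactly two of the three fractions, i.e.\ $A\in I_1\cap I_2$, $B\in I_1\cap I_3$, $C\in I_2\cap I_3$. Because each $I_j$ is an interval, it contains the whole closed interval spanned by any two of its elements, so $I_1\supseteq[\min(A,B),\max(A,B)]$, $I_2\supseteq[\min(A,C),\max(A,C)]$, and $I_3\supseteq[\min(B,C),\max(B,C)]$.

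To conclude, I would use the elementary fact that the median of three real numbers equals one of them and lies in the closed interval spanned by the other two. Since $\tau=\mathrm{median}(A,B,C)$, split into three cases. If $\tau=A$, then $\tau\in I_1$ and $\tau\in I_2$ immediately, while $\tau=A$ lies between $B$ and $C$, so $\tau\in[\min(B,C),\max(B,C)]\subseteq I_3$; hence $\tau\in I_1\cap I_2\cap I_3$. The cases $\tau=B$ and $\tau=C$ are identical after relabeling (use $I_1\cap I_3$ plus $I_2$, resp.\ $I_2\cap I_3$ plus $I_1$), reflecting the cyclic symmetry visible in the corner table. Picking parameters that attain the value $\tau$ in each $g_j$ then gives the desired $a,b,c,d,e,f\in[0,1]$.

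I do not expect a hard analytic obstacle here: once the statement is set up this way, the entire content is the corner bookkeeping together with the realization that $\mathrm{median}(A,B,C)$ is exactly the common target that simultaneously lands in all three intervals — and the definition of the ITV essentially hands us that target. The only points needing care are verifying the denominators stay bounded away from $0$ (so ``image = interval'' is legitimate and $g_j$ is genuinely continuous on the closed square) and handling ties among $A,B,C$ in the definition of the median; but a tie only helps, since a repeated value then automatically sits in the third interval as well. This gives an explicit replacement for the non-constructive topological argument of Proposition 4 of \cite{GiReWo02}: Eve's map is the one whose parameters solve $g_j=\tau(r,s,t)$.
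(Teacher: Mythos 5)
Your proposal is correct, and it takes a genuinely different route from the paper. The paper normalizes $(r,s,t)$ by an affine change of variables to a cyclic shift of $(0,x,1)$, uses the cyclic invariance $\tau(r,s,t)=\tau(s,t,r)$ to reduce to a single equation, and then hands out explicit closed-form values of $a,b$ in each of six subcases. You instead observe that each $g_j$ is continuous on the connected compact square, so its image $I_j$ is a closed interval; a corner evaluation shows $A=\tfrac{2r+s}{3}$, $B=\tfrac{r+2t}{3}$, $C=\tfrac{2s+t}{3}$ are distributed among the $I_j$ so that $I_1\ni A,B$, $I_2\ni A,C$, $I_3\ni B,C$; and the median $\tau$ of $A,B,C$ therefore lands in all three intervals because, whichever of $A,B,C$ it equals, it also lies between the other two. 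Your corner computations check out, and the tie cases are handled correctly. The argument is cleaner and also explains conceptually \emph{why} the median is the right common target, something the paper's case-chase does not illuminate. What the paper's version buys in exchange is genuine explicitness: it hands Eve closed-form transition probabilities, which matters for the stated goal of giving a constructive replacement for the topological proof in \cite{GiReWo02}. Your write-up somewhat overclaims on this point: the step ``$\tau$ lies between two corner values, so some interior parameter attains it'' is an intermediate-value appeal, hence not itself constructive. It can be made constructive with one more line --- restrict to the segment from $(1,0)$ to $(0,1)$, cross-multiply $g_j=\tau$, and solve the resulting linear equation in the segment parameter --- but as written the proof establishes existence rather than producing the map. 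Including that last step would make your version both shorter and fully explicit, and is worth doing.
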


\begin{proof}
Note that $\tau(r,s,t)=\tau(s,t,r)=\tau(t,r,s)$. This means that if we can find satisfactory $a,b\in [0,1]$ such that 
\[\frac{2r+as+4bt}{2+a+4b}=\tau(r,s,t)\]
for any $r,s,t\in \mathbb{R}$, then by symmetry we can find satisfactory $c,d\in [0,1]$ such that
\[\frac{2s+dt+4cr}{2+d+4c}=\tau(s,t,r)=\tau(r,s,t)\]
for the same set of $r,s,t\in [0,1]$. Similarly, we can also find $e,f\in [0,1]$ such that
\[\frac{2t+er+4fs}{2+e+4f}=\tau(t,r,s)=\tau(r,s,t)\]
for the same set of $r$, $s$, and $t$. This means that we only need to show that for all $r,s,t\in\mathbb{R}$, there exist $a,b\in [0,1]$ such that 

\begin{equation}\label{eq:1}
\frac{2r+as+4bt}{2+a+4b}=\tau(r,s,t).
\end{equation}

If $r=s=t$, then both sides of the equation above are equal to $r$ regardless of the choice of $a,b$. Now, assume that not all three of $r,s,t$ are equal. Note that since the left and right sides of the equation above are computed from weighted averages of $r$, $s$, and $t$, we can scale the variables by a nonzero constant or add a real number without changing the equation. We can subtract $\min(r,s,t)$ from all of variables and since all of the variables are not equal, we can divide these new variables by $\max(r,s,t)-\min(r,s,t)\neq 0$. We have now transformed $(r,s,t)$ to  $(r',s',t')$, a permutation of $(0,x,1)$, where $x\in [0,1]$.

Our new variables are thus one of the cyclic shifts of $(0,x,1)$ or $(0,1,x)$ for some $x\in [0,1]$. In the latter case, we can multiply the triple by $-1$ and add 1 to get a cyclic shift of $(0,1-x,1)$. This means that we can assume without loss of generality that $(r',s',t')$ is some cyclic shift of $(0,x,1)$ for some $x\in [0,1]$. 

Since the ITV is invariant under cyclic shifts, we now have $\tau(r',s',t')=\tau(0,x,1)$, which is the median of $\frac{x}{3}$, $\frac{2}{3}$, and $\frac{1+2x}{3}$. Note that $\frac{x}{3}$ is the least of these three, so the median is $\frac{\min(2,1+2x)}{3}$. If $x<\frac{1}{2}$, then this is $\frac{1+2x}{3}$, and if $x\geq \frac{1}{2}$, then this is $\frac{2}{3}$.

We now explicitly write out satisfactory $a$ and $b$ which satisfy equation \ref{eq:1} based on which of these cyclic shifts of $(0,x,1)$ that $(r,s,t)$ has been transformed into:
\begin{itemize}
    \item $(r',s',t')=(0,x,1)$: If $x<\frac{1}{2}$, then we can take $a=0$ and $b=\frac{1+2x}{4-4x}$. If $x\geq \frac{1}{2}$, then we can take $a=0$ and $b=1$.
    \item $(r',s',t')=(1,0,x)$: If $x<\frac{1}{2}$, then we can take $a=0$ and $b=1$. If $x\geq \frac{1}{2}$, then we can take $a=1$ and $b=0$.
    \item $(r',s',t')=(x,1,0)$: If $x<\frac{1}{2}$, then we can take $a=1$ and $b=0$. If $x\geq \frac{1}{2}$, then we can take $a=1$ and $b=\frac{3}{8}(2x-1)$.
\end{itemize}
This covers all of the cases, so we are done.
\end{proof}

This resolves the $N=1$ case. We attempt to extend the use of the ITV for $N=2$. Bob's map may be parameterized by the values $a_{ij}:=P_{\overline{Y^2}|Y^2}(0,ij)$ for $i,j\in\{1,2,3\}$, where $Y^2=ij$ represents its two components, $(Y_1,Y_2)=(i,j)$. In this case, we cannot focus on the $Z^2=00$ case alone, because if $Z^2=01$ for example, Eve is unsure of whether Alice has $X^2=12$, $X^2=22$, or $X^2=32$. 
If neither component of $Z^2$ are $0$, Eve will be certain of what Alice has, so $X$ and $\overline{Y}$ are already independent in these cases and no processing is necessary.
This means that we must repeat the above procedure for $\overline{Z^2}=\overline{00},\overline{01},\overline{02},\dots ,\overline{06}, \overline{10},\overline{20},\dots, \overline{60}$. Given a $\overline{Z^2}$ value of $\overline{ij}$, the fractions will be of the form \[\frac{P\left(\overline{X^2}=0, Y^2=11, \overline{Z^2}=\overline{ij}\right)}{P\left(Y^2=11, \overline{Z^2}=\overline{ij}\right)} = \dots = \frac{P\left(\overline{X^2}=0, Y^2=11, \overline{Z^2}=\overline{ij}\right)}{P\left(Y^2=33, \overline{Z^2}=\overline{ij}\right)}.\] 

One natural possibility that is extendable to $N\geq 3$ for the target value for the fractions corresponding to these $z^2$ values are the following:

\begin{conjecture}\label{n=2for214}
Define $a_{ij}:=P_{\overline{Y^2}|Y^2}(0,rs)$ for $r,s\in\{1,2,3\}$, and let the target values $\tau_2:\{00,01,02,\dots, $ $06, 10,20,\dots, 60\}\to [0,1]$ be defined as follows:
\begin{itemize}
    \item $\tau_2(0i) = \tau(a_{1j},a_{2j},a_{3j})$, where $j=\lceil \frac{i}{2}\rceil$ and $1\leq i\leq 6$,
    \item $\tau_2(i0) = \tau(a_{j1},a_{j2},a_{j3})$, where $j=\lceil \frac{i}{2}\rceil$ and $1\leq i\leq 6$,
    \item $\tau_2(00)=\tau(\tau(a_{11},a_{12},a_{13}),\tau(a_{21},a_{22},a_{23}),\tau(a_{31},a_{32},a_{33}))$.
\end{itemize}
Then, there exists a channel $P_{\overline{Z^2}|Z^2}$ such that $P\left(\overline{Y^2}=0|X^2, \overline{Z^2}=\overline{z}\right)=\tau_2(z)$ for all $z$ in the domain of $\tau_2$.
\end{conjecture}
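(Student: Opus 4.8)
The plan is to build the channel $P_{\overline{Z^2}|Z^2}$ by gluing together copies of the $N=1$ channel supplied by the preceding Theorem and then verifying the target value at each of the thirteen symbols in the domain of $\tau_2$. Because $\overline{Y^2}$ is binary, the requirement $P(\overline{Y^2}=0\mid X^2,\overline{Z^2}=\overline z)=\tau_2(z)$ for every $z$ in the domain, together with the fact that $X^2$ is a deterministic function of $Z^2$ whenever both coordinates of $z$ are nonzero, already gives $(X^2\perp\!\!\!\perp\overline{Y^2})\mid\overline{Z^2}$; so it suffices to realize these thirteen values.

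First I would dispose of the twelve ``one-zero'' symbols. Route $Z^2=(z_1,z_2)$ so that the output $\overline{0i}$ ($1\le i\le 6$) is produced only from pairs with $z_2=i$; then, conditioned on $\overline{Z^2}=\overline{0i}$, the first copy is an undisturbed copy of $P_{XYZ}$ on which Bob's induced binarization sends $\overline{Y^2}$ to $0$ with probability $a_{Y_1,j}$, where $j=\lceil i/2\rceil$ — exactly the $N=1$ problem with Bob-parameters the $j$-th column $(a_{1j},a_{2j},a_{3j})$. Applying the preceding Theorem to the first copy's $Z_1$ makes $P(\overline{Y^2}=0\mid X^2,\overline{Z^2}=\overline{0i})=\tau(a_{1j},a_{2j},a_{3j})=\tau_2(0i)$ (the dependence on $X_2$ disappears since $Z_2=i$ fixes $X_2$). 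The symbols $\overline{i0}$ are the same with rows in place of columns and the two copies exchanged. The one subtlety is that a doubly-nonzero pair $(z_1,z_2)$ must feed the correct fractional masses into $\overline{0z_2}$, into $\overline{z_10}$, and into a ``both-revealed'' symbol $\overline{z_1z_2}$ simultaneously; this is arranged by multiplying each copy's $N=1$ channel by a common small factor (which leaves the ratio defining its special-outcome conditional probability unchanged), so that the competing demands at $(z_1,z_2)$ sum to at most $1$, with the leftover mass of each semi-diagonal pair $(0,z_2)$ or $(z_1,0)$ routed into $\overline{00}$.

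The crux is the symbol $\overline{00}$, where essentially all of the work lies. Its conjectured target $\tau_2(00)=\tau(\rho_1,\rho_2,\rho_3)$ with $\rho_k=\tau(a_{k1},a_{k2},a_{k3})$ is a nested ITV, which suggests a two-stage reduction: process one copy's $Z$ so that, conditioned on its special output, the effective binarization of $\overline{Y^2}$ as a function of the other copy's $Y$-value is the vector $(\rho_1,\rho_2,\rho_3)$ — each row of $(a_{jk})$ having been collapsed to its own ITV — and then apply the $N=1$ construction to the remaining copy with parameters $(\rho_1,\rho_2,\rho_3)$, collapsing it to $\tau(\rho_1,\rho_2,\rho_3)$. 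In detail, $\overline{00}$ receives mass from the doubly-diagonal pair $(0,0)$ (whose Bob-value given $X^2=(x_1,x_2)$ is $a_{x_1,x_2}$), from the semi-diagonal pairs $(0,z_2')$ and $(z_1',0)$ (whose Bob-values depend on one of $x_1,x_2$ through a row or column of the matrix), and from the residue of the doubly-nonzero pairs (which carry constant Bob-values); the routing weights must be chosen so that for all nine $X^2$ the mixture $\sum(\text{weight})\,a_{y_1y_2}\big/\sum(\text{weight})$ equals $\tau_2(00)$. The affine dependence of the $N=1$ relation $\frac{2r+as+4bt}{2+a+4b}$ on $(r,s,t)$ is what makes this tractable: it lets a vector-valued ``Bob-probability'' be pushed through an $N=1$-type channel coordinate-wise, so that after fixing the shape of the construction the remaining conditions become a linear feasibility problem in the routing weights. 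The real content is then to show this problem has a nonnegative solution attaining exactly $\tau_2(00)$ for every admissible $3\times3$ matrix $(a_{jk})$, using the specific weights $2,1,4$ of the distribution, and that all the pieces — including the routing of the previous step — assemble into a single legitimate channel.

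The main obstacle I anticipate is the first stage of this reduction. Collapsing row $k$ to $\rho_k$ uses the $N=1$ channel tuned to $(a_{k1},a_{k2},a_{k3})$, but Eve must process the relevant copy with a single channel, and whenever the other copy sits at its diagonal value $Z=0$ the row index $Y=X$ is unknown to her, so she cannot pick the row-$Y$-tuned channel; interchanging rows with columns and the two copies does not remove this, and indeed the naive attempt to make the diagonal part proportional to $(\rho_1,\rho_2,\rho_3)$ fails because the vectors $a_{k,\cdot}-\rho_k\mathbf 1$ ($k=1,2,3$) generically span $\mathbb R^3$, leaving no nonzero nonnegative weight vector orthogonal to all of them. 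Overcoming this seems to need one of: (i) an identity for the ITV that, through the weights $2,1,4$, forces a single channel to collapse all three rows to the correct values at once; (ii) auxiliary randomness for Eve that selects the row-tuned channel and is then absorbed into the output symbol without enlarging $\operatorname{dom}(\tau_2)$; or (iii) a genuinely non-product channel on $(Z_1,Z_2)$ for which the constant $\tau_2(00)$ emerges from the blend of diagonal, semi-diagonal and off-diagonal contributions even though no intermediate quantity is itself made independent. I would also note that ``rows first'' and ``columns first'' yield two a priori different nested ITVs; the construction that processes copy $2$ first naturally produces the rows-first value $\tau_2(00)$, which is encouraging but also hints that the statement may have some slack in exactly which nested value is claimed.
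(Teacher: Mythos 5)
This statement is a conjecture that the paper itself refutes one result later: Theorem~\ref{ITVcounter} exhibits Bob transition probabilities $(a_{ij})=\begin{psmallmatrix}1&1&0\\0&0&0\\0&0&1\end{psmallmatrix}$ for which \emph{no} channel $P_{\overline{Z^2}|Z^2}$ achieves the prescribed targets $\tau_2(z)$. The authors record Conjecture~\ref{n=2for214} only as a natural candidate set of target values, prove it is attained for product strategies via Theorem~\ref{ITVprop}, then show it fails for some non-product binarizations, and report that the targets are infeasible for roughly a fifth of randomly sampled $(a_{ij})$. So there was nothing to prove here; the task was to find a counterexample.

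Your analysis in fact locates exactly why the claim cannot hold in general, and that part is worth keeping. The obstacle you identify at $\overline{00}$ --- that Eve, acting with a single channel on one copy while the other copy sits at $Z=0$ and hides the row index $Y=X$, cannot collapse each row $a_{k\cdot}$ to its own $\rho_k$, and that the vectors $a_{k\cdot}-\rho_k\mathbf 1$ generically span $\mathbb R^3$ so no nonzero nonnegative weight vector is orthogonal to all three --- is not a defect in your gluing scheme but an obstruction to the statement itself. Your closing observation that rows-first and columns-first nested ITVs disagree for non-product $(a_{ij})$ points the same way: the thirteen target equations over-constrain the channel. The correct move from there was to convert the obstruction into a witness: fix a non-product (preferably extreme $0$--$1$) binarization, write the independence conditions at each $\overline z$ as linear constraints in the entries of $P_{\overline{Z^2}|Z^2}$, and test feasibility. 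The paper does this by linear programming in \emph{Mathematica}; a hand proof would amount to extracting a Farkas-type certificate of infeasibility from the very spanning property you noticed. In short, do not try to salvage the gluing construction --- recognise that these particular targets are not always attainable and produce a counterexample.
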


The choice of $j=\lceil \frac{i}{2}\rceil$ is motivated by the fact that in this distribution, if Eve receives $Z=i$, then she knows that Bob has $Y=\lceil \frac{i}{2}\rceil=j$. We observe that these target values give the correct values for a particular class of Bob's strategies which we term product strategies:

\begin{definition}
A \textit{product strategy} for Bob (who has $Y^N=Y_1Y_2\dots Y_N$) is a binarization of $Y^N$ so that the $N$-dimensional matrix of probabilities $P\left(\overline{Y^N} = 0|Y^N = y\right)$ for $y \in \mathcal{Y}^N$ is the tensor product of the $N$ vectors $P\left(\overline{Y^N} = 0|Y_1 = y_1\right)$, \ldots, $P\left(\overline{Y^N} = 0|Y_N = y_N\right)$ where each $y_i$ takes on every value in $\mathcal{Y}$.
\end{definition}

The reason that this target value choice works for product strategies is due to the following property of the ITV:

\begin{theorem}\label{ITVprop}
For real numbers $b_0,b_1,b_2,c_0,c_1,c_2\in [0,1]$,
\[\tau(\tau(b_0c_0,b_1c_0,b_2c_0),\tau(b_0c_1,b_1c_1,b_2c_1),\tau(b_0c_2,b_1c_2,b_2c_2))=\tau(b_0,b_1,b_2)\tau(c_0,c_1,c_2).\]
\end{theorem}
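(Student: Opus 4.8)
The plan is to exploit the fact that $\tau$ is \emph{positively homogeneous of degree one}. Each of the three quantities $\tfrac{2x+y}{3},\ \tfrac{x+2z}{3},\ \tfrac{2y+z}{3}$ whose median defines $\tau(x,y,z)$ is a convex combination of $x,y,z$: its coefficients are nonnegative and sum to $1$. Consequently, multiplying $(x,y,z)$ by a scalar $\lambda\ge 0$ multiplies all three of these weighted averages by $\lambda$, and since $t\mapsto\lambda t$ is order-preserving for $\lambda\ge 0$ it commutes with taking a median; hence $\tau(\lambda x,\lambda y,\lambda z)=\lambda\,\tau(x,y,z)$ for every $\lambda\ge 0$. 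A companion observation, used only to supply the nonnegativity needed for this scaling rule, is that a convex combination of numbers in $[0,1]$ again lies in $[0,1]$, so $\tau$ maps $[0,1]^3$ into $[0,1]$; in particular $B:=\tau(b_0,b_1,b_2)\ge 0$ and each $c_j\ge 0$.

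With these two facts in hand the identity is immediate. First I would pull $c_j$ out of the $j$-th inner ITV: for each $j\in\{0,1,2\}$, since $c_j\ge 0$,
\[\tau(b_0c_j,\,b_1c_j,\,b_2c_j)=c_j\,\tau(b_0,b_1,b_2)=c_jB.\]
Substituting these three values into the outer ITV and pulling out $B\ge 0$ gives
\[\tau\bigl(\tau(b_0c_0,b_1c_0,b_2c_0),\,\tau(b_0c_1,b_1c_1,b_2c_1),\,\tau(b_0c_2,b_1c_2,b_2c_2)\bigr)=\tau(c_0B,\,c_1B,\,c_2B)=B\,\tau(c_0,c_1,c_2),\]
which is exactly $\tau(b_0,b_1,b_2)\,\tau(c_0,c_1,c_2)$, as claimed.

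Honestly, this statement presents essentially no obstacle once the homogeneity of $\tau$ is isolated; the only point deserving a line of care is justifying that $\tau$ commutes with multiplication by a nonnegative constant — namely, that the constant can be factored all the way out of each weighted average (so that all three arguments of the median scale uniformly) and that multiplication by $\lambda\ge 0$ preserves the order in which the median is selected. I would also note that the hypothesis $b_i,c_j\in[0,1]$ is used only to guarantee that the scalars being factored out are nonnegative; the identity in fact holds whenever $c_0,c_1,c_2\ge 0$ and $\tau(b_0,b_1,b_2)\ge 0$ — e.g.\ whenever all six numbers are nonnegative — and this is the form in which it is applied to product strategies for Bob.
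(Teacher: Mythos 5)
Your proof is correct and takes essentially the same approach as the paper: factor the scalar out of each inner ITV, then factor $\tau(b_0,b_1,b_2)$ out of the outer one. The only difference is presentational — you isolate the positive homogeneity $\tau(\lambda x,\lambda y,\lambda z)=\lambda\tau(x,y,z)$ as an explicit lemma and point out that nonnegativity of the scalar is what guarantees the median is preserved under scaling, a justification the paper leaves implicit when it says one can ``factor the $c_0$ from the set of weighted averages.''
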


\begin{proof}
Note that $\tau(b_0c_0,b_1c_0,b_2c_0)=c_0\tau(b_0,b_1,b_2)$ because we can factor the $c_0$ from the set of weighted averages considered when calculating the ITV. We can use this repeatedly to get that the left-hand side is equal to
\[\tau(c_0\tau(b_0,b_1,b_2),c_1\tau(b_0,b_1,b_2),c_2\tau(b_0,b_1,b_2))=\tau(b_0,b_1,b_2)\tau(c_0,c_1,c_2).\]
\end{proof}

Note that if the vectors $P\left(\overline{Y^2} = 0|Y_1 = 0\right)$ and $P\left(\overline{Y^2} = 0|Y_2 = 0\right)$ are $(b_0,b_1,b_2)$ and $(c_0,c_1,c_2)$, respectively, then the product strategy this corresponds to is the map $a_{ij}=b_ic_j$ for all $i,j\in \{0,1,2\}$. Using theorem \ref{ITVprop}, the target value for $Z^2=00$ is equal to $\tau(b_0,b_1,b_2)\tau(c_0,c_1,c_2)$. Therefore, we can use our results from the $N=1$ case for each of components for $Y^2$, and then multiply them together to construct our map.

However, when Bob does not use product strategies, there exist strategies for Bob for which Eve cannot set each of the fractions to their desired target values:

\begin{theorem}\label{ITVcounter}
    For $N=2$, if Bob's transition probabilities are $(a_{ij})=\begin{psmallmatrix}1 & 1&0\\0 &0&0\\0&0& 1\end{psmallmatrix}$ with an ITV of $\tau_2$ defined in conjecture \ref{n=2for214}, then there does not exist a channel $P_{\overline{Z^2}|Z^2}$ such that $P\left(\overline{Y^2}=0|X^2, \overline{Z^2}=\overline{z}\right)=\tau_2(z)$ for all $z$ in the domain of $\tau_2$.
\end{theorem}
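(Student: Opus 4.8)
The plan is to assume for contradiction that such a channel $P_{\overline{Z^2}|Z^2}$ exists (so in particular each $\overline{z}$ in the domain of $\tau_2$ receives positive probability, or else the conditional probability is undefined), and then to extract a contradiction from a small sub-collection of the target equations. The first step is to record the rigidity of the distribution $XYZ$: there $Z$ is a deterministic function of $(X,Y)$, and conditioning on $(X,Z)$ determines $Y$; hence in the $N=2$ distribution every pair $(X^2=x,\,Z^2=z)$ carries a definite weight $W(x,z)$ --- a product of two of the nine entries of $XYZ$, or $0$ when $z$ is impossible for $x$ --- and a definite bit $\overline{Y^2}(x,z)\in\{0,1\}$. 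Since Eve's channel reads only $Z^2$, the demand ``$P(\overline{Y^2}=0\mid X^2=x,\overline{Z^2}=\overline{z})=\tau_2(z)$ for all $x$'' is, for each output symbol $\overline{z}$ and each $x$, the single linear equation
\[ \sum_{z} \left( [\overline{Y^2}(x,z)=0] - \tau_2(z) \right) W(x,z)\, P_{\overline{Z^2}|Z^2}(\overline{z}\mid z) = 0, \]
so the theorem is the assertion that this system, together with the nonnegativity and column-stochasticity of $P_{\overline{Z^2}|Z^2}$, has no solution.

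Next I would localize the obstruction. For this $(a_{ij})$ one has $\overline{Y^2}=0$ exactly when $Y^2\in\{(1,1),(1,2),(3,3)\}$, and the targets come out to $\tau_2(00)=4/9$, $\tau_2(0i)=1/3$ for $1\le i\le 6$, $\tau_2(10)=\tau_2(20)=2/3$, $\tau_2(30)=\tau_2(40)=0$, $\tau_2(50)=\tau_2(60)=1/3$. The two zero targets already force many channel entries to vanish: an input $z$ may feed $\overline{30}$ or $\overline{40}$ only if $\overline{Y^2}(x,z)=1$ for every $x$ compatible with $z$. The decisive point is a scarcity: a direct count shows the cell $X^2=(3,3)$ collects $\overline{Y^2}=0$ weight from only the three inputs $Z^2\in\{(0,0),(2,2),(2,4)\}$, of total weight $9$, while that same cell is a support point of the outputs $\overline{00},\overline{02},\overline{04},\overline{20}$ (all with positive target) and $\overline{40}$. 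Because $\tau_2$ is positive at the first four, each of them must draw a positive amount from this tiny reservoir --- the amount being pinned by the $\overline{Y^2}=1$ weight that the output's own defining input already deposits on that cell --- and I would run the same bookkeeping for the cells $X^2=(3,1)$ and $X^2=(3,2)$ (whose $\overline{Y^2}=0$ weight is likewise confined to a short list of fully-known feeder inputs shared among $\overline{20},\overline{50},\overline{60}$) to show that the combined demand overshoots the combined supply. A clean special case already displaying the mechanism: restrict to the ``natural'' channels, in which each ambiguous input $z$ is sent to $\overline{z}$ with probability $1$ and each fully-known input is used only for dilution; then the cell-$(3,3)$ equation at $\overline{04}$ forces the input $Z^2=(2,4)$ to be routed entirely into $\overline{04}$, whereas the cell-$(3,3)$ equation at $\overline{20}$ forces a strictly positive share of $Z^2=(2,4)$ into $\overline{20}$ --- a direct contradiction.

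The hard part is upgrading this from the ``natural'' channels to arbitrary ones: for a general channel the cell-$(3,3)$ demand of $\overline{04}$ can be serviced from $(0,0)$ or $(2,2)$ rather than $(2,4)$, so no single conflict is forced, and routing an ambiguous input unnaturally (say $Z^2=(0,4)$ partly into $\overline{02}$) introduces additional support cells at the receiving output whose target equations must also be respected. The robust remedy is to exhibit an explicit infeasibility certificate: multipliers $\lambda_{\overline{z},x}$ for the target equations and $\mu_z\ge 0$ for the normalizations $\sum_{\overline{z}}P_{\overline{Z^2}|Z^2}(\overline{z}\mid z)=1$ for which $\sum\lambda_{\overline{z},x}(\text{equation})+\sum\mu_z(\text{normalization})$ is identically a negative constant plus a nonnegative combination of the (nonnegative) channel entries. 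I expect essentially all of the proof's effort to lie in organizing the finitely many feeder-input cases and reading off these multipliers; the remainder is the mechanical check that a fixed stochastic matrix cannot meet the listed equalities.
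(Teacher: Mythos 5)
Your proposal and the paper's proof are both, at bottom, claims of LP infeasibility, but they handle it very differently: the paper simply feeds the linear system (target equations plus stochasticity constraints) to \emph{Mathematica}'s \texttt{LinearProgramming} and reports that no feasible $P_{\overline{Z^2}|Z^2}$ exists. You, by contrast, set out to produce a human-readable Farkas-type infeasibility certificate. Your preliminary work is sound --- the reduction to linear constraints, the computed targets $\tau_2(00)=4/9$, $\tau_2(0i)=1/3$, $\tau_2(10)=\tau_2(20)=2/3$, $\tau_2(30)=\tau_2(40)=0$, $\tau_2(50)=\tau_2(60)=1/3$, and the observation that the $X^2=(3,3)$ cell draws its $\overline{Y^2}=0$ mass only from $Z^2\in\{(0,0),(2,2),(2,4)\}$ with total weight $9$, all check out against the $3\times 3$ table and the definition of the ITV. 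If completed, your route would be strictly more informative than the paper's, which leaves a reader dependent on trusting a black-box solver.

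The gap is that the certificate is never exhibited. Your ``clean special case'' contradiction is only for the restricted class of natural channels, and you yourself flag that under general channels the $\overline{04}$ demand can be serviced from $(0,0)$ or $(2,2)$ instead of $(2,4)$, so no single-variable conflict is forced. The final step --- finding multipliers $\lambda_{\overline{z},x}$ and $\mu_z\ge 0$ so that the combination is a negative constant plus a nonnegative form in the channel entries --- is exactly the content of the theorem, and is left as a description of what must be done rather than a proof. Nothing in the proposal ensures those multipliers exist or pins down their values; you only argue they ``should'' exist from the scarcity heuristic. Until the multipliers are written down and the sign check performed (or, equivalently, until you bound the total demand strictly above the total supply across the full family of feeder allocations, not just the natural one), the argument does not close. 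As written, the proposal is a plausible and well-organized plan for a proof, not a proof.
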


\begin{proof}
    We create a function in \textit{Mathematica} which given Bob's transition probabilities and all of the ITV either outputs a channel $P_{\overline{Z^2}|Z^2}$ that Eve can use or states that no possible channel exists. 
    This is done by taking the equations corresponding to independence between $\overline{X^2}$ and $Y^2$ given $\overline{Z^2}$ and manipulating them, reducing the problem into linear equations in the probabilities that comprise $P_{\overline{Z^2}|Z^2}$, and solving these equations with the \texttt{LinearProgramming} command.

    Through this function, for this set of Bob's transition probabilities and ITV it was found that no satisfactory $P_{\overline{Z^2}|Z^2}$ existed, as desired.
\end{proof}

In particular, by testing distributions of random numbers, we have found that such target values for Eve exist approximately 80\% of the time. Due to this high percentage and generalizability, we hope to use  similarly constructed target values (e.g. $\tau_3(00i)=\tau(\tau(a_{11i},a_{12i},a_{13i}),\tau(a_{21i},a_{22i},a_{23i}),\tau(a_{31i},a_{32i},a_{33i}))$ and so on) for $N=2$ and beyond. 

We now address the question of what the shape of Eve's channel should be in the case of $N=2$. Recall that in the case of $N=1$, for any binarization that Bob chooses it is possible for Eve to make $X \perp\!\!\!\perp \overline{Y}|\overline{Z}$ using a channel in which $Z=0$ is sent to $\overline{Z}=\overline{0}$ with probability $1$, and no nonzero $Z=i$ is sent to $\overline{Z}=\overline{j}$ with positive probability for any $j \neq 0,i$. Since $X \perp\!\!\!\perp \overline{Y}|Z=i$ for $i\in \{1, 2, \dots, 6 \}$, we can freely change the values of the transition probabilities from $Z=i$ to $\overline{Z}=\overline{0}$ in order to make $X \perp\!\!\!\perp \overline{Y}|\overline{Z}=\overline{0}$. We consider a natural extension of this idea to the case $N=2$, namely channels $P_{\overline{Z^2}|Z^2}$ such that the only nonzero transition values are the following:
\begin{itemize}
    \item $P_{\overline{Z^2}|Z^2}(\overline{ij},ij)$ for all $i$ and $j$,
    \item $P_{\overline{Z^2}|Z^2}(\overline{i0},ij)$ for all $i$ and $j$,
    \item $P_{\overline{Z^2}|Z^2}(\overline{0j},ij)$ for all $i$ and $j$,
    \item $P_{\overline{Z^2}|Z^2}(\overline{00},00) = 1$.
\end{itemize}
In other words, if one writes out the $Z^2$ values in a 2D grid, with corners $00$, $06$, $60$, and $66$, the only transitions allowed are within a column or row, or from an element to itself. We prove that this cannot work.

\begin{theorem}
There exists a binarization $P_{\overline{Y^2}|Y^2}$ such that no channel satisfying the properties above results in $X \perp\!\!\!\perp \overline{Y}|\overline{Z}=\overline{00}$.
\end{theorem}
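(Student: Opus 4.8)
The plan is to take for $P_{\overline{Y^2}|Y^2}$ the deterministic binarization whose matrix $(a_{ij}) := \bigl(P_{\overline{Y^2}|Y^2}(\overline{0},ij)\bigr)$ is $\begin{psmallmatrix}1&1&0\\0&0&0\\0&0&1\end{psmallmatrix}$ --- the same binarization that appears in Theorem \ref{ITVcounter} --- and to show directly that no channel of the stated row/column form can make $X^2 \perp\!\!\!\perp \overline{Y^2} \mid \overline{Z^2}=\overline{00}$. A preliminary remark: the most symmetric candidates are \emph{not} counterexamples (e.g.\ the identity binarization, and the binarization ``output $\overline{0}$ iff $Y_1=1$'', both admit an admissible channel that achieves independence), so some asymmetry is genuinely needed, and the weights of the distribution will matter in an essential way below.

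First I would record what the row/column restriction buys us: of the $49$ values of $Z^2$, only those in the ``cross'' $\{00\}\cup\{(i,0):1\le i\le 6\}\cup\{(0,j):1\le j\le 6\}$ can be routed to $\overline{00}$, with $00$ routed there with probability $1$. Writing $g_i := P_{\overline{Z^2}|Z^2}(\overline{00},(i,0))$ and $h_j := P_{\overline{Z^2}|Z^2}(\overline{00},(0,j))$, both in $[0,1]$, and using that $Z_1=i$ pins down $X_1$ and $Y_1=\lceil i/2\rceil$ while $Z_1=0$ forces $X_1=Y_1$ to be uniform (all diagonal weights are $2$), I would assemble the two unnormalized $3\times 3$ matrices $N_{x_1 x_2} := P\bigl(X^2=(x_1,x_2),\overline{Z^2}=\overline{00}\bigr)$ and $M_{x_1 x_2} := P\bigl(X^2=(x_1,x_2),\overline{Y^2}=\overline{0},\overline{Z^2}=\overline{00}\bigr)$. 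A direct computation gives $N_{x_1 x_2}=4+G_{x_1}+H_{x_2}$ and $M=(4I+U)A+AV^{T}$, where $A=(a_{ij})$, the matrices $U,V$ have zero diagonal and off-diagonal entries $u_{x_1 y_1}=2w_{(x_1,y_1)}g_{(x_1,y_1)}$ and $v_{x_2 y_2}=2w_{(x_2,y_2)}h_{(x_2,y_2)}$ (here $w_{(x,y)}$ is the cell weight and $g_{(x,y)}$ is the value of $g_i$ for the $Z$-value $i$ sitting in cell $(x,y)$), and $G_{x_1},H_{x_2}$ are the row sums of $U,V$. The crucial thing to keep in view is the resulting box constraints: since $g_i,h_j\in[0,1]$, each $u_{x_1 y_1}$ lies in $[0,2w_{(x_1,y_1)}]$ and likewise for $v$, and in this distribution $w=1$ for the pairs $(1,2),(2,3),(3,1)$ while $w=4$ for $(1,3),(2,1),(3,2)$. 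Independence conditioned on $\overline{Z^2}=\overline{00}$ is then equivalent to the existence of a scalar $\tau$ with $M=\tau N$; since $N_{x_1 x_2}\ge 4>0$ everywhere and $M_{33}=4$, any such $\tau$ is strictly positive.

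The remainder is a short chase through the nine scalar equations $M_{x_1 x_2}=\tau N_{x_1 x_2}$. For this $A$ the second row of $M$ is $(u_{21},u_{21},u_{23})$, so comparing the $(2,1)$ and $(2,2)$ entries forces $H_1=H_2$; then comparing $(1,1)$ with $(1,2)$ and $(3,1)$ with $(3,2)$ forces $v_{12}=v_{21}=:p$ and $v_{13}=v_{23}=:q$, so that $H_1=H_2=p+q=:h$ with $p\in[0,2]$ and $q\in[0,2]$ (these being the \emph{binding} boxes). The $(1,1)$ equation $4+p=\tau(4+G_1+h)$ together with $G_1=u_{12}+u_{13}\le 2+8=10$ gives $\tau\ge \frac{4+p}{14+p+q}\ge \tfrac14$. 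For the opposing bound: the $(3,3)$ and $(3,1)$ equations together with $u_{31}\le 2$ give $\tau(H_3-h)\ge 2-q$; adding the $(2,1)$ and $(2,3)$ equations gives $u_{21}+u_{23}=\frac{\tau(8+h+H_3)}{1-2\tau}$ while subtracting them gives $u_{23}-u_{21}=\tau(H_3-h)$, and feeding these (with $u_{23}\le 2$) into a one-line estimate yields $\tau\le \frac{2+q}{3(4+q)}\le \tfrac29$. Since $\tfrac14>\tfrac29$, no valid $\tau$ exists, so no admissible channel achieves independence for this binarization, which is the theorem.

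I expect the main difficulty to be organizational rather than conceptual: correctly assembling $M$ for this \emph{asymmetric} distribution --- a symmetrized version of the argument does not close, since it is precisely the weights $2,4,1$ that make the two bounds on $\tau$ incompatible --- and tracking which $[0,2w]$-box each transition variable occupies, because the contradiction is driven by the weight-$1$ boxes on $u_{31},u_{23},v_{12},v_{13}$. A secondary issue is merely finding a working binarization in the first place; if the hand computation proves unwieldy, one can instead confirm infeasibility for the matrix $\begin{psmallmatrix}1&1&0\\0&0&0\\0&0&1\end{psmallmatrix}$ by the same sort of \texttt{LinearProgramming}-based check used for Theorem \ref{ITVcounter}, since under the restricted channel the feasibility question becomes a small linear program once $\tau$ is fixed.
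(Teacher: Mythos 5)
Your proof is correct in spirit and establishes the theorem, but it takes a genuinely different and substantially harder route than the paper does. The paper chooses the binarization with $a_{11}=a_{33}=1$ and all other $a_{ij}=0$, i.e.\ the matrix $\begin{psmallmatrix}1&0&0\\0&0&0\\0&0&1\end{psmallmatrix}$, under which the argument collapses in one line: in your notation $M_{22}$ is identically $0$ (no $Y^2\in\{11,33\}$ can coexist with $X^2=(2,2)$ and a cross-routable $Z^2$), while $M_{11}=4$ is a constant independent of the channel, and both $N_{11},N_{22}\ge 4>0$, so $M=\tau N$ is impossible regardless of $\tau$. Your choice $\begin{psmallmatrix}1&1&0\\0&0&0\\0&0&1\end{psmallmatrix}$, borrowed from Theorem \ref{ITVcounter}, does not degenerate in this way (the extra $a_{12}=1$ makes $M_{22}=u_{21}$, which can be positive), so you are forced into the two-sided estimate on $\tau$. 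That estimate is sound: the setup $M=(4I+U)A+AV^{T}$, $N_{x_1x_2}=4+G_{x_1}+H_{x_2}$, the deductions $H_1=H_2$, $v_{12}=v_{21}=p\in[0,2]$, $v_{13}=v_{23}=q\in[0,2]$, and the lower bound $\tau\ge\frac{4+p}{14+p+q}\ge\frac14$ are all correct. On the upper bound I cannot reproduce your exact intermediate $\frac{2+q}{3(4+q)}$, but your chain of steps (from $u_{31}\le 2$ getting $\tau(H_3-h)\ge 2-q$, then combining the $(2,1)$ and $(2,3)$ equations with $u_{23}\le 2$) does yield an even tighter bound, e.g.\ $\tau\le\frac{q}{6+p+2q}\le\frac15$, so the contradiction with $\tau\ge\frac14$ holds either way; you should just double-check the algebra behind $\frac{2+q}{3(4+q)}$ before writing it up. The trade-off is clear: the paper's choice is far more economical, but your argument is a useful template because it shows how to rule out a binarization that is not immediately degenerate, and it makes explicit where the weights $1,2,4$ of the distribution enter through the box constraints on the $u$'s and $v$'s.
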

\begin{proof}
Consider the binarization $P_{\overline{Y^2}|Y^2}$ defined by $P_{\overline{Y^2}|Y^2}(0, i) = 1$ for $i=11, 33$ and $P_{\overline{Y^2}|Y^2}(0, i) = 0$ otherwise. Observe that for the map above, the only transitions to $Z=\overline{00}$ are from $n0$ or $0n$ for $n = 1, 2, \dots, 6$, or from $00$. We know that the transition probability from $00$ to $\overline{00}$ is $1$. Let the transition probabilities $P_{\overline{Y^2}|Y^2}(\overline{n0},00)$ for $n = 1, 2, \dots, 6$ be $a_1, \dots, f_1$ respectively, and similarly let the transition probabilities $P_{\overline{Y^2}|Y^2}(\overline{0n},00)$ be $a_2, \dots, f_2$ respectively. Then the equations \[\frac{P\left(\overline{X^2}=\overline{0}, Y^2=11, \overline{Z^2}=\overline{ij}\right)}{P\left(Y^2=11, \overline{Z^2}=\overline{00}\right)} = \dots = \frac{P\left(\overline{X^2}=\overline{0}, Y^2=11, \overline{Z^2}=\overline{00}\right)}{P\left(Y^2=33, \overline{Z^2}=\overline{ij}\right)}\] become
\begin{align*}
    &\frac{2}{2+a_1+a_2+4b_1+4b_2} = \frac{4c_2}{2+a_1+4b_1+4c_2+d_2} = \frac{4b_1+e_2}{2+a_1+4b_1+e_2+4f_2} = \frac{4c_1}{2+a_2+4b_2+4c_1+d_1} \\
    &=0 \\
    &=\frac{d_1}{2+4c_1+d_1+e_2+4f_2} = \frac{4b_2+e_1}{2+a_2+4b_2+e_1+4f_1} = \frac{d_2}{2+4c_2+d_2+e_1+4f_1} = \frac{2}{2+e_1+e_2+4f_1+4f_2}
\end{align*}
The first fraction can never be equal to 0 (the fifth expression in the equality) for any values of $a_1, \dots, f_1$ and $a_2, \dots, f_2$, so we are done.
\end{proof}

By providing an explicit construction for the 1D case using the ITV, the results in this section suggest a possible approach for generalizing the existence proof first given in \cite{GiReWo02}, which promises to generalize to $X^NY^NZ^N$. However, we illustrate some difficulties in performing performing a straightforward generalization to higher dimensions, in particular the case where one of the parties, say Bob, does not use a product strategy.

\section{A Family of Candidate Distributions}
\label{family}

We now examine another distribution given in \cite{RenWol03} which is believed to be bound secret. The unnormalized probability table is shown below.
\begin{center}
\begin{tabular}{|l||c|c|c|c|}
    \hline 
    ~ $X$  & 0 & 1 & 2 & 3  \\ 
    $Y$ &&&& \\
    \hhline{|=#=|=|=|=|}
    0 & 1/8 & 1/8 & $a$ & $a$ \\ \hline
    1 & 1/8 & 1/8 & $a$ & $a$ \\ \hline
    2 & $a$ & $a$ & $1/4$ & 0 \\ \hline
    3 & $a$ & $a$ & 0 & $1/4$ \\
    \hline
\end{tabular}
\begin{align*}
    Z &\equiv X + Y \pmod{2} \text{ if } X, Y \in \{0, 1 \}, \\
    Z &\equiv X \pmod{2} \text{ if } X, Y \in \{2, 3 \}, \\
    Z &= (X, Y) \text{ otherwise.}
\end{align*}
\end{center}
It has already been shown that for any $a > 0$, we have $I(X:Y\downarrow Z)>0$ \cite{RenWol03}, so to prove that this distribution is bound secret, we only need to show that $S(X:Y||Z)=0$ for any fixed $a$. As with the previous distribution, we conjecture the following:

\begin{conjecture}
\label{4x4conj}
There exists a value $a$ such that for the distribution $XYZ$ above, for any $N \geq 1$ we have
\[ \forall \overline{X^N}, \overline{Y^N}, ~ \exists \overline{Z^N} \text{ such that } (\overline{X^N} \perp\!\!\!\perp \overline{Y^N}) | \overline{Z^N} \]
where the channels processing $X^N$ and $Y^N$ are assumed to be binarizations.
\end{conjecture}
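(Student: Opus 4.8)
The plan is to follow the template of the $N=1$ analysis in Section~\ref{iib}: reduce the conditional-independence requirement to a finite list of scalar ``target value'' equations, one per output symbol of Eve's channel, and then exhibit (or prove the existence of) an Eve channel that meets them. First note that Conjecture~\ref{4x4conj} is exactly the missing ingredient: via the lemma that $I(A:B\mid C)=0$ is equivalent to $(A\perp\!\!\!\perp B)\mid C$, it forces $I(\overline{X^N}:\overline{Y^N}\downarrow Z^N)=0$ for every $N$ and every binarization, so by Proposition~5 of \cite{GiReWo02} we get $S(X:Y||Z)=0$, and together with the known $I(X:Y\downarrow Z)>0$ this yields bound secrecy. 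The first concrete step is to catalogue the structure of $Z^N$: writing $Z=Z(X,Y)$ coordinatewise, in a coordinate with $X_k,Y_k\in\{2,3\}$ the zero entries at $(2,3)$ and $(3,2)$ force $Y_k=X_k$, so $Z_k$ actually determines both $X_k$ and $Y_k$; in a coordinate with exactly one of $X_k,Y_k$ in $\{0,1\}$ we have $Z_k=(X_k,Y_k)$, again fully informative; only in coordinates with $X_k,Y_k\in\{0,1\}$ does $Z_k=X_k\oplus Y_k$ merely reveal a parity. Hence the symbols of $Z^N$ that can obstruct independence are those with at least one parity coordinate, the hardest being the all-parity ones, and conditioned on any fixed parity pattern the law of $(X^N,Y^N)$ factors over coordinates, each parity coordinate carrying the $1/4$-weighted diagonal mass of the $\{2,3\}$ block blended together with the $1/8$-weighted ambiguous mass. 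That built-in blending, plus the $8a$ worth of fully-informative ``otherwise'' symbols, is the resource Eve routes into the obstructing symbols.

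For $N=1$ I expect an explicit construction. Conditioned on $Z=0$, the pair $(X,Y)$ is supported on $(0,0),(1,1),(2,2)$ with weights proportional to $1/8,1/8,1/4$, and similarly for $Z=1$ on $(0,1),(1,0),(3,3)$. Writing $p_i=P(\overline{X}=\overline{0}\mid X=i)$ and $q_i=P(\overline{Y}=\overline{0}\mid Y=i)$, the $\overline{X}\,\overline{Y}$ table conditioned on whatever mixture Eve builds into $\overline{0}$ is a $2\times 2$ table subject to a single independence equation; I would define the analogue of the ITV --- the value $\tau$ that $P(\overline{Y}=\overline{0}\mid\overline{X},\overline{Z}=\overline{0})$ must take --- show that blending controlled fractions of $Z=1$ and of the eight $a$-weighted symbols into $\overline{0}$ moves this quantity continuously across an interval containing $\tau$, and then fix a range of $a$ for which this succeeds for \emph{every} $(p_i),(q_i)$; taking $a$ large is the safe choice, since then the informative mass dominates and is plentiful. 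By the reduction that lets one drop Alice's binarization (as in passing from Conjecture~\ref{strong326} to Theorem~\ref{326}, which applies here because the distribution is symmetric in $X$ and $Y$) it suffices to treat the case where only Bob binarizes.

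For general $N$, product strategies are easy: if Alice and Bob both use product binarizations the parity-block contributions factor coordinatewise, and by the multiplicativity of the ITV (Theorem~\ref{ITVprop}) the $N=1$ construction tensorizes into a product channel for Eve. The substantive obstacle --- and the step I expect to be hardest --- is non-product binarizations. As the last theorem of Section~\ref{iib} and Theorem~\ref{ITVcounter} show, restricting to Z-shaped channels (transitions only within a row or column) or prescribing the naive ``tensor of ITVs'' target values already fails on a single obstructing symbol, so one cannot simply pin the targets and solve a linear program. The route I would take is to stop prescribing conditional probabilities and instead prove existence by a continuity/degree argument: as Eve's routing coefficients range over a cube $[0,1]^m$ (with $m$, the number of admissible transitions, growing with $N$), the tuple of $2\times 2$ conditional tables she realizes at the obstructing symbols sweeps out a compact set, and one wants to show this set always meets the product variety. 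To keep this robust uniformly in $N$ I would develop the isolated-perturbation idea of Section~\ref{family}: show that a single fully-informative $Z^N$-symbol, routed in small amounts into one obstructing symbol, perturbs exactly one independence equation to first order while leaving the rest fixed, so the equations can be corrected one at a time --- and here a large enough $a$ again guarantees every needed perturbation stays inside $[0,1]$. The crux is forcing the dimension of Eve's freedom and the available mass to keep pace with the growing number of independence constraints as $N\to\infty$; closing precisely that gap between the clean $N=1$ picture and a full proof is where the real difficulty lies.
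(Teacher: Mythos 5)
This statement is a \emph{conjecture}; the paper does not prove it, and in fact explicitly says ``We have not been able to establish the above statement for the case $N=1$.'' So your proposal cannot match a proof in the paper, and you should be upfront that you are offering a programme, not a proof. Within that programme there is one concrete error you must fix. You write that, because the distribution is symmetric in $X$ and $Y$, ``the reduction that lets one drop Alice's binarization\dots applies here,'' so that it suffices to treat only Bob's binarization, as in Conjecture~\ref{strong326}. That reduction only ever worked in one direction --- the no-Alice statement is \emph{stronger} than Conjecture~\ref{4x4conj}, not equivalent to it --- and for this distribution the stronger statement is shown to be \emph{false}: Theorem~\ref{Ybarneeded} exhibits, for every $a>0$, a binarization $P_{\overline{Y}|Y}$ for which no admissible channel $P_{\overline{Z}|Z}$ gives $X\perp\!\!\!\perp\overline{Y}\,|\,\overline{Z}$. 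The $X\leftrightarrow Y$ symmetry lets you swap \emph{which} party you refuse to binarize, but it gives you no license to freeze one party at the identity map. So your $N=1$ plan, which conditions on $\overline{Z}=\overline{0}$ and tries to hit a single ITV-style target $\tau$ by routing mass from $Z=1$ and the $a$-cells, must be rebuilt to carry both Alice's parameters $(p_i)$ and Bob's $(q_i)$ from the start; the paper's own partial progress (Theorem~\ref{zetasmall}) already shows that even with both binarizations present the naive routing fails unless $\zeta=(x_2-x_3)(y_2-y_3)$ is small, and that one then needs an isolated perturbation of a single transition probability to correct the remaining equation.

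Two smaller points. First, your heuristic ``taking $a$ large is the safe choice'' is not supported: Theorem~\ref{Ybarneeded} holds for \emph{all} $a>0$, and the obstruction there actually gets no easier as $a$ grows; the paper's conjecture only asserts existence of \emph{some} $a$, and its evidence (the Z-shaped-channel conjecture, Theorem~\ref{zetasmall}) does not single out a large-$a$ regime. Second, for general $N$ you lean on the multiplicativity of the ITV and a continuity/degree argument, but the paper's Theorems~\ref{ITVcounter} and the subsequent no-row-column theorem already show that ITV-style prescribed targets fail for non-product strategies even at $N=2$ for the \emph{simpler} distribution; here the obstruction is strictly worse because both parties binarize. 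Your plan is a reasonable research direction and overlaps in spirit with the paper's weighted-average-ITV and perturbation ideas, but as written it contains the specific misstep above and does not close the gap the paper also leaves open.
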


We have not been able to establish the above statement for the case $N=1$, but have made some progress in ruling out possible simplifications. For example, in the case of the previous distribution it was possible to prove the above statement for $N=1$ while only allowing binarizations of $Y$, which is a strictly harder task than if we had allowed binarizations of $X$ and $Y$. We show that the analogous way of strengthening the above claim in the case of the current distribution for $N=1$ cannot work.

In what follows, we consider channels $P_{\overline{Z}|Z}$ as follows. Observe that $X\perp\!\!\!\perp Y |Z=z$ (and therefore $\overline{X}\perp\!\!\!\perp \overline{Y} |Z=z$ for all $\overline{X}$, $\overline{Y}$) for all $z \not\in \{0, 1\}$, so there is no reason to have a transition to $\overline{Z}=\overline{z}$ except from $Z=z$ itself. We allow all other transitions except those from $Z=0$ or $Z=1$ to $\overline{Z}=\overline{z}$ with $\overline{z} \neq \overline{0}, \overline{1}$. These transitions would be counterproductive for Eve because it is already the case that $\overline{X}\perp\!\!\!\perp \overline{Y} |Z=z$, and after this transitions it might be the case that $\overline{X}$ and $\overline{Y}$ are dependent given $\overline{Z}=\overline{z}$. 

With these restrictions, we parameterize the possible $\overline{Z}$ channels as follows. Let $P_{\overline{Z}|Z}(\overline{1}, 0) = \alpha$ and $P_{\overline{Z}|Z}(\overline{0}, 1) = \beta$, so that $P_{\overline{Z}|Z}(\overline{1}, 1) = 1-\beta$ and $P_{\overline{Z}|Z}(\overline{0}, 0) = 1-\alpha$. Let the transition probabilities from $Z=(0,2)$, $(1,2)$, $(0,3)$, $(1,3)$, $(2,0)$, $(3,0)$, $(2,1)$, and $(3,1)$ to $\overline{Z}=\overline{0}$ be $a_0$, $b_0$, $c_0$, $d_0$, $e_0$, $f_0$, $g_0$, and $h_0$ respectively. Similarly define $a_1, \dots, h_1$ to be the transition probabilities from $Z$-values not equal to $0$ or $1$ to $\overline{Z} = \overline{1}$. This is illustrated in Tables \ref{Fig1} and \ref{Fig2}, where the number in each cell is the transition probability to $\overline{Z}=\overline{0}$ (Table \ref{Fig1}) or $\overline{1}$ (Table \ref{Fig2}) from the $Z$ value corresponding to the $XY$ value for that cell.

\begin{center}
\begin{table}\begin{tabular}{|l||c|c|c|c|}
    \hline 
    ~ $X$  & 0 & 1 & 2 & 3  \\ 
    $Y$ &&&& \\
    \hhline{|=#=|=|=|=|}
    0 & $1- \alpha$ & $\beta$ & $e_0$ & $f_0$ \\ \hline
    1 & $\beta$ & $1-\alpha$ & $g_0$ & $h_0$ \\ \hline
    2 & $a_0$ & $b_0$ & $1-\alpha$ & 0 \\ \hline
    3 & $c_0$ & $d_0$ & 0 & $\beta$ \\
    \hline
\end{tabular}\caption{\label{Fig1}Transition probabilities to $\overline{Z}=\overline{0}$.}\end{table}~~~\begin{table}\begin{tabular}{|l||c|c|c|c|}
    \hline 
    ~ $X$  & 0 & 1 & 2 & 3  \\ 
    $Y$ &&&& \\
    \hhline{|=#=|=|=|=|}
    0 & $\alpha$ & $1-\beta$ & $e_1$ & $f_1$ \\ \hline
    1 & $1-\beta$ & $\alpha$ & $g_1$ & $h_1$ \\ \hline
    2 & $a_1$ & $b_1$ & $1-\alpha$ & 0 \\ \hline
    3 & $c_1$ & $d_1$ & 0 & $\beta$ \\
    \hline
\end{tabular}\caption{\label{Fig2}Transition probabilities to $\overline{Z}=\overline{1}$.}\end{table}

\end{center}

\begin{theorem}
\label{Ybarneeded}
For all possible values of $a>0$ in the distribution given in \cite{RenWol03}, there exists a binarization $P_{\overline{Y}|Y}$ such that for all $\overline{Z}$ in the form given by Tables \ref{Fig1} and \ref{Fig2}, $X\not\perp\!\!\!\perp\overline{Y}|\overline{Z}$.
\end{theorem}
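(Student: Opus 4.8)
The plan is to exhibit an explicit binarization $P_{\overline{Y}|Y}$ and show that no channel of the form in Tables \ref{Fig1} and \ref{Fig2} can make $X\perp\!\!\!\perp\overline{Y}\,|\,\overline{Z}$. The device that makes this tractable is the pair of forced zeros in the table: the cells $(X,Y)=(3,2)$ and $(X,Y)=(2,3)$ have probability $0$. I would therefore use a deterministic $\overline{Y}$ that isolates one of them, the natural candidate being $\overline{Y}=\overline{0}\iff Y=2$, so that the event $\{X=3,\overline{Y}=\overline{0}\}$ is supported entirely on the zero cell $(3,2)$; hence $P(X=3,\overline{Y}=\overline{0},\overline{Z}=\overline{z})=0$ for every output $\overline{z}$ and every choice of the parameters $\alpha,\beta,a_0,\dots,h_0,a_1,\dots,h_1$.

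First I would cut down the range of $\overline{Z}$ to be checked: for $\overline{z}\notin\{\overline{0},\overline{1}\}$ the value $\overline{Z}=\overline{z}$ is reached only from the single $Z$-value $z$ (an $a$-cell, where $Z=(X,Y)$ already reveals $X$), so $X$ is deterministic given $\overline{Z}=\overline{z}$ and independence is automatic. Thus $X\perp\!\!\!\perp\overline{Y}\,|\,\overline{Z}$ is equivalent to requiring that each of the two $4\times 2$ tables $P(X=x,\overline{Y}=\overline{y},\overline{Z}=\overline{0})$ and $P(X=x,\overline{Y}=\overline{y},\overline{Z}=\overline{1})$ have rank one, i.e. that their $\overline{Y}=\overline{0}$ and $\overline{Y}=\overline{1}$ columns be proportional. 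Writing these columns out from Tables \ref{Fig1} and \ref{Fig2}, the relevant features are: the $\overline{Y}=\overline{0}$ column has a forced $0$ in the $X=3$ slot; its $X=2$ slot is $\tfrac{1-\alpha}{4}$ for $\overline{Z}=\overline{0}$ and $\tfrac{\alpha}{4}$ for $\overline{Z}=\overline{1}$; and the $X=3$ slot of the $\overline{Y}=\overline{1}$ column is $a(f_0+h_0)+\tfrac{\beta}{4}$ for $\overline{Z}=\overline{0}$ and $a(f_1+h_1)+\tfrac{1-\beta}{4}$ for $\overline{Z}=\overline{1}$ (the cell $(3,3)$ carries weight $\tfrac14$ and $Z=1$).

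The heart of the argument is a short case split. If $P(\overline{Y}=\overline{0},\overline{Z}=\overline{0})>0$ and $P(\overline{Y}=\overline{0},\overline{Z}=\overline{1})>0$, then both rank-one conditions are in force, so the $\overline{Y}=\overline{1}$ columns must also vanish at $X=3$; that would require $a(f_0+h_0)+\tfrac{\beta}{4}=0$ and $a(f_1+h_1)+\tfrac{1-\beta}{4}=0$ simultaneously, impossible since $\tfrac{\beta}{4}+\tfrac{1-\beta}{4}=\tfrac14$. Hence at least one of those two probabilities is $0$; say the one for $\overline{Z}=\overline{0}$, which equals $\tfrac{1-\alpha}{4}+a(a_0+b_0)$ and so forces $\alpha=1$ and $a_0=b_0=0$. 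Then $P(\overline{Y}=\overline{0},\overline{Z}=\overline{1})\ge\tfrac14>0$, so the $\overline{Z}=\overline{1}$ rank-one condition \emph{is} active, and the same "$X=3$ slot" reasoning now forces $f_1=h_1=0$ and $\beta=1$. One is left with a single rank-one condition over a heavily constrained channel, which up to positive scaling reads $(aa_1,\,ab_1,\,\tfrac14,\,0)\parallel(\tfrac18+ac_1,\,\tfrac18+ad_1,\,a(e_1+g_1),\,0)$; the symmetric sub-case, where instead $P(\overline{Y}=\overline{0},\overline{Z}=\overline{1})=0$, is identical with $\overline 0$ and $\overline 1$ exchanged.

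The step I expect to be the main obstacle is ruling out this residual condition uniformly in $a$. Comparing its $X=2$ entry $\tfrac14$ with the pinned ratio forces $e_1+g_1\ge\tfrac{1}{32a^2}$, which exceeds the maximal possible value $2$ once $a<\tfrac18$, so the small-$a$ regime is immediate; for the remaining range one must either use the nonnegativity and column-sum constraints on the transition probabilities (e.g. $a_0+a_1\le1$, $e_0+e_1\le1$) more carefully in the residual $4\times 2$ table, or switch to a binarization of $Y$ adapted to large $a$. Since, once the support pattern of $\overline{Z}$ is fixed, the residual problem is a feasibility problem in the remaining transition probabilities, it can in principle be discharged by the same linear-programming search used in the proof of Theorem \ref{ITVcounter}. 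One bookkeeping point to keep throughout: when $P(\overline{Y}=\overline{0},\overline{Z}=\overline{z})=0$ the variable $\overline{Y}$ is almost surely constant given $\overline{Z}=\overline{z}$, so independence holds vacuously on that side — which is exactly why the argument pivots to the other value of $\overline{Z}$.
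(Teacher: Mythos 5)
Your strategy — pinning the binarization so that a forced zero cell of the distribution makes an entire $(X,\overline{Y})$ entry vanish — is different from the paper's, and it does not close. The paper uses an $a$-\emph{dependent} binarization: in affine-normalized coordinates it takes $(w,x,y,z)=(0,1,2,3a+3)$ for the four values $P_{\overline{Y}|Y}(\overline{0},\cdot)$ (which, after rescaling by a small $\epsilon$, lies in $[0,1]^4$), and shows by a weighted-average argument on the $X\in\{2,3\}$ rows that the required equalities force $\beta\leq \tfrac{a(2y-1)}{2z-2y}$ and $1-\beta\leq\tfrac{a(2y-1)}{2z-2y}$, while this bound is always $<\tfrac12$ for the chosen $y,z$. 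That works uniformly in $a>0$.

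The gap in your proposal is not merely unfinished bookkeeping: the fixed binarization $\overline{Y}=\overline{0}\iff Y=2$ genuinely fails for $a\geq\tfrac18$. In the case you reach, with $\alpha=\beta=1$, $a_0=b_0=0$, $f_1=h_1=0$, the residual $\overline{Z}=\overline{1}$ rank-one condition
\[
(aa_1,\ ab_1,\ \tfrac14,\ 0)\ \parallel\ (\tfrac18+ac_1,\ \tfrac18+ad_1,\ a(e_1+g_1),\ 0)
\]
is \emph{satisfiable} once $a\geq\tfrac18$: take $a_1=b_1=1$, $c_1=d_1=0$, and $e_1=g_1=\tfrac{1}{64a^2}\leq 1$; all three ratios equal $8a$ and the channel obeys the column-sum constraints. (The $\overline{Z}=\overline{0}$ side is vacuous because its $\overline{Y}=\overline{0}$ column is identically zero.) So there is a channel of the Tables \ref{Fig1}/\ref{Fig2} form achieving $X\perp\!\!\!\perp\overline{Y}\,|\,\overline{Z}$ for this binarization whenever $a\geq\tfrac18$, which means "using the nonnegativity and column-sum constraints more carefully" cannot salvage it. You must indeed change the binarization, and in an $a$-dependent way — this is precisely what the paper's choice $z=3a+3$ accomplishes. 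Your reduction to two rank-one conditions on $4\times2$ tables and the observation that the forced zero at $(X,Y)=(3,2)$ pins down the support pattern of $\overline{Z}$ are sound and clean; what is missing is a witness binarization that survives the large-$a$ regime.
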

\begin{proof}
Fix $a > 0$. In order for $X\perp\!\!\!\perp\overline{Y}|\overline{Z}$, we need  \[ \frac{P\left(X=0, \overline{Y}=0, \overline{Z}=0\right)}{P\left(X=0, \overline{Z}=0\right)} = \dots = \frac{P\left(X=3, \overline{Y}=0, \overline{Z}=0\right)}{P\left(X=3, \overline{Z}=0\right)} \] and \[ \frac{P\left(X=0, \overline{Y}=0, \overline{Z}=1\right)}{P\left(X=0, \overline{Z}=1\right)} = \dots = \frac{P\left(X=3, \overline{Y}=0, \overline{Z}=1\right)}{P\left(X=3, \overline{Z}=1\right)}. \] We claim that there exists a binarization $P_{\overline{Y}|Y}$ such that for all $P_{\overline{Z}|Z}$ in the form given by tables \ref{Fig1} and \ref{Fig2}, the two equations $\frac{P\left(X=2, \overline{Y}=0, \overline{Z}=0\right)}{P\left(X=2, \overline{Z}=0\right)} = \frac{P\left(X=3, \overline{Y}=0, \overline{Z}=0\right)}{P\left(X=3, \overline{Z}=0\right)}$ and $\frac{P\left(X=2, \overline{Y}=0, \overline{Z}=1\right)}{P\left(X=2, \overline{Z}=1\right)} = \frac{P\left(X=3, \overline{Y}=0, \overline{Z}=1\right)}{P\left(X=3, \overline{Z}=1\right)}$ cannot be satisfied. Let this binarization be defined by $P_{\overline{Y}|Y}(\overline{0}, 0) = w$, $P_{\overline{Y}|Y}(\overline{0}, 1) = x$, $P_{\overline{Y}|Y}(\overline{0}, 2) = y$, and $P_{\overline{Y}|Y}(\overline{0}, 3) = z$ (we will specify $w, x, y, z$ later). Expanding, the two equations are
\begin{align*}
    \frac{a e_0 w + a g_0 x + 2 y(1-\alpha)}{a e_0 + ag_0 + 2(1-\alpha)} &= \frac{af_0 w + ah_0 x + 2z \beta}{af_0 + ah_0 + 2\beta} \\
    \frac{a e_1 w + a g_1 x + 2y\alpha}{ae_1 + ag_1 + 2\alpha} &= \frac{af_1 w + ah_1 x + 2z(1-\beta)}{af_1 + ah_1 + 2(1-\beta)}
\end{align*}

Observe that, since both of these equations are weighted averages of $w$, $x$, $y$, and $z$, the equations do not change if an affine transformation is applied to $w$, $x$, $y$, and $z$ simultaneously. So, in the case that $w \neq x$ (which will hold for the particular binarization that makes $X$ and $\overline{Y}$ dependent regardless of $P_{\overline{Z}|Z}$), we can let $w=0$ and $x=1$, WLOG. After a bit of simplification, the equations become
\begin{align*}
    \frac{g_0 + 2 \left(\frac{y}{a}\right) (1-\alpha)}{e_0 + g_0 + \frac{2(1-\alpha)}{a}} &= \frac{h_0 + 2 \left(\frac{z}{a}\right) \beta}{f_0 + h_0 + \frac{2\beta}{a}} \\
    \frac{g_1 + 2 \left(\frac{y}{a}\right) \alpha}{e_1 + g_1 + \frac{2\alpha}{a}} &= \frac{h_1 + 2 \left(\frac{z}{a}\right) (1-\beta)}{f_1 + h_1 + \frac{2(1-\beta)}{a}}
\end{align*}
We claim that the values $y=2$ and $z=3a+3$ work (i.e. there are no possible values $0\leq e_0, \dots, h_1, \alpha, \beta\leq 1$ that satisfy the equations). \newline

To prove this, we can again view each fraction as a weighted average. For example, the first fraction in the first equation is a weighted average of the values $0$, $1$, and $y$ with weights $e_0$, $g_0$, and $\frac{2(1-\alpha)}{a}$ respectively. Since $y, z > 1$, the minimum possible value of this fraction is achieved when $e_0 = g_0 = 1$ and the maximum is achieved when $e_0 = g_0 = 0$. Similarly for the second fraction in the first equation, the minimum is achieved when $f_0 = h_0 = 1$ and the maximum is achieved when $f_0 = h_0 = 0$. Since the maximum of the second fraction is $z$, which is always larger than $y$ for the proposed values $y=2$ and $z=3a+3$, the first equation is solvable if and only if the minimum of the second fraction is less than or equal to the maximum of the first fraction, that is \[ y \geq \frac{1+2 \left(\frac{z}{a} \right) \beta}{2 + \left(\frac{2}{a} \right) \beta}.\] Solving for $\beta$, we have \[ \beta \leq \frac{a(2y-1)}{2z-2y}.\] Applying the same analysis to the second equation gives that \[ 1-\beta \leq \frac{a(2y-1)}{2z-2y}\] since the second equation is exactly the same as the first, except with different parameters $e_1, \dots, h_1$ and with $1-\alpha$ substituted for $\alpha$, and $\beta$ substituted for $1-\beta$. We now claim that \[\frac{a(2y-1)}{2z-2y}<\frac{1}{2}\] for our proposed values $y=2$ and $z=3a+3$, and therefore satisfying both of the above inequalities is impossible. We have that $\frac{a(2y-1)}{2z-2y} = \frac{3a}{2(3a+1)}<\frac{3a}{2(3a)} = \frac{1}{2}$ as desired.
\end{proof}

Theorem \ref{Ybarneeded} suggests that establishing Conjecture \ref{4x4conj} would be significantly more involved than proving Conjecture \ref{strong326}, because we would need to consider $8$ different variables in creating the channel $P_{\overline{Z}|Z}$. We conjecture one possible minor simplification regarding $P_{\overline{Z}|Z}$.

\begin{definition}[\cite{mackay2003information}]
Consider a binary random variable $B$. Call a channel $P_{\overline{B}|B}$ a \textit{Z-shaped channel} if at least one of $P_{\overline{B}|B}(\overline{0},0)$, $P_{\overline{B}|B}(\overline{1},0)$, $P_{\overline{B}|B}(\overline{0},1)$, and $P_{\overline{B}|B}(\overline{1},1)$ is zero.
\end{definition}

\begin{center}
\begin{figure}
\begin{tikzpicture}

\coordinate (A) at (0,0);
\coordinate (B) at (2,0);
\coordinate (C) at (2,2);
\coordinate (D) at (0,2);
\coordinate (E) at (-0.5,1);
\coordinate (F) at (2.5,1);

\draw [->] (A) -- (B);
\draw [->] (A) -- (2,1.9);
\draw [->] (D) -- (1.9,2);

\node[left] at (A) {$1$};
\node[right] at (B) {$\overline{1}$};
\node[right] at (C) {$\overline{0}$};
\node[left] at (D) {$0$};
\node[left] at (E) {$B$};
\node[right] at (F) {$\overline{B}$};

\end{tikzpicture}
\caption{Example of Z-shaped channel, with $P_{\overline{B}|B}(\overline{1},0)=0$.}
\end{figure}
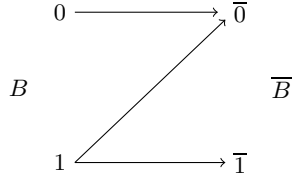
\end{center}

We observe the following:
\begin{theorem}
Let $B$ be a binary random variable. Any channel $P_{\overline{B}|B}$ is equivalent to using a $Z$-shaped channel with some probability, and performing a fair coin flip (with the outcome of the coin determining the outputted bit) otherwise. 
\end{theorem}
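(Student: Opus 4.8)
The plan is to pass to the $2\times 2$ stochastic matrix of the channel and exhibit the claimed mixture explicitly as a convex combination of matrices, using an extreme-point/averaging argument.

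First I would write $M$ for the column-stochastic matrix of $P_{\overline{B}|B}$, so that the four entries of $M$ are exactly $P_{\overline{B}|B}(\overline{0},0)$, $P_{\overline{B}|B}(\overline{1},0)$, $P_{\overline{B}|B}(\overline{0},1)$, $P_{\overline{B}|B}(\overline{1},1)$, with each column summing to $1$. Let $F$ be the matrix all of whose entries equal $\tfrac12$ (the stochastic matrix of a fair coin flip, whose output ignores the input), and let $m := \min_{i,j} M_{ij}$ be the smallest entry of $M$. Since each column of $M$ has two nonnegative entries summing to $1$, each column contains an entry $\le \tfrac12$, so $0 \le m \le \tfrac12$.

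Next I would set $\lambda := 1 - 2m \in [0,1]$ and, assuming $\lambda > 0$, define $Z := \tfrac{1}{\lambda}\bigl(M - 2m\,F\bigr)$. Then $\lambda Z + (1-\lambda)F = M$ by construction, so $M$ is realized by applying $Z$ with probability $\lambda$ and $F$ with probability $1-\lambda$. It remains to check that $Z$ is a legitimate Z-shaped channel: each entry of $Z$ equals $\tfrac{M_{ij}-m}{1-2m} \ge 0$; the columns of $Z$ sum to $\tfrac{1-2m}{1-2m}=1$ because the columns of both $M$ and $F$ sum to $1$; and the entry of $Z$ at the position where $M$ attains its minimum value $m$ is $0$, so $Z$ is Z-shaped. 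In the degenerate case $\lambda = 0$, i.e. $m = \tfrac12$, every entry of $M$ is $\ge \tfrac12$ while the columns sum to $1$, which forces $M = F$; then $M$ is itself the fair coin flip and we may take the Z-shaped channel to be arbitrary (used with probability $0$).

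This is essentially an extreme-point decomposition, so I do not expect a genuine obstacle; the only points needing care are verifying that the mixing weight $\lambda = 1-2m$ actually lies in $[0,1]$ — which is exactly the inequality $m \le \tfrac12$ coming from column-stochasticity — and treating the boundary case $m = \tfrac12$ separately, since the formula for $Z$ divides by $\lambda$.
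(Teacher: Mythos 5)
Your proposal is correct and is essentially the paper's argument: both identify the minimum entry $m$ of the $2\times 2$ stochastic matrix and decompose the channel as a fair coin flip with probability $2m$ plus a Z-shaped channel with probability $1-2m$. Your matrix-level formula $Z = \tfrac{1}{1-2m}\bigl(M - 2mF\bigr)$ is a cleaner way to write the same decomposition, since it handles the position of the minimum and the degenerate biased-coin case uniformly, whereas the paper writes the entries explicitly for the representative case $\min(a,b,c,d)=b$ and disposes of the degenerate cases separately.
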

\begin{proof}
Let $P_{\overline{B}|B}(\overline{0},0) = a$, $P_{\overline{B}|B}(\overline{1},0) = c$, $P_{\overline{B}|B}(\overline{0},1) = b$, and $P_{\overline{B}|B}(\overline{1},1) = d$. Suppose that $\min(a,b,c,d) = b$ (the other cases are similar). Then it can be verified that $P_{\overline{B}|B}$ is equivalent to performing a fair coin flip with probability $2b$, and using the channel $P_{\overline{B}'|B}$ defined by $P_{\overline{B}'|B}(\overline{0}, 0) = 1$, $P_{\overline{B}'|B}(\overline{0}, 1) = \frac{c-b}{a-b}$, and $P_{\overline{B}'|B}(\overline{1},1) = \frac{d-b}{a-b}$ otherwise. If $a=b=1-c=1-d\neq \frac{1}2$, then $P_{\overline{B}|B}$ is a biased coin, which is a weighted average of a fair coin and a Z-shaped channel always outputting $\overline{0}$, if $a>\frac{1}2$, or a Z-shaped channel always outputting $\overline{1}$, if $a<\frac{1}2$. Finally, if $a=b=c=d=\frac{1}{2}$, then $P_{\overline{B}|B}$ is already a fair coin, so we are done.
\end{proof}

In order to make $\overline{X}\perp\!\!\!\perp \overline{Y}|\overline{Z}$, it seems counterproductive for Eve to throw away her information (by using a coin flip) with some probability. For this reason, we conjecture that we can restrict the space of channels $P_{\overline{Z}|Z}$ to those such that the transitions between $0$ and $1$ and $\overline{0}$ and $\overline{1}$ form a Z-shaped channel.

\begin{conjecture}
There exists an $a>0$ in the above distribution such that for all binarizations $P_{\overline{X}|X}$ and $P_{\overline{Y}|Y}$, there exists $P_{\overline{Z}|Z}$ such that at least one of $P_{\overline{Z}|Z}(\overline{0}, 0)$, $P_{\overline{Z}|Z}(\overline{1}, 0)$, $P_{\overline{Z}|Z}(\overline{0}, 1)$, and $P_{\overline{Z}|Z}(\overline{1}, 1)$ is $0$ and
\[  (\overline{X} \perp\!\!\!\perp \overline{Y}) | \overline{Z}. \]
\end{conjecture}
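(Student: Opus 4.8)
The plan is to adapt the explicit, ITV-style construction of Section~\ref{iib} to this distribution; the first step is to cast the independence requirement algebraically. Since $X\perp\!\!\!\perp Y\mid Z=z$ whenever $z\notin\{0,1\}$, it suffices to arrange $(\overline X\perp\!\!\!\perp\overline Y)\mid\overline Z=\overline 0$ and $(\overline X\perp\!\!\!\perp\overline Y)\mid\overline Z=\overline 1$. Write the unnormalized conditional $(X,Y)$-table at $\overline Z=w$ as a $4\times 4$ matrix $M_w$ (which depends on Eve's channel but not on the binarizations), and let $P$ (a $2\times 4$ matrix) and $Q$ (a $4\times 2$ matrix) be Alice's and Bob's binarization maps. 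Then the unnormalized conditional joint distribution of $(\overline X,\overline Y)$ given $\overline Z=w$ is $\propto PM_wQ$, so the requirement is $\det(PM_wQ)=0$. A convenient \emph{sufficient} condition, playing the role of the ITV, is that $M_w\vec q$ be a scalar multiple of the row-sum vector $M_w\vec 1$ — equivalently that $P(\overline Y=\overline 0\mid X=i,\overline Z=w)$ equal a common target $t_w$ for every $i$, after which collapsing $X$ is irrelevant — or, dually, that $\vec p^{\,T}M_w$ be proportional to $\vec 1^{\,T}M_w$; Eve picks whichever side suits the given $\vec p,\vec q$.

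Next I exploit the eight ``off-diagonal'' values of $Z$, those $(i,j)$ with $X\in\{0,1\},Y\in\{2,3\}$ or vice versa, on which $X$ and $Y$ are already determined, so Eve may route each to $\overline 0$ or $\overline 1$ with an arbitrary probability $g_{ij}$, resp.\ $h_{ij}$ (with $g_{ij}+h_{ij}\le 1$), without creating any dependence there. These sixteen knobs enter $M_{\overline 0},M_{\overline 1}$ linearly. Routing $Z=0\to\overline 0$ and $Z=1\to\overline 1$ deterministically already makes the $\{0,1\}\to\{\overline 0,\overline 1\}$ block of $P_{\overline Z\mid Z}$ a $Z$-shaped channel, so the extra hypothesis of the statement costs nothing (and, if more freedom is needed, Eve may instead mix $Z=0$ while keeping $P_{\overline Z\mid Z}(\overline 0,1)=0$, still $Z$-shaped). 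With this choice, $M_w\vec q\parallel M_w\vec 1$ unwinds exactly as in the ITV computation into a system: for each row $i$, a prescribed linear combination $a\sum_j g_{ij}(q_j-t_w)$ of two off-diagonal knobs must equal a fixed multiple — $\tfrac18$ or $\tfrac14$, according to the core entry in that row — of $(t_w-q_i)$ (for the row whose core entry is $0$ the constraint degenerates to ``$t_w$ lies between the two relevant $q$'s''). I would then run a case analysis on the combinatorial type of $\vec p$ and $\vec q$ relative to the block partition $\{0,1\}\mid\{2,3\}$, using that both conditions are invariant under simultaneous affine transformations of the entries of $\vec p$ (resp.\ $\vec q$), exactly as in the proof of Theorem~\ref{Ybarneeded}, to reduce each case to a one-parameter family; in each case I would exhibit explicit targets $t_{\overline 0},t_{\overline 1}$ and explicit $g_{ij},h_{ij}\in[0,1]$ solving the system, choosing $a$ large enough that every constructed value is in range and the budget $g_{ij}+h_{ij}\le 1$ is met.

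The main obstacle — and the reason the statement is posed as a conjecture — is twofold. First, a \emph{single} $a$ must serve \emph{all} binarizations: the explicit construction will be feasible only once $a$ exceeds some threshold $a_0(\vec p,\vec q)$, and the crux is $\sup_{\vec p,\vec q}a_0(\vec p,\vec q)<\infty$, i.e.\ that the worst binarizations — those most nearly aligned with, or most nearly orthogonal to, the block partition, where Eve's feasible region degenerates — do not force $a\to\infty$. Second, the condition $M_w\vec q\parallel M_w\vec 1$ is strictly stronger than $\det(PM_wQ)=0$, and by Theorem~\ref{Ybarneeded} it is genuinely unachievable for some $\overline Y$; for those binarizations Eve must fall back on the full rank-one-after-collapse condition, and one must verify that collapsing $X$ by $\vec p$ really does restore feasibility. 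Since even the $N=1$ case of Conjecture~\ref{4x4conj} (the same statement without the $Z$-shaped restriction) is open, I expect this second point to be where the real work lies; a plausible route is to parametrize the residual freedom left in $\det(PM_wQ)=0$ after fixing the off-diagonal row averages and show, by an intermediate-value argument in the spirit of Proposition~4 of \cite{GiReWo02}, that a solution persists for all $\vec p$ once $a$ is taken large.
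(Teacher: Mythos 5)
The statement you are asked to prove is one of the paper's open conjectures: the authors explicitly state that they ``have not been able to establish the above statement for the case $N=1$,'' and they offer no proof of the Z-shaped version either -- their only supporting results are the decomposition of an arbitrary binary channel into a Z-shaped channel plus a coin flip, the negative result of Theorem \ref{Ybarneeded}, and the perturbative Theorem \ref{zetasmall}, which covers only binarizations with $\zeta=(x_2-x_3)(y_2-y_3)$ sufficiently small. So there is no proof in the paper to compare against, and your submission must stand or fall as a proof on its own. It does not stand: it is a research plan whose decisive steps are acknowledged but not carried out. Concretely, (i) your main device -- forcing $P(\overline{Y}=\overline{0}\mid X=i,\overline{Z}=w)$ to a common target $t_w$, i.e.\ $M_w\vec q\parallel M_w\vec 1$ -- is exactly the condition $X\perp\!\!\!\perp\overline{Y}\mid\overline{Z}$, which Theorem \ref{Ybarneeded} shows is unattainable for some $\overline{Y}$ for every $a>0$ (within the channel shapes considered there), so the construction cannot cover all binarizations; the fallback you name, working directly with $\det(PM_wQ)=0$ after collapsing $X$ by $\vec p$, is precisely the open problem and you give no argument that collapsing restores feasibility. (ii) The conjecture demands a single $a$ fixed in advance of $\vec p,\vec q$; your ``choose $a$ large enough'' is per-binarization, and the required uniformity $\sup_{\vec p,\vec q}a_0(\vec p,\vec q)<\infty$ is stated as the crux but never addressed. (iii) The promised case analysis with explicit targets $t_{\overline 0},t_{\overline 1}$ and knobs $g_{ij},h_{ij}\in[0,1]$ is never performed, so even the cases where your sufficient condition could work are not verified.

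What you do contribute is a clean algebraic reformulation (independence of two binary variables as $\det(PM_wQ)=0$, with the eight already-independent $Z$-values entering $M_{\overline 0},M_{\overline 1}$ linearly, and the observation that deterministic routing of $Z=0,1$ automatically satisfies the Z-shape requirement), which is consistent in spirit with the paper's own Section \ref{family} manipulations and with the perturbation argument behind Theorem \ref{zetasmall}. But as submitted, the argument proves at most a feasibility statement for favorable $(\vec p,\vec q)$ and leaves the conjecture itself open, exactly where the paper leaves it.
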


Another natural approach for the $N=1$ case of this distribution is to try and find a suitable ITV, similar to how the $N=1$ case for the distribution introduced in Section 6.
In particular, one would need to find a function $\upsilon : \mathbb{R}^4\to \mathbb{R}$ which maps the four values corresponding to Bob's transition map ($P_{\overline{Y}|Y}(\overline{0},r-1)$ for $r\in \{1,2,3,4\}$) to the target value for each of the four fractions
\[\frac{P\left(X=0, \overline{Y}=0, \overline{Z}=\overline{0}\right)}{P\left(X=0, \overline{Z}=\overline{0}\right)} = \dots = \frac{P\left(X=3, \overline{Y}=0, \overline{Z}=\overline{0}\right)}{P\left(X=3, \overline{Z}=\overline{0}\right)},\]
which correspond to $X\perp\!\!\!\perp\overline{Y}|\overline{Z}=\overline{0}$.
Furthermore, in order for this approach to generalize for $N\geq 2$, an ITV function $\upsilon_2 : \mathbb{R}^{16}\to \mathbb{R}$ must be chosen similar to conjecture \ref{n=2for214}.
In particular, given the transition values $b_{rs}:=P_{\overline{Y^2}|Y^2}(0,(r-1)(s-1))$ for $r,s\in\{1,2,3,4\}$ (where $(r-1)(s-1)$ indicates the concatenation of $Y_1=r-1$ and $Y_2=s-1$), one possible candidate for $\upsilon_2$ is as follows:
\[\upsilon_2(b_{11},b_{12},b_{13},\dots ,b_{44}):=\upsilon(\upsilon(b_{11},b_{12},b_{13},b_{14}),\upsilon(b_{21},b_{22},b_{23},b_{24}),\dots ,\upsilon(b_{41},b_{42},b_{43},b_{44})).\]
One drawback of such an ITV for the $N=2$ case is that the two components of $Y^2:=Y_1Y_2$ are being treated differently: if the $b_{ij}$ are placed in a 4 by 4 table, the $\upsilon$ is taken over the rows first rather than the columns first, giving priority to $Y_1$. However, for a special class of $\upsilon$ ITV, this issue is not present:

\begin{definition}\label{rowcol}
    Call an ITV $\upsilon : \mathbb{R}^4\to \mathbb{R}$ \textit{row-column equivalent} if the following statement is true for all $b_{ij}\in \mathbb{R}$ ($i,j\in\{1,2,3,4\}$):
    \[\upsilon(\upsilon(b_{11},b_{12},b_{13},b_{14}),\upsilon(b_{21},b_{22},b_{23},b_{24}),\dots ,\upsilon(b_{41},b_{42},b_{43},b_{44}))=\]
    \[\upsilon(\upsilon(b_{11},b_{21},b_{31},b_{41}),\upsilon(b_{12},b_{22},b_{32},b_{42}),\dots ,\upsilon(b_{14},b_{24},b_{34},b_{44})).\]
\end{definition}

\begin{theorem}\label{weightavg}
    All ITV of the form $\upsilon(r,s,t,u):=w_1r+w_2s+w_3t+w_4u$ with $w_1,w_2,w_3,w_4\in \mathbb{R}$ are row-column equivalent.
\end{theorem}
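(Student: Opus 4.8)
The plan is to prove the identity by a direct computation: expand both nested applications of the linear functional $\upsilon$ and show that each produces the same linear combination of the sixteen entries $b_{ij}$. Write $\upsilon(r,s,t,u) = w_1 r + w_2 s + w_3 t + w_4 u$. Applying $\upsilon$ to the rows first gives, for the $i$-th row, the value $\sum_{j} w_j b_{ij}$, and then applying $\upsilon$ to the vector of four row-values gives $\sum_{i} w_i \left( \sum_j w_j b_{ij} \right) = \sum_{i,j} w_i w_j b_{ij}$. Applying $\upsilon$ to the columns first gives, for the $j$-th column, the value $\sum_{i} w_i b_{ij}$, and then applying $\upsilon$ to the vector of four column-values gives $\sum_{j} w_j \left( \sum_i w_i b_{ij} \right) = \sum_{i,j} w_i w_j b_{ij}$. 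Since both sides equal $\sum_{i,j} w_i w_j b_{ij}$, they are equal, which is exactly the row-column equivalence condition of Definition \ref{rowcol}.

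The key steps, in order, are: (1) substitute the linear form of $\upsilon$ into the inner applications and record that the $i$-th inner value (row case) is $\sum_j w_j b_{ij}$; (2) substitute these four values into the outer $\upsilon$ and collect terms to obtain the double sum $\sum_{i,j} w_i w_j b_{ij}$; (3) repeat steps (1)--(2) for the column case, obtaining the same double sum (with the roles of the summation indices swapped, which does not change the value); (4) conclude equality. One should note that the indices run over $\{1,2,3,4\}$ even though the statement of Definition \ref{rowcol} abbreviates with ``$\dots$'', so the full list of four inner $\upsilon$-values on each side should be handled uniformly by the summation notation rather than term-by-term.

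There is essentially no obstacle here: the statement is true precisely because composition of a fixed linear functional with itself in a ``rows-then-columns'' versus ``columns-then-rows'' order yields the symmetric bilinear form $\sum_{i,j} w_i w_j b_{ij}$, and symmetry of this expression under interchange of the two summation indices is immediate. The only thing to be careful about is bookkeeping: making sure that the weight $w_i$ attached to the $i$-th \emph{outer} slot is the same function as the weight $w_j$ attached to the $j$-th \emph{inner} slot (this is automatic since it is the same $\upsilon$ used throughout), so that the cross-terms genuinely match. If one wanted, one could phrase the argument in matrix language: writing $B = (b_{ij})$ and $\vec{w} = (w_1,w_2,w_3,w_4)^{T}$, the rows-first composition is $\vec{w}^{T} B \vec{w}$ and the columns-first composition is $\vec{w}^{T} B^{T} \vec{w}$, and these are equal since $\vec{w}^{T} B \vec{w}$ is a scalar, hence equal to its own transpose $\vec{w}^{T} B^{T} \vec{w}$.
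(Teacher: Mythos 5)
Your proof is correct and is essentially the same as the paper's: both expand the nested linear functionals into the symmetric double sum $\sum_{i,j} w_i w_j b_{ij}$ and observe that this is symmetric under exchanging the summation indices. The matrix reformulation $\vec{w}^{T}B\vec{w} = \vec{w}^{T}B^{T}\vec{w}$ is a nice, compact restatement of the same fact but does not constitute a different argument.
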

\begin{proof}
    Note that
    \[\upsilon(\upsilon(b_{11},b_{12},b_{13},b_{14}),\upsilon(b_{21},b_{22},b_{23},b_{24}),\dots ,\upsilon(b_{41},b_{42},b_{43},b_{44}))=\upsilon\left(\sum_{i=1}^4 w_ib_{1i},\sum_{i=1}^4 w_ib_{2i},\dots,\sum_{i=1}^4 w_ib_{4i}\right)\]
    \[=\sum_{j=1}^4\sum_{i=1}^4w_jw_ib_{ij}.\]
    Since this quantity is symmetric in $i,j$, the equation in the definition of row-column equivalent is satisfied.
\end{proof}

We now consider ITV of the form $\upsilon(r,s,t,u):=w_1r+w_2s+w_3t+w_4u$ due to the above property.
Note that the fractions of the form $\frac{P\left(X=0, \overline{Y}=0, \overline{Z}=\overline{0}\right)}{P\left(X=0, \overline{Z}=\overline{0}\right)}$ are weighted averages of the values corresponding to Bob's transition map. As a result, the ITV should also be a weighted average of these values. This means that the condition $w_1+w_2+w_3+w_4=1$ should be imposed on this class of ITV.

Now, we consider the $N=2$ case, with the $\upsilon$ and $\upsilon_2$ defined above. We can classify Bob's transition values $\{b_{ij}\}$ based on their $\upsilon_2$ value. In particular, since $\upsilon$ is a weighted average, we can explicitly construct transformations that preserve the $\upsilon_2$ value:

\begin{definition}
    For a set of Bob transition values $\{b_{ij}\}$, a \textit{row transformation} is defined as considering an $i\in \{1,2,3,4\}$, and transforming four of the variables $(b_{i1},b_{i2},b_{i3},b_{i4})$ in the following manner ($d_i\in \mathbb{R}$): 
    \[x\mapsto \upsilon(b_{i1},b_{i2},b_{i3},b_{i4})+d_i(x-\upsilon(b_{i1},b_{i2},b_{i3},b_{i4})).\]
    A \textit{column transformation} is defined similarly, but the variables $(b_{1i},b_{2i},b_{3i},b_{4i})$ are used instead.
\end{definition}

\begin{theorem}
    Both row and column transformations preserve the $\upsilon_2$ value.
\end{theorem}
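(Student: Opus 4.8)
The plan is to first handle row transformations directly and then reduce column transformations to the row case via Theorem \ref{weightavg}. The central observation is that the operation $x \mapsto m + d_i(x-m)$, where $m = \upsilon(b_{i1},b_{i2},b_{i3},b_{i4})$, is an affine map fixing the point $m$, and that applying the weighted average $\upsilon$ to a row after such a map returns the same value $m$ — provided the weights $w_1,w_2,w_3,w_4$ sum to $1$, which we have imposed on this class of ITV.

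Concretely, for a row transformation on row $i$, write $m_i = \upsilon(b_{i1},b_{i2},b_{i3},b_{i4}) = \sum_{j} w_j b_{ij}$, and let $b_{ij}' = m_i + d_i(b_{ij} - m_i)$ be the transformed entries (the other rows unchanged). Then
\[
\upsilon(b_{i1}',b_{i2}',b_{i3}',b_{i4}') = \sum_{j} w_j\bigl(m_i + d_i(b_{ij}-m_i)\bigr) = m_i\sum_j w_j + d_i\Bigl(\sum_j w_j b_{ij} - m_i\sum_j w_j\Bigr) = m_i,
\]
using $\sum_j w_j = 1$ in the final step. Hence the inner $\upsilon$-value of every row is unchanged by the transformation, so the outer application of $\upsilon$ — and therefore $\upsilon_2$ — is unchanged. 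This settles the row case.

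For a column transformation, I would invoke Theorem \ref{weightavg}: since $\upsilon$ is a weighted average with weights summing to $1$, it is row-column equivalent, so $\upsilon_2(\{b_{ij}\})$ equals the value obtained by applying $\upsilon$ to the columns first and then to the resulting four numbers. A column transformation on column $i$ is exactly a ``row transformation'' of this transposed computation, so the identical calculation (with $m_i$ now the $\upsilon$-value of the $i$-th column) shows the column $\upsilon$-values are preserved, and hence so is $\upsilon_2$.

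I do not anticipate a genuine obstacle here — the argument is a one-line computation repeated twice — but the one point requiring care is that the column case genuinely needs row-column equivalence (Theorem \ref{weightavg}): without it, $\upsilon_2$ is defined rows-first and a column transformation would not obviously interact well with the nested structure. Making explicit that the restriction $w_1+w_2+w_3+w_4=1$ is what makes both the invariance computation and the appeal to Theorem \ref{weightavg} go through is the only subtlety worth flagging.
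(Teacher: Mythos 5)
Your proof is correct and takes essentially the same approach as the paper: both arguments show that the inner $\upsilon$-value of the transformed row is unchanged (the paper by decomposing the affine map $x\mapsto m_i + d_i(x-m_i)$ into three steps, you by a single direct expansion using $\sum_j w_j = 1$), and both reduce the column case to the row case via Theorem \ref{weightavg}. Your version is a fine compression of the paper's argument, and you correctly flag the normalization $w_1+\cdots+w_4=1$ as the load-bearing hypothesis.
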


\begin{proof}
    Consider an arbitrary row transformation on $(b_{i1},b_{i2},b_{i3},b_{i4})$. Note that one of the terms in $\upsilon_2$ is $\upsilon(b_{i1},b_{i2},b_{i3},b_{i4})$, so if we can prove that this does not change, then we are done. Since $\upsilon$ is a weighted average, the transformation $x\mapsto x-\upsilon(b_{i1},b_{i2},b_{i3},b_{i4})$ turns the $\upsilon$ value of these 4 numbers to 0. Furthermore, the transformation $x\mapsto d_ix$ will multiply the $\upsilon$ value by $d_i$, still leaving it at 0. Finally, the transformation $x\mapsto x+\upsilon(b_{i1},b_{i2},b_{i3},b_{i4})$ makes the $\upsilon$ value return to its original value, as desired. For column transformations, apply this same logic in the column-based version (using $\upsilon(b_{1i},b_{2i},b_{3i},b_{4i})$) of $\upsilon_2$, as weighted average $\upsilon$ are row-column equivalent by \ref{weightavg}.
\end{proof}

By the above theorem, we can change Bob's transition values for the $N=2$ case in 8 ways: applying row transformations on $(b_{i1},b_{i2},b_{i3},b_{i4})$ and applying column transformations on $(b_{1i},b_{2i},b_{3i},b_{4i})$ for $i\in \{1,2,3,4\}$. 

Now, consider the scenario for a general $N$ and a set of Bob transition values $\{b_{i_1i_2\dots i_N}\}$. Additionally, define $\upsilon_N$ in a recursive manner based on $\upsilon_{N-1}$ (with $\upsilon_1:= \upsilon$):
\[\upsilon_N(\{b_{i_1i_2\dots i_N}\}):=\upsilon(\upsilon_{N-1}(\{b_{1i_2\dots i_{N}}\}),\upsilon_{N-1}(\{b_{2i_2\dots i_{N}}\}),\upsilon_{N-1}(\{b_{3i_2\dots i_{N}}\}),\upsilon_{N-1}(\{b_{4i_2\dots i_{N}}\})).\]
Note that the $N$-dimensional analogue of \ref{weightavg} is true, and so we can construct row-column-type transformations for each of the $N$ dimensions:
\[x\mapsto \upsilon(b_{1i_2\dots i_N},b_{2i_2\dots i_N},b_{3i_2\dots i_N},b_{4i_2\dots i_N})+d_i(x-\upsilon(b_{1i_2\dots i_N},b_{2i_2\dots i_N},b_{3i_2\dots i_N},b_{4i_2\dots i_N})).\]
The number of transformations of this type is $N\cdot 4^{N-1}$ ($N$ choices for which coordinate to vary from 1 to 4, and 4 options for each of the $N-1$ fixed coordinates). However, note that for $N\geq 4$, we have $N\cdot 4^{N-1}\geq 4^N$, the number of variables in Bob's transition map, giving us the following theorem:

\begin{theorem}\label{lindep}
    For $N\geq 4$, if there exists a family of weighted average ITV $\upsilon, \upsilon_2, \upsilon_3,\dots$ as defined previously, then the row-column-type transformations are linearly dependent of each other (i.e. any non-trivial row-column-type transformation affecting the row/column $T$ can be constructed from a composition of row-column-type transformations that do not directly act on $T$).
\end{theorem}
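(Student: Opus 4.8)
\emph{Proof idea.} I would translate the statement into finite-dimensional linear algebra. Identify a binarization of $Y^N$ with the vector $b=(b_{i_1\cdots i_N})$ of its transition values, living in $V:=\mathbb{R}^{4^N}$. A row-column-type transformation is specified by a \emph{fiber} $F$ --- a direction $k\in\{1,\dots,N\}$ together with a choice of values for the other $N-1$ indices, of which there are exactly $N\cdot 4^{N-1}$ --- and a parameter $d$, and is the linear map $\phi_{F,d}\colon V\to V$ that fixes every coordinate off $F$ and sends each of the four coordinates $x$ on $F$ to $\upsilon(F)+d\,(x-\upsilon(F))$, where $\upsilon(F)$ is the weighted $\upsilon$-average of the values on $F$. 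Let $g_F\in\mathrm{End}(V)$ be the derivative of $\phi_{F,d}$ in $d$ at $d=1$ (where $\phi_{F,1}=\mathrm{id}$); concretely $g_F$ sends each coordinate $x$ on $F$ to $x-\upsilon(F)$ and kills everything else. ``The transformations are linearly dependent'' then means: for any fixed $b$ the $N\cdot 4^{N-1}$ vectors $g_F(b)\in V$ are linearly dependent; the parenthetical is the refinement that for every fiber $T$ there is such a relation $\sum_F\lambda_F\,g_F(b)=0$ with $\lambda_T\neq 0$, so that $g_T(b)=-\lambda_T^{-1}\sum_{F\neq T}\lambda_F\,g_F(b)$ is built from the other generators.

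The first step is to observe that, since $\upsilon$ is a weighted average, $\upsilon_N$ is a \emph{linear} functional of $b$: expanding its recursive definition and collapsing the nested weighted averages gives $\upsilon_N(b)=\langle W,b\rangle$ with $W_{i_1\cdots i_N}=\prod_\ell w_{i_\ell}$, and $W\neq 0$ because some $w_m^N\neq 0$. By exactly the computation used to prove that row and column transformations preserve $\upsilon_2$ --- each $\phi_{F,d}$ zeroes and then restores the $\upsilon$-value of the fiber it acts on, and that value is one of the arguments of the nested $\upsilon$ computing $\upsilon_N$ (the recursion being reorderable by the $N$-dimensional analogue of Theorem~\ref{weightavg}) --- every $\phi_{F,d}$ preserves $\upsilon_N$. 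Hence $\phi_{F,d}^{\mathsf{T}}W=W$ for all $F$ and $d$, and differentiating at $d=1$ gives $g_F^{\mathsf{T}}W=0$; equivalently $g_F(b)\in W^{\perp}$ for every $F$ and every $b$. So all $N\cdot 4^{N-1}$ vectors $g_F(b)$ lie in the $(4^N-1)$-dimensional space $W^{\perp}$.

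Now count. For $N\ge 4$ we have $N\cdot 4^{N-1}\ge 4\cdot 4^{N-1}=4^N>4^N-1=\dim W^{\perp}$, so the $g_F(b)$ are linearly dependent and some nontrivial $\sum_F\lambda_F\,g_F(b)=0$ holds, which already gives the ``linearly dependent'' conclusion (and, to first order, writes one transformation as a composition of the others). For the per-$T$ refinement I would argue that the relation can be taken with $\lambda_T\neq 0$. The group permuting the $N$ directions, and --- when all $w_i$ are equal --- also the group permuting the four values inside each coordinate, acts on $V$, preserves $\upsilon_N$, conjugates $g_F$ to $g_{gF}$, and permutes the fibers; consequently the assignment $b\mapsto\{\lambda:\sum_F\lambda_F g_F(b)=0\}$ is equivariant. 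If the solution space were contained in $\{\lambda_T=0\}$ for generic $b$, then by equivariance it would be contained in $\{\lambda_F=0\}$ for every $F$ in the orbit of $T$; when that orbit is all fibers (the equal-weight case, where the action is transitive) this forces the solution space to vanish, contradicting the count. Hence for equal weights, generic $b$, and every $T$ there is a relation with $\lambda_T\neq 0$, proving the parenthetical.

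The main obstacle is the per-$T$ statement for a general \emph{admissible} $\upsilon$, i.e.\ a weighted average whose iterates $\upsilon_N$ genuinely arise as independence target values for this distribution. For unequal weights the only symmetry available is the direction-permutation group, which is not transitive on fibers (the multiset of fixed index-values is an invariant), so the equivariance argument only rules out $\lambda_T=0$ when $T$ lies in suitable orbits; and for degenerate weights such as $(1,0,0,0)$ one can exhibit fibers whose generator is truly irredundant, so some restriction on $\upsilon$ is genuinely needed. A complete proof therefore requires first pinning down which weighted averages are admissible --- the point the surrounding discussion leaves open --- and then checking that for each of them the relation among the $g_F(b)$ has full support. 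The unconditional part, provable exactly as above for all $N\ge 4$, is the bare linear-dependence count; the per-$T$ refinement for general admissible $\upsilon$ is where I expect the real work to lie.
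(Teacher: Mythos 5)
The paper supplies no proof of Theorem~\ref{lindep}; the only argument offered is the preceding sentence's count $N\cdot 4^{N-1}\geq 4^N$. Your proof follows the same route but actually repairs it: for $N=4$ that comparison gives equality, and $4^N$ vectors in $\mathbb{R}^{4^N}$ need not be dependent, so the paper's count by itself does not close the case $N=4$. Your observation that every $g_F(b)$ is orthogonal to the fixed non-zero tensor $W$ (because $\phi_{F,d}$ preserves the linear functional $\upsilon_N=\langle W,\cdot\rangle$), and hence lies in a hyperplane of dimension $4^N-1$, is exactly the missing ingredient, and it is a genuine improvement over the text's sketch.

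You also correctly separate what the dimension count proves (bare linear dependence of the $g_F(b)$) from what the theorem claims (the per-$T$ parenthetical and, as written, expressibility as a \emph{composition}). Your equivariance argument for equal weights together with the counterexample sketch for degenerate weights such as $(1,0,0,0)$ makes a convincing case that the per-$T$ claim is not a consequence of the count alone and cannot hold for every weighted average, so the theorem as stated is over-broad. There is a further gap, inherited from the paper's phrasing rather than introduced by you: fibers in different directions overlap in a coordinate and $\upsilon(F)$ feeds back into its own fiber, so the $\phi_{F,d}$ for different $F$ do not commute, and a linear relation among the infinitesimal generators $g_F(b)$ does not by itself give a factorization of $\phi_T$ as a composition of the other $\phi_F$. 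The honest unconditional content of this argument --- yours or the paper's --- is the pointwise linear dependence of $\{g_F(b)\}$ for $N\geq 4$, which you prove cleanly; the composition/per-$T$ claim needs both a restriction on admissible $\upsilon$ and a group-level argument that neither the paper nor the proposal yet supplies.
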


Due to the arbitrary nature of the $N\geq 4$ constraint in the statement of the theorem, we conjecture the following:

\begin{conjecture}\label{lindepsmall}
    Theorem \ref{lindep} holds true for $N=2$ and $N=3$.
\end{conjecture}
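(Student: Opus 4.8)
The plan is to recast the row--column-type transformations as a linear group action, turning the conjecture into a finite-dimensional Lie-algebra statement. Identify $V := \mathbb{R}^{[4]^N}$ with the space of Bob transition tensors $\{b_{i_1\cdots i_N}\}$. Since each step of the recursion uses a weighted average, an easy induction gives $\upsilon_N(b) = \sum_{i_1,\dots,i_N} w_{i_1}\cdots w_{i_N}\, b_{i_1\cdots i_N}$, a single linear functional on $V$ that is symmetric under permuting the $N$ axes (the $N$-dimensional form of Theorem \ref{weightavg}). For a ``direction'' $T=(k,\vec\jmath)$ — an axis $k\in\{1,\dots,N\}$ and a residual multi-index $\vec\jmath\in[4]^{N-1}$ — let $F_T\subseteq[4]^N$ be the fiber it labels. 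A direct check shows the transformation $x\mapsto\upsilon(\cdot)+d(x-\upsilon(\cdot))$ on $F_T$ extends to a \emph{linear} operator $A_{T,d}$ on $V$: it acts as $d\,\mathrm{Id}+(1-d)\mathbf 1 w^\top$ on the $F_T$-coordinates and as the identity elsewhere, and $A_{T,d}A_{T,d'}=A_{T,dd'}$, so these operators generate a subgroup $\mathcal{G}\le GL(V)$ preserving every affine level set $\{\upsilon_N=c\}$ (dimension $4^N-1$). Its Lie algebra $\mathfrak{g}$ is generated inside $\mathfrak{gl}(V)$ by the fixed matrices $X_T:=\partial_d A_{T,d}|_{d=1}$, which act as the idempotent $\mathrm{Id}-\mathbf 1 w^\top$ on the $F_T$-coordinates and as $0$ elsewhere. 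Under this dictionary, ``$\Phi_{T_0}$ can be built from a composition of transformations avoiding $T_0$'' becomes $X_{T_0}\in\mathfrak{g}_{\widehat{T_0}}$, the Lie subalgebra generated by $\{X_T:T\neq T_0\}$; proving this for every $T_0$ gives $\mathfrak{g}_{\widehat{T_0}}=\mathfrak{g}$ and establishes the conjecture.

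It is worth isolating why the $N\ge 4$ argument does not transfer, since that pins down what a proof must do. For $N\ge 4$ one has $N\cdot 4^{N-1}\ge 4^N>4^N-1=\dim\{\upsilon_N=c\}$, so at every tensor $b$ the displacement vectors $\{X_T b\}$ are \emph{forced} to be linearly dependent, and the count survives deleting one direction. For $N=2,3$ there are only $8$ and $48$ directions inside level sets of dimension $15$ and $63$, so no forcing is available; in fact at a generic $b$ the vectors $\{X_T b\}$ are linearly \emph{independent} (for $N=2$: a relation $\sum_i a_i(X_{\mathrm{row}\,i}b)+\sum_j c_j(X_{\mathrm{col}\,j}b)=0$ read off entrywise forces $a_i(b_{ij}-\mu_i)=-c_j(b_{ij}-\nu_j)$, with $\mu_i=\sum_j w_j b_{ij}$ and $\nu_j=\sum_i w_i b_{ij}$, which has only the zero solution for generic $b$). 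So the conjecture cannot mean infinitesimal linear dependence and must be proved through genuinely nonabelian relations: one has to exhibit $X_{T_0}$ as an iterated bracket of the $X_T$, $T\neq T_0$. The structural fact to exploit is that two fibers of the same axis are disjoint, while a fiber of one axis meets a fiber of any other axis in at most one coordinate $p_0$; hence $[X_T,X_{T'}]$ for $T,T'$ on different axes is a difference of two rank-one operators supported on the off-diagonal blocks $F_T\times F_{T'}$ and $F_{T'}\times F_T$, ``pinned'' at $p_0$. For $N=2$ the plan is to combine the brackets $[X_{\mathrm{col}\,j},X_{\mathrm{row}\,i}]$ for $i\in\{2,3,4\}$, $j\in\{1,2,3,4\}$, together with the $X_{\mathrm{row}\,i}$ and $X_{\mathrm{col}\,j}$ themselves, and show by an explicit computation controlling the ``pinned'' cross terms that the Lie subalgebra they generate contains $X_{\mathrm{row}\,1}$ — and then, by the axis symmetry of $\upsilon_N$, each of the eight generators in turn; this is a finite linear-algebra problem in $\mathfrak{gl}(16)$, closable either by producing the relation by hand or by a rank computation backed by a computer-algebra check of the kind used for Theorem \ref{ITVcounter}. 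For $N=3$ the idea is to bootstrap from the $N=2$ identity one axis at a time: each generator $X_{T_0}$ is supported in a single $4\times 4$ ``slice'' obtained by freezing one coordinate, inside which the transformations of two of the axes restrict to ordinary row/column transformations, and the $N=2$ relation lifts verbatim to the full operators since all operators in it stay inside that slice; iterating over slices and over the three axes recovers every $X_{T_0}$ from generators other than $T_0$.

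The main obstacle is exactly the step the dimension count cannot supply: showing that the Lie closure of the $7$ (respectively $47$) remaining generators really does recover the omitted one. This has to come from a genuine identity, and the technical heart is obtaining a clean closed form for the brackets $[X_T,X_{T'}]$ — which means composing the restricted idempotents $\mathrm{Id}-\mathbf 1 w^\top$ across two overlapping fibers — and then tracking how their rank-one ``pinned'' pieces recombine under further bracketing back onto a single diagonal block $F_{T_0}\times F_{T_0}$. It is a priori possible that this computation shows the closure is strictly smaller than $\mathfrak{g}$, in which case Conjecture \ref{lindepsmall} as literally stated would be false and the natural repair is to restrict Theorem \ref{lindep} to the stratum of tensors actually occurring in the bound-secrecy problem — binarization probabilities, hence $0\le b_{i_1\cdots i_N}\le 1$ with the further combinatorial constraints of the distribution of Section \ref{family} — where extra affine relations among the $b$'s may reinstate the dependence that fails generically. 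A secondary, logistical obstacle is that for $N=3$ there are $48$ generators, and while the bracket bookkeeping is mechanical it is large enough that a hand argument would most likely need to be certified by computer algebra.
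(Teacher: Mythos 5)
The statement you were asked to prove is Conjecture \ref{lindepsmall}, which the paper leaves open: it states the conjecture and motivates it only by remarking that the $N\geq 4$ threshold in Theorem \ref{lindep} is arbitrary, but provides no proof. There is therefore no paper argument to compare against, and your proposal can only be judged on its own terms.

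Your Lie-algebraic recasting is a sound and clarifying formalization. The observation that $x\mapsto\upsilon(\cdot)+d(x-\upsilon(\cdot))$, applied across a fiber $F_T$, is a globally \emph{linear} operator $A_{T,d}$ acting as $d\,\mathrm{Id}+(1-d)\mathbf 1 w^\top$ on the $F_T$-coordinates is correct, as is the semigroup law $A_{T,d}A_{T,d'}=A_{T,dd'}$ and the identification of the infinitesimal generators $X_T$ with the restricted idempotents $\mathrm{Id}-\mathbf 1 w^\top$. You are also right that the paper's count $N\cdot 4^{N-1}\ge 4^N$ forces only pointwise linear dependence of the tangent vectors $\{X_T b\}$, not the compositional dependence that the theorem's parenthetical actually asserts, and that for $N=2,3$ even this forcing is absent since the $\{X_T b\}$ are generically independent (your entrywise relation $a_i(b_{ij}-\mu_i)=-c_j(b_{ij}-\nu_j)$ is correct and does force $a_i=c_j=0$ generically). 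In passing, this observation exposes a weakness in the paper's own Theorem \ref{lindep}: linear dependence of a collection of vectors does not imply that a designated vector lies in the span of the others, and span membership is in turn weaker than what ``constructed from a composition'' requires.

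The genuine gap is that the proposal stops precisely at the step where the conjecture lives. You have reduced the claim to a finite Lie-algebra identity --- that $X_{T_0}$ lies in the Lie closure of $\{X_T:T\neq T_0\}$ --- and correctly observed that it must come from iterated brackets because the generators themselves are generically independent. But no bracket relation is ever exhibited, no rank certificate is computed, and the bootstrapping from $N=2$ to $N=3$ via slices is asserted without verifying that the $N=2$ relation, rewritten in terms of the full $X_T$, actually stays supported inside a single frozen slice. You also flag, candidly, that the computation might show the Lie closure is strictly smaller, in which case Conjecture \ref{lindepsmall} as stated would be false. As submitted, this is a well-posed reformulation and a credible plan, not a proof; closing it requires actually producing the relation in $\mathfrak{gl}(16)$ for $N=2$ (by hand or by the kind of \textit{Mathematica} verification used for Theorem \ref{ITVcounter}), confirming the slice-lifting for $N=3$, and, if the relation fails generically, restating the conjecture on the constrained stratum of admissible binarization tensors relevant to the bound-secrecy problem.
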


Finally, an alternative approach to resolve the case $N=1$ is to fix $P_{\overline{Z}|Z}(\overline{1},0)=\alpha$ and $P_{\overline{Z}|Z}(\overline{0},1)=\beta$, as well as $a_0,b_0,c_0,\dots,h_0,a_1,b_1,c_1,\dots, h_1$ (Eve's transition probabilities defined in tables \ref{Fig1} and \ref{Fig2}) at $\frac{1}2$, and then vary some subset of $\{a_0,b_0,c_0,\dots g_0,h_0\}$ and $\{a_1,b_1,c_1,\dots g_1,h_1\}$ to allow Eve to satisfy the independence condition.
We focus on the case $\overline{Z}=\overline{0}$.

Suppose that the channels that Bob use to binarize are $P_{\overline{Y}|Y}(\overline{0},0)=y_0$, $P_{\overline{Y}|Y}(\overline{0},1)=y_1$, $P_{\overline{Y}|Y}(\overline{0},2)=y_2$, and $P_{\overline{Y}|Y}(\overline{0},3)=y_3$, and define $x_0$, $x_1$, $x_2$, and $x_3$ for Alice's binarization similarly. Note that 
\[\frac{P\left(\overline{X}=\overline{0},\overline{Y}=\overline{0},\overline{Z}=\overline{0}\right)}{P\left(\overline{X}=\overline{0},\overline{Y}=\overline{1},\overline{Z}=\overline{0}\right)}=\frac{(x_0+x_1)(y_0+y_1)+(x_0+x_1)(y_2+y_3)+(x_2+x_3)(y_0+y_1)+2(x_2y_2+x_3y_3)}{(\text{same as above with all $y_i$ replaced with $1-y_i$})}\]
\[=\frac{(x_0+x_1+x_2+x_3)(y_0+y_1+y_2+y_3)+(x_2-x_3)(y_2-y_3)}{(x_0+x_1+x_2+x_3)(4-y_0-y_1-y_2-y_3)-(x_2-x_3)(y_2-y_3)}\]
Note that in the original formulation of the problem, the desired equality of the independence equation is a collection of weighted averages of the $\{x_i\}$ and $\{y_i\}$. As a result, we can assume WLOG that $x_0+x_1+x_2+x_3=y_0+y_1+y_2+y_3=0$ (while in turn losing the probabilistic meaning behing these variables). This simplifies the above ratio to $-1$. Similarly, note that
\[\frac{P\left(\overline{X}=\overline{1},\overline{Y}=\overline{0},\overline{Z}=\overline{0}\right)}{P\left(\overline{X}=\overline{1},\overline{Y}=\overline{1},\overline{Z}=\overline{0}\right)}=\frac{(4-x_0-x_1-x_2-x_3)(y_0+y_1+y_2+y_3)-(x_2-x_3)(y_2-y_3)}{(4-x_0-x_1-x_2-x_3)(4-y_0-y_1-y_2-y_3)+(x_2-x_3)(y_2-y_3)}.\]
Using the above simplifications, the ratio reduces to $\frac{-\zeta}{16+\zeta}$, where $\zeta:=(x_2-x_3)(y_2-y_3)$. In order for $\overline{X}\perp\!\!\!\perp \overline{Y}|\overline{Z}=\overline{0}$, we must have these two fractions be equal. However, the equality $-1=\frac{-\zeta}{16+\zeta}$ fails to hold for any $\{x_i\}$ and $\{y_i\}$. As a result, some isolated perturbations must be done to some of the variables. One possibility is to take $a_0=\frac{1}2$ and alter it slightly by a small value. For some $\epsilon\approx 0$, the new values of the two ratios would be
\[\frac{P\left(\overline{X}=\overline{0},\overline{Y}=\overline{0},\overline{Z}=\overline{0}\right)}{P\left(\overline{X}=\overline{0},\overline{Y}=\overline{1},\overline{Z}=\overline{0}\right)}=\frac{\zeta+\epsilon x_0y_2}{-\zeta+\epsilon x_0(1-y_2)},\]
\[\frac{P\left(\overline{X}=\overline{1},\overline{Y}=\overline{0},\overline{Z}=\overline{0}\right)}{P\left(\overline{X}=\overline{1},\overline{Y}=\overline{1},\overline{Z}=\overline{0}\right)}=\frac{-\zeta+\epsilon (1-x_0)y_2}{16+\zeta+\epsilon (1-x_0)(1-y_2)}.\]
In order for these two fractions to be equal, we must have
\[(16+\zeta+\epsilon (1-x_0)(1-y_2))(\zeta+\epsilon x_0y_2)=(-\zeta+\epsilon x_0(1-y_2))(-\zeta+\epsilon (1-x_0)y_2)\implies\]
\[16\zeta+16\epsilon x_0y_2+\epsilon\zeta(x_0y_2+(1-x_0)(1-y_2))=-\zeta\epsilon(x_0(1-y_2)+(1-x_0)y_2)\implies\]
\[16\zeta+16\epsilon x_0y_2+\epsilon\zeta=0.\]
If $\zeta$ is sufficiently small, then an $\epsilon$ can be chosen to satisfy the given equation, giving us the following theorem:
\begin{theorem}\label{zetasmall}
    For $\overline{X}$ and $\overline{Y}$ with $\zeta:=(x_2-x_3)(y_2-y_3)$ sufficiently small, there exists a binarization $\overline{Z}$ such that $(\overline{X}\perp\!\!\!\perp \overline{Y})|\overline{Z}$.
\end{theorem}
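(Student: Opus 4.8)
\emph{Proof proposal.} The plan is to realize Eve's channel as a small perturbation of the \emph{uniform} channel, in which $\alpha$, $\beta$, and all of $a_0,\dots,h_0,a_1,\dots,h_1$ in Tables~\ref{Fig1}--\ref{Fig2} equal $\tfrac12$. The first step is to reduce conditional independence to two scalar equations: since $X\perp\!\!\!\perp Y\mid Z=z$ for every $z\notin\{0,1\}$ and channels of the form in Tables~\ref{Fig1}--\ref{Fig2} route each such $z$ only to $\overline z$, $\overline 0$, or $\overline 1$, the relation $(\overline X\perp\!\!\!\perp\overline Y)\mid\overline Z$ can fail only given $\overline Z=\overline 0$ and given $\overline Z=\overline 1$. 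By the $2\times2$-determinant characterization of independence of two binary variables, each of these is a single equation, bilinear in Alice's parameters $(x_i)$ and Bob's parameters $(y_j)$; after the affine normalization $\sum_i x_i=\sum_j y_j=0$ used in the excerpt, each equation has value $16\zeta$ at the uniform channel (in that normalization, as one reads off from the excerpt's two ratios), so both already hold when $\zeta=0$.

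The second step disposes of the degenerate binarizations: if all $x_i$ are equal then $\overline X$ is independent of everything and any $\overline Z$ works, and symmetrically for $\overline Y$; and $\zeta=0$ is handled by the uniform channel. So assume $\overline X,\overline Y$ nontrivial and $0<|\zeta|$ small. The third step is the perturbation itself, as in the excerpt. In the normalized coordinates the $x_i$ sum to $0$ and are not all zero, so at least two of them are a fixed fraction of $\max_i|x_i|$; the same holds for the $y_j$, so some product $x_iy_j$ with $(i,j)\notin\{(2,3),(3,2)\}$ is comparable to $\max_i|x_i|\cdot\max_j|y_j|$. I would perturb the single transition probability whose coefficient in the $\overline Z=\overline 0$ equation is (a structural constant times) that product — one of $a_0,\dots,h_0$ if it is an off-block product (i.e.\ $(i,j)\in\{0,1\}\times\{2,3\}\cup\{2,3\}\times\{0,1\}$), exactly mimicking the excerpt's equation $16\zeta+16\epsilon x_0y_2+\epsilon\zeta=0$, or $\alpha$ or $\beta$ if it is on-block — and solve for the perturbation. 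I would then repeat this, independently, for the $\overline Z=\overline 1$ equation using the transitions $a_1,\dots,h_1$, which feed $\overline 1$ from $Z$-values other than $0,1$ and hence do not interact with the first perturbation; finally I would check that the perturbed matrix is stochastic by absorbing each perturbation into the corresponding self-transition entry, which has slack $\tfrac12$ at the uniform channel, flipping the sign of a perturbation if a column would otherwise overfill.

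The main obstacle is quantitative and lies in the third step: one must show the perturbation can always be taken small enough to keep all transition probabilities in $[0,1]$. This is immediate when $|\zeta|$ is small relative to the best available coefficient $x_iy_j$, but needs care when that coefficient is itself small — for instance for near-trivial binarizations, where $\zeta$ and every coefficient are of the same small order, so the naive single-parameter perturbation is $O(1)$ rather than small. Handling these cases seems to require either combining several perturbations (e.g.\ solving a $2\times2$ linear system in perturbations of $\alpha$ and $\beta$ and proving its matrix is nonsingular uniformly) or a continuity argument degenerating to the trivial-binarization case; phrasing the whole argument as an implicit-function-theorem statement for the map sending Eve's channel to the pair of independence defects is attractive, but it only shifts the difficulty onto a uniform lower bound for the Jacobian off the degenerate strata. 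That uniform non-degeneracy is precisely the ingredient not supplied by the single-equation computation in the excerpt.
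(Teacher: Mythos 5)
Your proposal follows the paper's own derivation essentially verbatim: the paper also starts from the uniform channel ($\alpha=\beta=\tfrac12$ and every $a_\bullet,\dots,h_\bullet=\tfrac12$), normalizes so that $\sum_i x_i=\sum_j y_j=0$ to collapse the $\overline Z=\overline 0$ condition to the scalar equation $16\zeta+16\epsilon x_0 y_2+\epsilon\zeta=0$, perturbs the single entry $a_0$, and then concludes ``if $\zeta$ is sufficiently small, then an $\epsilon$ can be chosen.'' So this is not a different route, and your first two steps and the core perturbation calculation match the paper exactly.

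The obstacle you flag in your final paragraph is real, and it is not resolved by the paper's treatment either. Solving the paper's equation gives $\epsilon=-16\zeta/(16x_0y_2+\zeta)$, which keeps $a_0=\tfrac12+\epsilon$ inside $[0,1]$ only when $|\zeta|$ is small \emph{relative to} $|x_0y_2|$; for near-trivial $\overline X$ both $\zeta$ and $x_0y_2$ vanish at the same order, and then $\epsilon$ is $O(1)$ with an uncontrolled constant. The paper does not address that regime, nor does it explicitly work out the $\overline Z=\overline 1$ equation (it restricts to ``we focus on the case $\overline Z=\overline 0$''), nor does it verify that the perturbed channel stays stochastic — all three points you raise. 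In short, you have reproduced the paper's computation and, correctly, the informality in it: the quantitative non-degeneracy (your uniform lower bound on the Jacobian of the defect map, or equivalently a lower bound on $|x_0y_2|$ in terms of $|\zeta|$) that would make ``sufficiently small'' precise is not supplied by the paper's text any more than by your sketch. The missing ingredient you name is the right one; the paper's own argument should be read at the same heuristic level as yours.
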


In future work we hope to generalize this perturbation technique to solve the problem for more classes of channels $(\overline{X},\overline{Y})$.

\section{Conclusion}

In this paper, we have shown a new relation between two well-known information-theoretic quantities: the intrinsic information and the reduced intrinsic information. Namely, for a given $P_{XYZ}$, when the reduced intrinsic information of this distribution is 0, then so is the intrinsic information. This relation has many important ramifications for significant conjectures in information theory. For example, out of the two long-standing conjectures of the secret-key rate being equal to the reduced intrinsic information and the conjecture of bound secrecy\cite{RenWol03}, at least one of them must be incorrect. Another implication is that the reduced intrinsic information cannot be used to prove that a distribution is bound secret. Future work in this direction would be to develop an information-theoretic quantity which has the property that it is not necessarily equal to 0 if the intrinsic information is equal to 0, and use this property to demonstrate that a particular distribution is bound secret.

We have also made progress on a possible approach for showing that a bound secret distribution does exist, using the idea of binarization of random variables \cite{GiReWo02}. In particular, we have reduced bound secrecy to a problem that does not require the use of information-theoretic quantities to formulate, instead using only basic ideas from probability. We have made progress on proving this statement for the candidate distribution introduced in \cite{GiReWo02}, by creating an explicit construction for an information-erasing binarization. The construction makes generalizing the information-erasing binarization much easier compared to the previous non-constructive results.

Furthermore, we have also made progress on proving bound secrecy for a family of distributions introduced in \cite{RenWol03}. In particular, we show that binarizing $Y$ alone is not sufficient to create independence between Alice and Bob given Eve, suggesting the underlying difference between proving bound secrecy for this distribution and the candidate distribution introduced in \cite{GiReWo02}. Additionally, we provide evidence that only Z-shaped channels need to be considered when binarizing. We also provide additional promising approaches for proving bound secrecy for this family of distributions, such as considering a particular class of weighted average target values and the row-column-type transformations they induce, and perturbing a single variable in Eve maps to solve bound secrecy in the $N=1$ case for a particular class of binarizations.

\section{Acknowledgements}
We would like to thank the MIT PRIMES-USA program for the opportunity to conduct this research. We would also like to thank Peter Shor for suggesting this problem to us. We also acknowledge Stefan Wolf, Matthias Christandl, and Renato Renner for their helpful answers to our questions regarding their papers.

\bibliography{citations}

\begin{thebibliography}{10}

\bibitem{blahut}
R.~E. Blahut.
\newblock {\em Principles and Practice of Information Theory}.
\newblock Addison-Wesley Longman Publishing Co., Inc., USA, 1987.

\bibitem{ChReWo03}
M.~Christandl, R.~Renner, and S.~Wolf.
\newblock A property of the intrinsic mutual information.
\newblock In {\em IEEE international symposium on information theory}, pages
  258--258, 2003.

\bibitem{mikeike}
I.~Chuang and M.~Nielsen.
\newblock {\em Quantum computation and quantum information}.
\newblock Cambridge University Press, Cambridge, 2000.

\bibitem{binarizedSKR}
M.~Gander and U.~Maurer.
\newblock On the secret-key rate of binary random variables.
\newblock In {\em Proceedings of 1994 IEEE International Symposium on
  Information Theory}, pages 351--, 1994.

\bibitem{grw00}
N.~Gisin, R.~Renner, and S.~Wolf.
\newblock Bound information: The classical analog to bound quantum entanglemen.
\newblock In {\em European Congress of Mathematics}, pages 439--447. Springer,
  2001.

\bibitem{GiReWo02}
N.~Gisin, R.~Renner, and S.~Wolf.
\newblock Linking classical and quantum key agreement: Is there a classical
  analog to bound entanglement?
\newblock {\em Algorithmica}, 34(4):389--412, 2002.

\bibitem{Horodecki_1998}
M.~Horodecki, P.~Horodecki, and R.~Horodecki.
\newblock Mixed-state entanglement and distillation: Is there a
  {\textquotedblleft}bound{\textquotedblright} entanglement in nature?
\newblock {\em Physical Review Letters}, 80(24):5239--5242, jun 1998.

\bibitem{distillHorodecki}
P.~Horodecki and R.~Horodecki.
\newblock Distillation and bound entanglement.
\newblock {\em Quantum Info. Comput.}, 1(1):45–75, jan 2001.

\bibitem{KhatriLutkenhaus}
S.~Khatri and N.~L{\"u}tkenhaus.
\newblock Numerical evidence for bound secrecy from two-way postprocessing in
  quantum key distribution.
\newblock {\em Physical Review A}, 95(4):042320, 2017.

\bibitem{kl-divergence}
S.~Kullback and R.~A. Leibler.
\newblock {On Information and Sufficiency}.
\newblock {\em The Annals of Mathematical Statistics}, 22(1):79 -- 86, 1951.

\bibitem{mackay2003information}
D.~J. MacKay.
\newblock {\em Information theory, inference and learning algorithms}.
\newblock Cambridge university press, 2003.

\bibitem{oneTimePad}
C.~Matt and U.~Maurer.
\newblock The one-time pad revisited.
\newblock In {\em 2013 IEEE International Symposium on Information Theory},
  pages 2706--2710, 2013.

\bibitem{Maurer93}
U.~Maurer.
\newblock The role of information theory in cryptography.
\newblock In P.~Farrell, editor, {\em Cryptography and Coding~'93}, pages
  49--71. The Institute of Mathematics and its Applications, Southend-on-Sea,
  England, 12 1993.

\bibitem{maurer}
U.~M. Maurer.
\newblock Secret key agreement by public discussion from common information.
\newblock {\em IEEE transactions on information theory}, 39(3):733--742, 1993.

\bibitem{maurerwolf}
U.~M. Maurer and S.~Wolf.
\newblock Unconditionally secure key agreement and the intrinsic conditional
  information.
\newblock {\em IEEE Transactions on Information Theory}, 45(2):499--514, 1999.

\bibitem{ReSkWo03}
R.~Renner, J.~Skripsky, and S.~Wolf.
\newblock A new measure for conditional mutual information and its properties.
\newblock In {\em IEEE International Symposium on Information Theory}, pages
  259--259, 2003.

\bibitem{RenWol03}
R.~Renner and S.~Wolf.
\newblock New bounds in secret-key agreement: The gap between formation and
  secrecy extraction.
\newblock In {\em International Conference on the Theory and Applications of
  Cryptographic Techniques}, pages 562--577. Springer, 2003.

\bibitem{rsaCreation}
R.~L. Rivest, A.~Shamir, and L.~Adleman.
\newblock A method for obtaining digital signatures and public-key
  cryptosystems.
\newblock {\em Commun. ACM}, 21(2):120–126, feb 1978.

\bibitem{shannon2}
C.~E. Shannon.
\newblock A mathematical theory of communication.
\newblock {\em The Bell System Technical Journal}, 27(3):379--423, 1948.

\bibitem{shannon}
C.~E. Shannon.
\newblock Communication theory of secrecy systems.
\newblock {\em The Bell system technical journal}, 28(4):656--715, 1949.

\bibitem{xorCipher}
G.~S. Vernam.
\newblock Cipher printing telegraph systems for secret wire and radio
  telegraphic communications.
\newblock {\em Transactions of the American Institute of Electrical Engineers},
  XLV:295--301, 1926.

\end{thebibliography}
\bibliographystyle{abbrv}

\end{document}